\title{Optimal Unateness Testers for Real-Valued Functions:\\ Adaptivity Helps\footnote{Preliminary (much weaker) versions of this paper were posted as~\cite{CS16unate} and~\cite{BMPR16}.}}
\author{Roksana Baleshzar
\footnote{Computer Science and Engineering, Pennsylvania State University. Email: {\sf rxb5410@cse.psu.edu, ramesh@psu.edu, sofya@cse.psu.edu}. The work of these authors was partially supported by NSF award CCF-1422975. }
\and 
Deeparnab Chakrabarty
\footnote{Microsoft Research, Bangalore. Email: {\sf deeparnab@gmail.com}.} 
\and 
Ramesh Krishnan S. Pallavoor
\footnotemark[2] 
\and 
Sofya Raskhodnikova
\footnotemark[2] 
\and 
C. Seshadhri
\footnote{Computer Science, University of California, Santa Cruz. Email: {\sf sesh@ucsc.edu.}}
}
\date{}
\newtheorem{theorem}{Theorem}[section]
\newtheorem{definition}[theorem]{Definition}
\newtheorem{lemma}[theorem]{Lemma}
\newtheorem{claim}[theorem]{Claim}
\newcommand{\eps}{\varepsilon}
\def\R{{\mathbb R}}
\def\bb{\mathbf{b}}
\newcommand{\ord}[2][th]{\ensuremath{{#2}^{\mathrm{#1}}}}
\newcommand{\coord}[2]{t^{#1}_{#2}}
\newcommand{\capt}{\mathtt{cap}}
\def\Yes{\mathbf{Yes}}
\def\No{\mathbf {No}}
\def\dec{\mathsf{val}}
\def\sgn{\mathsf{sgn}}
\def\supp{\mathsf{supp}}
\def\dd{{d'}}
\def\totcube{m}
\def\dir{\mathsf{dir}}
\newcommand{\dist}{\mathsf{dist}}
\newcommand{\Exp}{\EX}
\newcommand{\cei}[1]{\lceil#1\rceil}
\newcommand{\EX}{\hbox{\bf E}}
\newcommand{\cA}{{\cal A}}
\newcommand{\cC}{{\cal C}}
\newcommand{\cG}{\mathcal{G}}
\newcommand{\cP}{\mathcal{P}}
\newcommand{\cT}{{\cal T}}
\newcommand{\N}{\mathbb N}
\newcommand{\calG}{{\cal G}}
\newcommand{\calA}{{\cal A}}
\newcommand{\calC}{{\cal C}}
\newcommand{\calE}{{\cal E}}
\newcommand{\bG}{\boldsymbol{G}}
\newcommand{\bS}{\boldsymbol{S}}
\newcommand{\bT}{\boldsymbol{T}}
\newcommand{\Sec}[1]{\hyperref[sec:#1]{Section~\ref*{sec:#1}}} 
\newcommand{\Eqn}[1]{\hyperref[eq:#1]{(\ref*{eq:#1})}} 
\newcommand{\Fig}[1]{\hyperref[fig:#1]{Fig.\,\ref*{fig:#1}}} 
\newcommand{\Tab}[1]{\hyperref[tab:#1]{Tab.\,\ref*{tab:#1}}} 
\newcommand{\Thm}[1]{\hyperref[thm:#1]{Theorem~\ref*{thm:#1}}} 
\newcommand{\Fact}[1]{\hyperref[fact:#1]{Fact\,\ref*{fact:#1}}} 
\newcommand{\Lem}[1]{\hyperref[lem:#1]{Lemma~\ref*{lem:#1}}} 
\newcommand{\Prop}[1]{\hyperref[prop:#1]{Prop.~\ref*{prop:#1}}} 
\newcommand{\Cor}[1]{\hyperref[cor:#1]{Corollary~\ref*{cor:#1}}} 
\newcommand{\Conj}[1]{\hyperref[conj:#1]{Conjecture~\ref*{conj:#1}}} 
\newcommand{\Def}[1]{\hyperref[def:#1]{Definition~\ref*{def:#1}}} 
\newcommand{\Alg}[1]{\hyperref[alg:#1]{Algorithm~\ref*{alg:#1}}} 
\newcommand{\Ex}[1]{\hyperref[ex:#1]{Ex.~\ref*{ex:#1}}} 
\newcommand{\Clm}[1]{\hyperref[clm:#1]{Claim~\ref*{clm:#1}}} 
\newcommand{\Step}[1]{\hyperref[step:#1]{Step~\ref*{step:#1}}} 
\begin{document}

\maketitle
\begin{abstract}
We study the problem of testing unateness of functions $f:\{0,1\}^d\to\R.$
We give a $O(\frac d \eps \cdot \log\frac d\eps)$-query nonadaptive tester and a
$O(\frac d\eps)$-query adaptive tester and show that both testers are optimal for a fixed distance parameter $\eps$. Previously known unateness testers worked only for Boolean functions, and their query complexity had worse dependence on the dimension both for the adaptive and the nonadaptive case. Moreover, no lower bounds for testing unateness were known\footnote{Concurrent work by Chen et al.~\cite{CWX17} proves an $\Omega(d/\log^2 d)$ lower bound on the nonadaptive query complexity of testing unateness of Boolean functions. Our stronger lower bounds are for real valued functions.}.
We also generalize our results to obtain optimal unateness testers for functions $f:[n]^d\to\R$.

Our results establish that adaptivity helps with testing unateness of real-valued functions on domains of the form $\{0,1\}^d$ and, more generally, $[n]^d$.
This  stands in contrast to the situation for monotonicity testing where there is no adaptivity gap for functions $f:[n]^d\to\R$.
\end{abstract}

\section{Introduction}
We study the problem of testing whether a given real-valued function $f$ on domain $[n]^d$, where $n,d\in\N,$ is unate.
A function  
$f:[n]^d \to \R$
is {\em unate} if for every coordinate $i\in [d]$, the function is either nonincreasing in the $\ord{i}$ coordinate or nondecreasing in the $\ord{i}$ coordinate.
Unate functions naturally generalize monotone functions, which are nondecreasing in all coordinates, and $\bb$-monotone functions, which have a particular direction in each coordinate (either nondecreasing or nondecreasing), specified by a bit-vector $\bb\in \{0,1\}^d$. More precisely,
 a function is $\bb$-monotone if it is nondecreasing in coordinates $i$ with $\bb_i = 0$ and nonincreasing in the other coordinates. Observe that a function $f$ is unate iff there exists some $\bb\in \{0,1\}^d$ for which $f$ is $\bb$-monotone.

A {\em tester}~\cite{RS96,GGR98} for a property $\cal P$ of a function $f$ is an algorithm that gets a distance parameter $\eps\in(0,1)$ and query access to $f$. It has to accept with probability at least $2/3$ if $f$ has property $\cal P$ and reject with probability at least $2/3$ if $f$ is $\eps$-far (in Hamming distance) from $\cal P$. We say that $f$ is $\eps$-far from $\cal P$ if at least an $\eps$ fraction of values of $f$ must be modified to make $f$ satisfy $\cal P$. A tester has {\em one-sided error} if it always accepts a function satisfying $\cal P$, and has {\em  two-sided
error} otherwise.
A {\em nonadaptive} tester makes all its queries at once, while an {\em adaptive} tester can make queries after seeing answers to the previous ones.

Testing of various properties of functions, including
monotonicity
(see, e.g., \cite{GGLRS00,DGLRRS99,EKKRV00,LR01,FLNRRS02,Fis04,HK07,AC06,HK08,BRW05,BBM12,BGJRW12,BCGM12,CS13, BlaRY14,BerRY14,CS14,CS16,CDJS15,CST14,CDST15,KMS15,BB15,BB16,DRTV16,PRV17} and recent surveys~\cite{Ras16,C16a}),
the Lipschitz property \cite{JR13,CS13,BlaRY14}, bounded-derivative properties~\cite{CDJS15}, and unateness~\cite{GGLRS00,KS16}, has been studied extensively over the past two decades. Even though unateness testing was initially discussed in the seminal paper
by Goldreich et al.~\cite{GGLRS00} that gave first testers for properties of functions,  relatively little is known about testing this property. All previous work on unateness testing focused on the special case of Boolean functions on domain $\{0,1\}^d$. The domain $\{0,1\}^d$ is called the {\em hypercube} and the more general domain $[n]^d$ is called the {\em hypergrid}.
Goldreich et al.~\cite{GGLRS00}
provided a $O(\frac{d^{3/2}}{\eps})$-query nonadaptive tester for unateness of Boolean functions on the hypercube.
Recently, Khot and Shinkar~\cite{KS16} improved the query complexity
to $O(\frac{d\log d}\eps)$, albeit with an adaptive tester.

In this paper, we improve upon both these works, and our results hold for a more general class of functions.
Specifically, we show that unateness of real-valued functions on hypercubes can be tested nonadaptively with $O(\frac d \eps \log \frac d \eps)$ queries and adaptively with $O(\frac d \eps)$ queries. More generally, we describe a $O(\frac d \eps \cdot(\log\frac d\eps + \log n))$-query nonadaptive tester and a
$O(\frac{d\log n}\eps)$-query adaptive tester of unateness of real-valued functions over hypergrids.

In contrast to the state of knowledge for unateness testing, the complexity of testing monotonicity of real-valued functions over the hypercube and the hypergrid has been resolved. For constant distance parameter $\eps$, it is known to be $\Theta(d\log n)$. Moreover, this bound holds for all {\em bounded-derivative} properties~\cite{CDJS15}, a large class that includes $\bb$-monotonicity and some properties quite different from monotonicity, such as the Lipschitz property. Amazingly, the upper bound for all these properties is achieved by the same simple and, in particular, nonadaptive, tester.
Even though proving lower bounds for adaptive testers has been challenging in general, a line of work, starting from Fischer~\cite{Fis04} and including \cite{BBM12,CS14,CDJS15}, has established that adaptivity does not help for this large class of properties. Since unateness is so closely related, it is natural to ask whether the same is true for testing unateness.

We answer this in the negative: we prove that any nonadaptive tester of real valued functions over the hypercube (for some constant distance parameter) must make $\Omega(d\log d)$ queries.
More generally, it needs $\Omega(d(\log d+\log n))$ queries for the hypergrid domain. These lower bounds complement our algorithms, completing the picture for unateness testing of real-valued functions.
From a property testing standpoint, our results establish that unateness is different from
monotonicity and, more generally, any derivative-bounded
property.

\subsection{Formal Statements and Technical Overview}
Our testers are summarized in the following theorem, stated for functions over the hypergrid domains. (Recall that the hypercube is a special case of the hypergrid with $n=2$.)

\begin{theorem}\label{thm:main-hg}
	Consider functions $f:[n]^d \to \R$ and a distance parameter $\eps\in (0,1/2)$.
	\begin{compactenum}
		\item\label{item:nonadaptive} There is a nonadaptive unateness tester that makes $O(\frac d \eps(\log\frac d\eps + \log n))$ queries\footnote{\scriptsize For many properties, when the domain is extended from the hypercube to the hypergrid, testers incur an extra multiplicative factor of $\log n$ in the query complexity. This is the case for our adaptive tester.
			 However, note that the complexity of nonadaptive unateness testing (for constant $\eps$) is $\Theta(d(\log d + \log n))$ rather than $\Theta(d\log d\log n).$}.
		
		\item\label{item:adaptive} There is an adaptive unateness tester  that makes $O(\frac{d\log n}\eps)$ queries.
	\end{compactenum}
	Both testers have one-sided error.
\end{theorem}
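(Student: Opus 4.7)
The plan is to (i) prove a structural lemma bounding distance-to-unate by a sum of per-coordinate ``conflict'' quantities, (ii) design an adaptive conflict-seeking tester costing $O(\log n)$ queries per trial, and (iii) pay an extra $\log(d/\eps)$ factor to make the search nonadaptive. Throughout, the hypergrid case is reduced to the hypercube via the standard ``augmented edge'' trick (sampling axis-aligned pairs at scales $1,2,\dots,n/2$), which contributes the $\log n$ factor in both bounds; I focus the plan on the hypercube core. For $f:\{0,1\}^d\to\R$, let $\alpha_i$ and $\beta_i$ denote the fractions of axis-$i$ edges on which $f$ is strictly decreasing and strictly increasing, respectively. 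The structural lemma to prove is $\dist(f,\text{unate}) \le \sum_i \min(\alpha_i,\beta_i)$, so that if $f$ is $\eps$-far from unate then $\sum_i \min(\alpha_i,\beta_i)\ge \eps$. The argument is a sort-and-switch: choose $\bb_i\in\{0,1\}$ so that the ``minority'' violation mass in coordinate $i$ equals $\min(\alpha_i,\beta_i)$, and apply the sort operator $M_i^{\bb_i}$ that makes every axis-$i$ line $\bb_i$-monotone. Since $M_i^{\bb_i}$ preserves $\bb_j$-monotonicity in every other coordinate $j$ (the real-valued analogue of the classical switching lemma for monotonicity), applying all $d$ sorts in any order yields a $\bb$-monotone function, and the total number of points changed is at most $\sum_i \min(\alpha_i,\beta_i)\cdot 2^d$.

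The adaptive tester then runs $T=O(d/\eps)$ trials: each trial picks a uniformly random coordinate $i$ and a uniformly random axis-$i$ line, binary-searches the line with $O(\log n)$ queries to locate a non-constant edge in direction $i$ if any, and records its sign; the tester rejects as soon as some coordinate has both signs logged. The nonadaptive tester draws $T'=O((d/\eps)(\log(d/\eps)+\log n))$ augmented edges uniformly at random and applies the same rejection rule; the extra $\log(d/\eps)$ factor comes from a coupon-collector-style union bound over coordinates that is unavoidable without adaptivity, since we can neither concentrate samples on the ``right'' coordinate nor terminate line searches early. By the structural lemma, the expected number of coordinates detected with both signs in $T$ adaptive trials is $\Omega(T\eps/d)=\Omega(1)$, which I then aim to boost to constant probability by a second-moment argument.

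The main obstacle is turning this expected-detection bound for the adaptive tester into a constant success probability without losing a $\sqrt{d}$ factor: a naive Chebyshev argument recovers only an $O(d^{3/2}/\eps)$ bound matching the earlier work of Goldreich et al., and closing the gap to $O(d/\eps)$ requires either a Janson-type inequality exploiting the near-independence of per-coordinate trial outcomes, or a careful dichotomy between ``heavy'' coordinates (large $\min(\alpha_i,\beta_i)$) and ``light'' ones that collectively dominate and must be handled by a direct pair-of-trials analysis. A secondary technical task is verifying that the sort-and-switch argument extends to the hypergrid uniformly across all augmented-edge scales, which requires a per-line treatment aligned with the edge-sampling distribution.
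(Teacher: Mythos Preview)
Your structural lemma is correct and matches the paper's use of dimension reduction. But both of your testers, as designed, are stuck at $\Theta(d^{3/2}/\eps)$ queries, and the obstacle you identify cannot be removed by Janson or a heavy/light dichotomy because it is algorithmic, not analytic. Take the hypercube instance with $\alpha_i=\beta_i=\eps/d$ for every $i$. In your adaptive tester each trial samples one uniformly random edge (the binary search is vacuous when $n=2$); that edge is non-constant with probability $2\eps/d$, so with $T=Cd/\eps$ trials you see only $\Theta(C)$ non-constant edges, spread uniformly over $d$ dimensions, and by a birthday computation the probability that any dimension receives two of them is $\Theta(C^2/d)$. Hence your claim that the expected number of coordinates logged with both signs is $\Omega(T\eps/d)$ is false here---it is $\Theta(C^2/d)$, forcing $T=\Omega(d^{3/2}/\eps)$. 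The same birthday trap kills your nonadaptive uniform-edge sampler regardless of the extra $\log(d/\eps)$ budget.

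The paper places the adaptivity elsewhere. Its adaptive tester loops over \emph{all} $d$ dimensions, samples one $i$-edge each, and whenever that edge is non-constant it \emph{keeps sampling $i$-edges until a second non-constant one appears}. The expected drill-down cost $1/(\alpha_i+\beta_i)$ exactly cancels the $(\alpha_i+\beta_i)$ probability of triggering it, so a full pass costs $O(d)$ queries in expectation; the per-dimension detection probability is $2\alpha_i\beta_i/(\alpha_i+\beta_i)\ge\mu_i$, giving failure probability at most $\prod_i(1-\mu_i)\le e^{-\eps/4}$ per pass, and $O(1/\eps)$ passes suffice. The nonadaptive tester replaces the drill-down by a work-investment schedule: for each scale $r$ it repeatedly picks one random dimension and samples $\Theta(2^r)$ edges \emph{all in that dimension}; concentrating many edges on a single chosen coordinate is precisely what breaks the birthday barrier. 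Finally, getting the additive $\log(d/\eps)+\log n$ on hypergrids (rather than the product) requires a separate dichotomy for the tree tester on a line (Lemma~\ref{lem:treetester}), which your augmented-edge reduction does not supply.
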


\noindent
Our main technical contribution is the proof that the extra $\Omega(\log d)$ is needed for nonadaptive testers.
This result demonstrates a gap between adaptive and nonadaptive unateness testing.
\begin{theorem}\label{thm:non-adap-lb-1}
	Any nonadaptive unateness tester (even with two-sided error) for real-valued functions $f:\{0,1\}^d \to \R$ with distance parameter $\eps = 1/8$  must make $\Omega(d\log d)$ queries.
\end{theorem}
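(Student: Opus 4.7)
The plan is to apply Yao's minimax principle: construct distributions $\cD_\Yes$ supported on unate functions and $\cD_\No$ supported on functions that are $\tfrac{1}{8}$-far from unate, and show that for any fixed nonadaptive query set $Q$ with $|Q|=q=o(d\log d)$, the total variation distance between the induced distributions on the answer vectors $(f(x))_{x\in Q}$ under $\cD_\Yes$ and $\cD_\No$ is $o(1)$. The extra logarithmic factor over the trivial $\Omega(d)$ bound should come from exploiting the real-valued range: a single violation of unateness can be planted as an arbitrarily small numerical perturbation sitting inside a random, $\Theta(1/d)$-density pocket of the cube, and is essentially invisible to any single point query -- only a pair of queries of the right shape reveals it.

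The construction I would attempt writes $f = f_0 + \sum_{i\in[d]} g_i$, where $f_0(x) = M\sum_i x_i$ is a dominant monotone background (with $M$ a large constant) and each $g_i$ is a per-coordinate gadget. For each $i\in[d]$, independently draw an auxiliary coordinate $j_i$ uniformly from $[d]\setminus\{i\}$ together with a random ``pocket'' $R_i\subseteq\{0,1\}^d$ of fractional density $\Theta(1/d)$ (for instance, the set on which $\log_2 d$ randomly chosen coordinates are $0$). Under $\cD_\Yes$ the gadget $g_i$ is unate in direction $i$ on $R_i$; under $\cD_\No$ the gadget instead forces $f$ to strictly increase in direction $i$ on $R_i\cap\{x_{j_i}=0\}$ and to strictly decrease in direction $i$ on $R_i\cap\{x_{j_i}=1\}$. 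The only way a nonadaptive query set $Q$ can distinguish the two distributions is to contain, for some $i$, a direction-$i$ edge inside $R_i\cap\{x_{j_i}=0\}$ \emph{and} another direction-$i$ edge inside $R_i\cap\{x_{j_i}=1\}$. A coupon-collector-style calculation over the independent choices $(R_i, j_i)$ -- each query is a useful endpoint for coordinate $i$ only with probability $\tilde O(1/d)$, and catching coordinate $i$ requires two useful queries straddling the hidden split -- would force $q=\Omega(d\log d)$ in order for any coordinate to be ``caught'' with constant probability.

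The main obstacle will be the farness analysis of $\cD_\No$: for \emph{every} direction vector $\bb\in\{0,1\}^d$ the sampled $f$ must be $\tfrac{1}{8}$-far from $\bb$-monotone. By K\"onig's theorem applied to the (bipartite) graph of $\bb$-violating edges on $\{0,1\}^d$, this reduces to exhibiting, for every $\bb$, a matching of $\bb$-violating edges of size at least $\tfrac{1}{8}\cdot 2^d$. For any fixed $\bb$, each coordinate $i$ has $\bb_i$ disagreeing with the planted direction on exactly one side of the split $x_{j_i}$, and within that side the gadget produces an $\Omega(|R_i|)=\Omega(2^d/d)$ matching of $\bb$-violating direction-$i$ edges; amalgamating these edge-disjoint per-coordinate matchings over the $d$ coordinates yields a matching of size $\Omega(2^d)$ as desired. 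Carefully calibrating the pocket density, the gadget magnitudes, and the background constant $M$ to simultaneously secure (a) query indistinguishability when $q=o(d\log d)$, (b) the $\tfrac{1}{8}$-farness for every $\bb$ under $\cD_\No$, and (c) the genuine unateness of $\cD_\Yes$-samples, is the delicate balancing act at the heart of the argument.
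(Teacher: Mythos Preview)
Your plan has a genuine gap: the construction as sketched can be distinguished with $O(d)$ nonadaptive queries, so it cannot yield $\Omega(d\log d)$. The problem is the claim that distinguishing requires, for some $i$, an $i$-edge in $R_i\cap\{x_{j_i}=0\}$ \emph{and} another in $R_i\cap\{x_{j_i}=1\}$. Your background $f_0(x)=M\sum_j x_j$ makes every $i$-edge outside $R_i$ strictly increasing; for a $\cD_\Yes$-sample to be unate, the gadget $g_i$ must therefore keep coordinate $i$ nondecreasing on $R_i$ as well (any decreasing $i$-edge inside $R_i$ would already contradict unateness against the increasing edges outside). Hence under $\cD_\Yes$ there are no decreasing $i$-edges anywhere, while under $\cD_\No$ a $\Theta(1/d)$-fraction of $i$-edges---those in $R_i\cap\{x_{j_i}=1\}$---are decreasing. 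A \emph{single} such edge distinguishes. The nonadaptive tester that samples $O(1)$ uniformly random $i$-edges for each $i\in[d]$ makes $O(d)$ queries and, under $\cD_\No$, hits a decreasing edge with probability $1-(1-\Theta(1/d))^{\Theta(d)}=\Omega(1)$ (the events are independent across $i$ because the pockets $R_i$ are). No coupon-collector phenomenon arises: the tester needs one coupon, not all of them. Separately, for the gadget to ``force $f$ to strictly decrease'' it must overpower the background term $M x_i$, so it is not an ``arbitrarily small perturbation''; without the reduction to comparison-based testers (which you do not invoke), even single-point queries may see these large gadget values and distinguish $\cD_\Yes$ from $\cD_\No$.

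The paper obtains the $\log d$ through a mechanism your sketch lacks. After reducing to comparison-based testers, the hard distribution randomizes over $K=\Theta(\log d)$ \emph{scales}: at scale $k$ a random set $R\subseteq[d]$ of size $2^k$ is chosen, the cube is partitioned into $d$ subcubes, and in each subcube $j$ exactly one coordinate $r_j\in R$ is ``active'' while all other coordinates in $R$ are \emph{constant} (not increasing) in that subcube. The $\Yes$ and $\No$ cases differ only in whether the sign of the active coordinate is consistent across subcubes or chosen independently per subcube; since in both cases an active coordinate is equally likely to be increasing or decreasing in any one subcube, a single informative edge reveals nothing. Distinguishing requires a \emph{collision}: two subcubes sharing the same active coordinate, both witnessed by the query set. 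The collision probability at a fixed scale can be $\Theta(|Q|/d)$, but averaged over the $K=\Theta(\log d)$ scales it drops to $O(|Q|/(d\log d))$. A single-scale construction with a monotone background, as in your proposal, has neither the constant edges that force a two-subcube collision nor the scale-averaging that produces the extra $\log d$, and consequently tops out at an $\Omega(d)$ bound.
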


\noindent The lower bound for adaptive testers is an easy adaptation of the monotonicity lower bound in~\cite{CS14}. 
We state this theorem for completeness and prove it in Appendix~\ref{sec:adap-lb}.
\begin{theorem}\label{thm:adap-lb}
Any unateness tester  for functions $f:[n]^d \to \R$ with distance parameter $\eps \in (0,1/4)$ must make $\Omega\left(\frac{d \log n}{\eps}-\frac{\log 1/\eps}{\eps}\right)$ queries.
\end{theorem}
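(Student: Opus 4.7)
The plan is to adapt the adaptive $\Omega(\frac{d\log n}{\eps})$ monotonicity-testing lower bound of Chakrabarty and Seshadhri~\cite{CS14}. Their argument applies Yao's minimax principle to a pair of distributions: a Yes-distribution $\cD_{\Yes}$ supported on monotone functions $[n]^d\to\R$ and a No-distribution $\cD_{\No}^{\mathrm{mon}}$ supported on functions that are $\eps$-far from monotone, with the guarantee that no deterministic adaptive algorithm making fewer than $\Omega(\frac{d\log n}{\eps}-\frac{\log(1/\eps)}{\eps})$ queries distinguishes them with constant advantage. Since every monotone function is $\vec{0}$-monotone and hence unate, $\cD_{\Yes}$ already serves as a valid Yes-distribution for unateness testing.

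The only modification required is on the No side: we need each drawn function to be $\eps$-far from $\bb$-monotone \emph{simultaneously} for every $\bb\in\{0,1\}^d$, rather than only for $\bb=\vec 0$. I would do this by randomizing the local directions of the planted violations. The CS14 no-instances place $\Omega(\eps n^d)$ monotonicity violations on disjoint, axis-parallel substructures of the hypergrid; independently flipping the sign of each substructure with probability $1/2$ yields a strengthened no-distribution $\cD_{\No}'$ with the following property. For any fixed target direction $\bb\in\{0,1\}^d$, an $\Omega(\eps)$-fraction of the violations still witness the function's distance to $\bb$-monotonicity in expectation, and by independence across the $\Omega(\eps n^d)\geq 2^{\Omega(d)}$ random signs, a Chernoff bound yields concentration strong enough to beat the $2^d$ union bound over all candidate directions. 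Hence a sample from $\cD_{\No}'$ is $\Omega(\eps)$-far from unate with probability at least $2/3$.

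The indistinguishability of $\cD_{\Yes}$ from $\cD_{\No}'$ follows from the original argument of~\cite{CS14}, because the added random signs are sealed inside each substructure and can only be observed by actually querying points inside it; the information-theoretic analysis of the adversary treats those queries exactly as in the monotonicity lower bound. I expect the concentration calculation above to be the only genuinely new step; the application of Yao's principle, the combinatorial indistinguishability bound at the heart of~\cite{CS14}, and the conversion to a randomized adaptive lower bound all import unchanged, which is why this adaptation is ``easy.'' The main obstacle, though minor, is to inspect the precise form of the CS14 gadgets and confirm that per-gadget sign randomization preserves both the $\Omega(\eps n^d)$ violation count and the CS14 adversary's view of query answers; once that is done, the statement follows.
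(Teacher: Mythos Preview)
Your high-level plan---reuse the CS14 hard distributions via Yao's principle and check that the No-side is far from unate---is the right one, but you are working much harder than necessary, and the specific modification you propose would not go through as written.

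The paper does \emph{not} modify the CS14 No-distribution at all. It simply observes that every function $g_{j,k}$ in that distribution is already $\eps$-far from unate. The point is that $g_{j,k}$ has, in the single hypergrid dimension $i=\lceil j/\ell\rceil$, both $\eps n^d$ disjoint decreasing $i$-pairs (those lying inside the block $S_k$) and $\eps n^d$ disjoint increasing $i$-pairs (those lying inside any other block $S_{k'}$). Whichever direction a candidate $\bb$ assigns to coordinate $i$, one of these two families witnesses $\eps$-farness from $\bb$-monotonicity; hence $g_{j,k}$ is $\eps$-far from every $\bb$-monotone function, i.e., $\eps$-far from unate. No sign-randomization, no Chernoff bound, no union bound over $2^d$ directions is needed; the indistinguishability argument from \cite{CS14} then carries over verbatim.

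Your proposed modification is also shaky on its own terms. Each CS14 No-function has essentially one gadget---a single block $S_k$ and a single coordinate $j$---not $\Omega(\eps n^d)$ independently flippable substructures; the $\Omega(\eps n^d)$ violating edges all live in that one block and that one coordinate and share function values, so there is no pool of independent coins to drive the concentration argument you sketch. Flipping edges individually would alter shared function values and break the CS14 adversary's control over query answers. So carrying out your plan would require redesigning the construction, which the paper shows is unnecessary.
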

Theorems~\ref{thm:non-adap-lb-1} and \ref{thm:adap-lb} directly imply that our nonadaptive tester is optimal for constant $\eps$, even for the hypergrid domain. The details appear in Appendix~\ref{sec:na-lb-hg}.

\subsubsection{Overview of Techniques}\label{sec:intro-tech}
We first consider the hypercube domain. For each $i\in[d],$ an {\em $i$-edge} of the hypercube is a pair $(x,y)$ of points in $\{0,1\}^d$, where $x_i=0,y_i=1$, and $x_j=y_j$ for all $j\in([d] \setminus \{i\})$. Given an input function $f:\{0,1\}^d\to\R$, we say an $i$-edge $(x,y)$ is {\em increasing} if $f(x)<f(y)$, {\em decreasing} if $f(x)>f(y),$ and {\em constant} if $f(x)=f(y)$.

Our nonadaptive unateness tester on the hypercube uses the work investment strategy from~\cite{BerRY14} (also refer to Section 8.2.4 of Goldreich's book~\cite{Go-book}) to ``guess'' a good dimension where to look for violations of unateness (specifically, both increasing and decreasing edges).  For all $i\in[d]$, let $\alpha_i$ be the fraction of the $i$-edges that are decreasing, $\beta_i$ be the fraction of the $i$-edges that are increasing, and $\mu_i = \min(\alpha_i,\beta_i)$. The dimension reduction theorem from~\cite{CDJS15} implies that if the input function is $\eps$-far from unate, then the average of $\mu_i$ over all dimensions is at least $\frac\eps{4d}$. If the tester knew which dimension had $\mu_i=\Omega(\eps/d)$, it could detect a violation with high probability by querying the endpoints of $O(1/\mu_i)=O(d/\eps)$ uniformly random edges.
However, the tester does not know which $\mu_i$ is large
and, intuitively, nonadaptively checks the following
$\log d$ different scenarios, one for each $k\in[\log d]$: exactly $2^k$ different
$\mu_i$'s are $\eps/2^k$, and all others are $0$. This leads to the query complexity of $O(\frac {d\log d}\eps).$

With adaptivity, this search  through $\log d$ different scenarios is not required.
A pair of queries in each dimensions detects influential coordinates (i.e., dimensions with many non-constant edges), and the algorithm focuses
on finding violations among those coordinates. This leads to the query complexity of $O(d/\eps)$, removing the $\log d$ factor.

It is relatively easy to extend (both adaptive and nonadaptive) testers from hypercubes to hypergrids by incurring an extra factor of $\log n$ in the query complexity. The role of $i$-edges is now played by {\em $i$-lines}. An $i$-line is a set of $n$ domain points that differ only on coordinate $i$. The domain $[n]$ is called a line. Monotonicity on the line (a.k.a. sortedness) can be tested with $O(\frac{\log n}\eps)$ queries, using, for example, the classical {\em tree tester} from \cite{EKKRV00}. Instead of sampling a random $i$-edge, we sample a random $i$-line $\ell$ and run the tree tester on the restriction $f_{|\ell}$ of function $f$ to the line $\ell$.
This is optimal for adaptive testers, but, interestingly, not for nonadaptive testers.
We show that for each function $f$ on the line that is $\eps$-far from unateness, one of the two scenarios happen: (1) the tree tester is likely to find a violation of unateness; (2) function $f$ is increasing (and also decreasing) on a constant fraction of pairs in $[n]$. This new angle on the classical tester allows us to replace the factor $(\log d)(\log n)$ with $\log d + \log n$ in the query complexity.
Thus, the nonadaptive complexity becomes $O(d(\log d + \log n))$, which we show is optimal.

{\bf The nonadaptive lower bound.} Our most significant finding is the $\log d$ gap in the query complexity between adaptive and nonadaptive testing of unateness.
By previous work~\cite{Fis04,CS14}, it suffices to prove lower bounds for {\em comparison-based} testers, i.e., testers that
can only perform comparisons of the function values at queried points, but cannot use the values themselves.
Our main technical contribution is the $\Omega(d\log d)$ lower
bound for nonadaptive comparison-based testers of unateness on hypercube domains.

Intuitively, we wish to construct $K=\Theta(\log d)$ families of functions where,
for each $k \in [K]$, functions in the $\ord{k}$ family have
$2^k$ dimensions $i$ with $\mu_i=\Theta(1/2^k)$, while $\mu_i=0$ for all other dimensions.
What makes the construction challenging is the existence of a \emph{single, universal} nonadaptive
$O(d)$-tester for all
$\bb$-monotonicity properties, proven in~\cite{CDJS15}. In other words, there is a single
distribution on  $O(d)$ queries that defines a nonadaptive property tester for
$\bb$-monotonicity, regardless of $\bb$.
Since unateness
is the union of all $\bb$-monotonicity properties,
our construction must be able to fool such algorithms.
Furthermore, nonadaptivity must be critical, since
we obtained a $O(d)$-query adaptive tester for unateness.

Another obstacle is that once a tester finds a non-constant edge in each dimension, the problem reduces to testing $\bb$-monotonicity for a vector $\bb$ determined by the directions (increasing or decreasing) of the non-constant edges. That is, intuitively, most edges in our construction must be constant. This is one of the main technical challenges. The previous lower bound constructions for monotonicity testing \cite{BBM12,CS14} crucially used the fact that all edges in the hard functions were non-constant.

We briefly describe how we overcome the problems mentioned above.
By Yao's minimax principle, it suffices to construct $\Yes$ and $\No$ distributions that
a deterministic nonadaptive tester cannot distinguish.
First, for some parameter $m$, we partition the hypercube into $\totcube$ subcubes based of the first $\log_2\totcube$ most significant coordinates.
Both distributions, $\Yes$ and $\No$, sample a uniform $k$ from $[K]$, where $K=\Theta(\log d)$, and a set $R\subseteq[d]$ of cardinality $2^k$.
Furthermore, each subcube $j\in [\totcube]$ selects an ``action dimension'' $r_j\in R$ uniformly at random. For both distributions, in any particular subcube $j$, the function value is completely determined by the coordinates {\em not in} $R$, and the random coordinate $r_j\in R$. Note that all the $i$-edges for $i\in (R\setminus \{r_j\})$ are constant.
Within the subcube, the function is a linear function
with exponentially increasing coefficients.
In the $\Yes$ distribution, any two cubes $j,j'$ with the same action dimension orient the edges in that dimension the same  way (both increasing or both decreasing), while in the $\No$ distribution each cube decides on the orientation independently.
The former correlation maintains unateness while the  latter independence creates distance to unateness.
We prove that to distinguish the distributions,
any comparison-based nonadaptive tester must find two distinct subcubes
with the same action dimension $r_j$ and, furthermore, make a specific
query (in both) that reveals the coefficient of $r_j$.
We show that, with $o(d\log d)$ queries, the probability of this event is negligible.

\section{Upper Bounds}\label{sec:upper-bounds}
In this section, we prove parts 1-2 of Theorem~\ref{thm:main-hg}, starting from the hypercube domain.

Recall the definition of $i$-edges and $i$-lines from Section~\ref{sec:intro-tech} and what it means for an edge to be increasing, decreasing, and constant.

The starting point for our algorithms is the dimension reduction theorem from~\cite{CDJS15}. It bounds the distance of $f:[n]^d \to \R$ to monotonicity in terms of average distances of restrictions of $f$ to one-dimensional functions.
\begin{theorem}[Dimension Reduction, Theorem 1.8 in \cite{CDJS15}] \label{thm:dimred} Fix a bit vector ${\bf b}\in\{0,1\}^d$ and a function $f:[n]^d \to \R$ which is $\eps$-far from ${\bf b}$-monotonicity.
	For all $i\in[d]$, let $\mu_i$ be the average distance of $f_{|\ell}$ to $\bb_i$-monotonicity over all $i$-lines $\ell$.
	Then, $$\sum_{i=1}^d \mu_i \geq \frac{\eps}{4}.$$
\end{theorem}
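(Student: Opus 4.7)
My plan is to use a coordinate-wise repair argument of the type pioneered by Dodis et al.~\cite{DGLRRS99} for monotonicity. For each $i \in [d]$, define an operator $R_i$ that, independently on each $i$-line $\ell$, replaces $f_{|\ell}$ with a nearest $\bb_i$-monotone function on $\ell$, using a canonical tie-breaking rule to be specified. By the definition of $\mu_i$, we have $\dist(f, R_i f) = \mu_i$. The goal is to apply $R_1, R_2, \dots, R_d$ sequentially to obtain a $\bb$-monotone function $f^*$, and bound $\dist(f, f^*)$ by a constant times $\sum_i \mu_i$ via the triangle inequality.

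The central technical ingredient is a \emph{preservation lemma}: if $g$ is $\bb_j$-monotone in coordinate $j$, then $R_i(g)$ remains $\bb_j$-monotone in coordinate $j$, for every $i \neq j$. Granted this, after applying all $d$ operators the resulting function $f^*$ is $\bb_k$-monotone in every coordinate $k$, hence $\bb$-monotone. Since $f$ is $\eps$-far from $\bb$-monotonicity, $\dist(f, f^*) \ge \eps$, and the triangle inequality gives $\sum_i \mu_i \ge \eps/C$ for some constant $C$. The factor of $4$ in the theorem statement reflects the blow-up that can occur because each $R_i$ acts on an already modified function, not on the original $f$: the cost paid at stage $i$ is bounded by a small constant times $\mu_i$ rather than exactly $\mu_i$, and a careful accounting yields $C = 4$.

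To prove the preservation lemma, I would use a pairwise-comparison argument between parallel $i$-lines $\ell, \ell'$ that are pointwise $\bb_j$-comparable under $g$ (a consequence of $\bb_j$-monotonicity of $g$ in coordinate $j$). Concretely, if $(a_k)_{k=1}^n$ and $(b_k)_{k=1}^n$ are two length-$n$ sequences with $a_k \preceq b_k$ pointwise in the $\bb_j$-order, I need their canonical $\bb_i$-monotone rearrangements to still satisfy the same pointwise relation; this follows from viewing the rearrangement as a composition of local comparators, each of which commutes with pointwise comparison. The main obstacle is specifying the canonical tie-breaking rule so that the two-line argument works uniformly, since there can be many nearest $\bb_i$-monotone functions on a single line. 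A clean choice is to first reduce to pure monotonicity by reflecting the coordinates with $\bb_i = 1$, and then employ the coordinate-wise sorting operator that places the multiset of values on each $i$-line in nondecreasing order. The preservation lemma then becomes the classical two-line exchange fact from sorting-network analysis, and the conversion between sorting cost and minimum-edit distance (which can differ by a small constant factor) accounts for the $1/4$ in the conclusion.
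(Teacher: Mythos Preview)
The paper does not prove this statement: it is quoted as Theorem~1.8 of~\cite{CDJS15} and used as a black box throughout Section~\ref{sec:upper-bounds}. Your high-level strategy---coordinate-wise repair operators applied in sequence, a preservation lemma, and the triangle inequality---is indeed the approach of~\cite{CDJS15}, which in turn generalizes the Dodis et al.\ argument you cite.

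There is, however, a genuine gap in the sketch. You define $R_i$ as ``replace each $i$-line by a nearest $\bb_i$-monotone function,'' which makes $\dist(f, R_i f) = \mu_i$ automatic, but for the preservation step you switch to the sorting operator. For real-valued functions these are very different: sorting a line can move every value even when the line is $1/n$-close to monotone, so sorting does not have cost $\mu_i$. More importantly, the preservation lemma you state (``if $g$ is $\bb_j$-monotone in coordinate $j$ then so is $R_i g$'') only ensures that the final $f^*$ is $\bb$-monotone; it does not control the cost at stage $i$, which is $\dist(f_{i-1}, R_i f_{i-1}) = \mu_i(f_{i-1})$, the $i$-distance of the \emph{already-modified} function, not $\mu_i = \mu_i(f)$. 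For that you need a second, stronger lemma: applying $R_j$ does not increase the distance-to-$\bb_i$-monotonicity on any $i$-line, for $j \neq i$. This non-increase statement is the actual technical heart of the argument, and it is neither the same as nor implied by the two-line sorting-network fact you invoke (which controls pointwise order, hence violation counts, not Hamming distance to the nearest monotone sequence). Consequently, your attribution of the factor $4$ to a ``sorting cost versus edit distance'' conversion is not the correct mechanism; the constant in~\cite{CDJS15} comes from their general bounded-derivative framework rather than from the step you identify.
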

For the special case of the hypercube domains, $i$-lines become $i$-edges, and the average distance $\mu_i$ to $\bb_i$-monotonicity is the fraction of $i$-edges on which the function is not $\bb_i$-monotone.

\subsection{The Nonadaptive Tester over the Hypercube}\label{sec:non-adap-ub}
We now describe Algorithm~\ref{alg:na-unate}, the nonadaptive tester for unateness over the hypercubes.

\begin{algorithm}
\caption{The Nonadaptive Unateness Tester over Hypercubes} \label{alg:na-unate}
\SetKwInOut{Input}{input}\SetKwInOut{Output}{output}
\SetKwFor{RepeatTimes}{repeat}{times}{end}
\SetKwFor{RepeatIterations}{repeat}{iterations}{end}
\Input{distance parameter $\eps \in (0,1/2)$; query access to a function $f:\{0,1\}^d \to \R$.}
\DontPrintSemicolon
\BlankLine
\nl \For{$r = 1$ to $\cei{3\log(4d/\eps)}$}{
\nl \RepeatTimes{$s_r = \cei{\frac{16d \ln 4}{\eps\cdot 2^r}}$}{
\nl \label{step:sample-dim}Sample a dimension $i \in [d]$ uniformly at random. \;
\nl \label{step:reject-hc}Sample $3 \cdot 2^r$ $i$-edges uniformly and independently at random and {\bf reject} if there exists an increasing edge and a decreasing edge among the sampled edges.\;
}
}
\nl {\bf accept}\;
\end{algorithm}

It is evident that \Alg{na-unate} is a nonadaptive, one-sided error tester. Furthermore,
its query complexity is $O\left(\frac{d}{\eps}\log\frac{d}{\eps}\right)$. It suffices to prove
the following.

\begin{lemma} \label{lem:na-unate-rej}
If $f$ is $\eps$-far from unate,
	\Alg{na-unate} rejects with probability at least $2/3$.
\end{lemma}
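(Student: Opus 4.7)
My approach is to combine \Thm{dimred} with the \emph{work-investment} strategy from~\cite{BerRY14}: the outer loop of \Alg{na-unate} enumerates scales $r\in[R]$, at each scale spending just enough samples to catch dimensions whose minimum-violation fraction $\mu_i:=\min(\alpha_i,\beta_i)$ has magnitude close to $2^{-r}$. The first step is to instantiate \Thm{dimred} with the bit-vector $\bb^*\in\{0,1\}^d$ defined by $\bb^*_i=0$ if $\alpha_i\le \beta_i$ and $\bb^*_i=1$ otherwise, so that the fraction of $i$-edges violating $\bb^*_i$-monotonicity equals $\mu_i$. Since $f$ is $\eps$-far from unate, it is in particular $\eps$-far from $\bb^*$-monotone, and the theorem gives $\sum_i \mu_i\ge \eps/4$. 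Setting $R=\cei{3\log(4d/\eps)}$ and $T_r=\{i:\mu_i\in[2^{-r},2^{-(r-1)})\}$, the dimensions with $\mu_i<2^{-R}$ contribute at most $d\cdot 2^{-R}\le \eps/8$ to the sum, leaving $\sum_{r=1}^R\sum_{i\in T_r}\mu_i\ge \eps/8$; using $\mu_i<2\cdot 2^{-r}$ inside each $T_r$ then yields the central inequality
\[
\sum_{r=1}^{R} \frac{|T_r|}{2^r}\;\ge\;\frac{\eps}{16}.
\]

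Next, for a fixed $i\in T_r$, the lower bounds $\alpha_i,\beta_i\ge 2^{-r}$ and a union bound over the two directions imply that $3\cdot 2^r$ uniformly random $i$-edges miss all increasing edges or all decreasing edges with probability at most $2(1-2^{-r})^{3\cdot 2^r}\le 2e^{-3}\le 1/10$. Hence one repetition of iteration $r$ rejects with probability at least $\tfrac{9}{10}\cdot |T_r|/d$, after accounting for the $|T_r|/d$ chance of selecting a useful dimension in \Step{sample-dim}. The $s_r$ repetitions are independent, so iteration $r$ fails to reject with probability at most $\exp\!\bigl(-\tfrac{9}{10}\cdot \tfrac{16\ln 4\cdot |T_r|}{\eps\cdot 2^r}\bigr)$; multiplying over $r\in[R]$ and substituting the central inequality, the overall failure probability is at most $\exp\!\bigl(-\tfrac{9\ln 4}{10}\bigr)=4^{-9/10}<1/3$, as required.

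I expect the main obstacle to be the calibration at the end: no single iteration of the outer loop suffices, since the $\mu_i$'s could concentrate on an unknown scale, yet assigning $s_r\propto 2^{-r}$ repetitions of $3\cdot 2^r$ edges each makes every iteration cost the same $O(d/\eps)$ queries while giving each bucket $T_r$ exactly the attention its contribution to $\sum_r |T_r|/2^r$ warrants. The factor $\log(d/\eps)$ in the query complexity is precisely the price of this nonadaptive enumeration over scales, which the adaptive tester, developed later in the paper, avoids by first probing each dimension to identify the influential ones.
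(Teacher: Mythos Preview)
Your proof is correct and follows essentially the same approach as the paper: instantiate \Thm{dimred} with the bit-vector $\bb$ that makes $\mu_i=\min(\alpha_i,\beta_i)$ the relevant per-dimension quantity, and then use the work-investment strategy to show that some scale $r$ catches a dimension with $\mu_i\ge 2^{-r}$. The only difference is cosmetic: the paper invokes the Berman--Raskhodnikova--Yaroslavtsev work-investment lemma (\Thm{wis}) as a black box to bound $\prod_r(1-p_r)^{s_r}\le 1/4$ and then multiplies by the $8/9$ conditional success probability of \Step{reject-hc}, whereas you unroll that lemma by hand via the bucketing $T_r=\{i:\mu_i\in[2^{-r},2^{-(r-1)})\}$ and the inequality $\sum_r|T_r|/2^r\ge\eps/16$, arriving at the bound $4^{-9/10}<1/3$ directly.
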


\begin{proof} Recall that $\alpha_i$ is the fraction of $i$-edges that are decreasing, $\beta_i$ is the fraction of $i$-edges that are increasing and $\mu_i = \min(\alpha_i,\beta_i).$

	Define the $d$-dimensional bit vector ${\bf b}$ as follows: for each $i\in [d],$ let ${\bf b}_i = 0$
	if $\alpha_i < \beta_i$ and $\bb_i = 1$ otherwise.
	Observe that the average distance of $f$ to $\bb_i$-monotonicity over a random $i$-edge is precisely $\mu_i$.
	Since $f$ is $\eps$-far from being unate,
	$f$ is also $\eps$-far from being ${\bf b}$-monotone.
	By \Thm{dimred}, $\sum_{i \in [d]} \mu_i \geq \frac\eps 4$. Hence, $\Exp_{i \in [d]}[\mu_i] \geq \frac\eps{4d}$.
	We now apply the work investment strategy due to Berman et al.~\cite{BerRY14} to get an upper bound on the probability that \Alg{na-unate} fails to reject.

\begin{theorem}[\cite{BerRY14}]\label{thm:wis}
For a random variable $X \in [0,1]$ with $\Exp[X] \geq \mu$ for $\mu < \frac 1 2$, let $p_r = \Pr[X \geq 2^{-r}]$ and $\delta \in (0,1)$ be the desired error probability. Let $s_r = \frac{4 \ln 1/\delta}{\mu \cdot 2^r}$. Then,
$$\prod\limits_{i=1}^{\lceil 3\log (1/\mu) \rceil} (1-p_r)^{s_r} \leq \delta. $$
\end{theorem}

\noindent Consider running \Alg{na-unate} on a function $f$ that is $\eps$-far from unate. Let $X = \mu_i$ where $i$ is sampled uniformly at random from $[d]$. Then $\Exp[X] \geq \frac\eps{4d}$. Applying the work investment strategy (\Thm{wis}) on $X$ with $\mu=\frac\eps{4d}$, we get that the probability that, in some iteration, \Step{sample-dim} samples a dimension $i$ such that $\mu_i\geq 2^{-r}$ is at least $1-\delta$.
We set $\delta = 1/4$.
Conditioned on sampling such a dimension, the probability that \Step{reject-hc} fails to obtain an increasing edge and a decreasing edge among its $3 \cdot 2^r$ samples is at most $2\left(1-2^{-r}\right)^{3 \cdot 2^r} \leq 2e^{-3} < 1/9$, as the fraction of both increasing and decreasing edges in the dimension is at least $2^{-r}$.
Hence, the probability that \Alg{na-unate} rejects $f$ is at least $\frac{3}{4} \cdot \frac{8}{9} = \frac{2}{3}$,
which completes the proof of Lemma~\ref{lem:na-unate-rej}.
\end{proof}

\subsection{The Adaptive Tester over the Hypercube}\label{sec:adap-ub}
We now describe Algorithm~\ref{alg:adap-unate}, an adaptive tester for unateness over the hypercube domain with good expected query complexity. The final tester is obtained by repeating this tester and accepting if the number of queries exceeds a specified bound.

\begin{algorithm}
\caption{The Adaptive Unateness Tester over Hypercubes} \label{alg:adap-unate}
\SetKwInOut{Input}{input}\SetKwInOut{Output}{output}
\SetKwFor{RepeatTimes}{repeat}{times}{end}
\SetKwFor{RepeatIterations}{repeat}{iterations}{end}
\Input{distance parameter $\eps \in (0,1/2)$; query access to a function $f:\{0,1\}^d \to \R$.}
\DontPrintSemicolon
\BlankLine
\nl\label{step:repeat}\RepeatTimes{$10/\eps$}{
\nl\label{step:begin-adap-hc}\For{$i = 1$ to $d$}{
\nl\label{step:sample-s}Sample an $i$-edge $e_i$ uniformly at random.\;
\nl\label{step:end-build-s}\If{$e_i$ is non-constant (i.e., increasing or decreasing)}{
\nl\label{step:stage-2-edge-sample} Sample $i$-edges uniformly at random till we obtain a non-constant edge $e_i'$.\;
\nl\label{step:end-adap-hc} {\bf reject} if one of the edges $e_i, e_i'$ is increasing and the other is decreasing.\;}
}}
\nl {\bf accept}
\end{algorithm}

\begin{claim} \label{clm:time}
The expected number of queries made by
\Alg{adap-unate} is $40d/\eps$.
\end{claim}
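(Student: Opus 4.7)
The plan is to show, by linearity of expectation, that every (outer iteration, dimension) pair contributes at most $4$ expected queries; multiplying by the $d$ dimensions and the $10/\eps$ outer iterations then yields the stated bound of $40d/\eps$.

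To do this, I would fix one iteration of the outer loop at Step~\ref{step:repeat} and one dimension $i \in [d]$, and let $p_i := \alpha_i + \beta_i$ denote the fraction of $i$-edges that are non-constant. Step~\ref{step:sample-s} always spends exactly $2$ queries to read the endpoints of the uniformly random $i$-edge $e_i$. With probability $p_i$, the edge $e_i$ is non-constant, and only then does the algorithm enter the if-block at Step~\ref{step:end-build-s} and execute Step~\ref{step:stage-2-edge-sample}, where it samples fresh $i$-edges (each costing $2$ queries) until one is non-constant. Since each fresh $i$-edge is independently non-constant with probability $p_i$, the number of samples needed is geometric with mean $1/p_i$, so the expected cost of Step~\ref{step:stage-2-edge-sample} conditional on entering the if-block is $2/p_i$.

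The main observation is the cancellation $p_i \cdot (2/p_i) = 2$: the probability of triggering the expensive inner search exactly offsets its conditional expected cost. Thus the expected number of queries spent on dimension $i$ in a single outer iteration is $2 + p_i \cdot (2/p_i) = 4$ when $p_i > 0$, and just $2$ when $p_i = 0$ (in which case the if-branch is never entered and the geometric search is vacuous). Either way the contribution is at most $4$, so summing over $i \in [d]$ yields at most $4d$ expected queries per outer iteration, and summing over the $10/\eps$ outer iterations gives $40d/\eps$.

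The only point requiring care is the degenerate case $p_i = 0$, where the inner loop at Step~\ref{step:stage-2-edge-sample} could in principle run forever; but this loop is only entered when $e_i$ itself is non-constant, an event of probability zero in this case, so the potentially unbounded geometric search contributes nothing to the expectation. Beyond this, the argument is just linearity of expectation together with the mean of a geometric random variable, and I do not anticipate any substantive obstacle.
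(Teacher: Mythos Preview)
Your proposal is correct and follows essentially the same approach as the paper: both arguments fix a single outer iteration, observe that Step~\ref{step:sample-s} costs $2d$ queries total, and then use the cancellation $p_i \cdot (1/p_i) = 1$ (where $p_i = \alpha_i + \beta_i$) to show that Step~\ref{step:stage-2-edge-sample} contributes an additional expected $2d$ queries, yielding $4d$ per iteration and $40d/\eps$ overall. Your treatment of the degenerate case $p_i = 0$ is slightly more careful than the paper's, which writes $\Pr[E_i]\cdot \frac{1}{\Pr[E_i]}$ without comment.
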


\begin{proof}
Consider one iteration of the {\bf repeat}-loop in Step~\ref{step:repeat}.
We prove that the expected number of queries in this iteration is $4d$.
The total number of queries in Step~\ref{step:sample-s} is $2d$, as 2 points per dimension are queried.
	Let $E_i$ be the event that edge $e_i$ is non-constant and $T_i$ be the random variable for the number of $i$-edges sampled in \Step{stage-2-edge-sample}. Then $\Exp[T_i] = \frac 1{\alpha_i+\beta_i}=\frac 1{\Pr[E_i]}$. Therefore, the expected number of all edges sampled in \Step{stage-2-edge-sample} is $\sum_{i=1}^d \Pr[E_i]\cdot\Exp[T_i]=\sum_{i=1}^d \Pr[E_i] \cdot \frac 1{\Pr[E_i]} = d$. Hence, the expected number of queries in \Step{stage-2-edge-sample} is $2d$. Since there are $10/\eps$ iterations in Step~\ref{step:repeat}, the expected number of queries in \Alg{adap-unate} is $40d/\eps$.
\end{proof}

\begin{claim} \label{clm:rej}
If $f$ is $\eps$-far from unate, \Alg{adap-unate} accepts with probability at most $1/6$.
\end{claim}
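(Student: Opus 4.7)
The plan is to show that for a single iteration of the outer repeat-loop in Step~\ref{step:repeat}, the probability of rejection is at least $1 - e^{-\eps/4}$, and then amplify over the $10/\eps$ independent iterations.

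First, I would fix the iteration and focus on a single dimension $i \in [d]$. As in the proof of Lemma~\ref{lem:na-unate-rej}, let $\alpha_i$ and $\beta_i$ denote the fractions of decreasing and increasing $i$-edges, respectively, and $\mu_i = \min(\alpha_i, \beta_i)$. I would observe that, conditioned on entering the {\bf if} branch at Step~\ref{step:end-build-s}, the edge $e_i'$ sampled at Step~\ref{step:stage-2-edge-sample} is distributed uniformly over the non-constant $i$-edges and is independent of $e_i$ (because the rejection-sampling loop only depends on fresh independent uniform samples). Consequently, the probability that dimension $i$ causes a rejection in this iteration equals
\[
(\alpha_i + \beta_i) \cdot \frac{2\alpha_i \beta_i}{(\alpha_i+\beta_i)^2} \;=\; \frac{2\alpha_i \beta_i}{\alpha_i + \beta_i} \;\geq\; \min(\alpha_i, \beta_i) \;=\; \mu_i,
\]
using $\alpha_i + \beta_i \leq 2\max(\alpha_i, \beta_i)$ in the inequality.

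Next, I would use the fact that the samples drawn across the different dimensions $i \in [d]$ within a single iteration are mutually independent. Hence the probability that the iteration fails to reject is at most $\prod_{i=1}^d (1 - \mu_i) \leq \exp\!\left(-\sum_{i=1}^d \mu_i\right)$. Now I would invoke Theorem~\ref{thm:dimred}: defining $\bb \in \{0,1\}^d$ by $\bb_i = 0$ when $\alpha_i < \beta_i$ and $\bb_i = 1$ otherwise, the function $f$ is $\eps$-far from $\bb$-monotonicity (since it is $\eps$-far from unate), and the average distance of $f$ to $\bb_i$-monotonicity on $i$-edges is exactly $\mu_i$. Thus $\sum_i \mu_i \geq \eps/4$, yielding a per-iteration rejection probability of at least $1 - e^{-\eps/4}$.

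Finally, since the $10/\eps$ iterations of the outer repeat-loop are independent, the overall probability of acceptance is at most
\[
\bigl(e^{-\eps/4}\bigr)^{10/\eps} \;=\; e^{-5/2} \;<\; \tfrac{1}{6},
\]
which gives the claim. The main conceptual step is the per-dimension inequality $\frac{2\alpha_i\beta_i}{\alpha_i+\beta_i} \geq \mu_i$; the rest is the standard combination of independence across dimensions, dimension reduction, and amplification. I do not anticipate any serious obstacle: the argument is routine once one notices that $e_i'$ is uniform on the non-constant $i$-edges independently of $e_i$, which is what makes the per-dimension rejection probability factor cleanly in terms of $\alpha_i$ and $\beta_i$.
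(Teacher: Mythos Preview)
Your proposal is correct and follows essentially the same approach as the paper: the per-dimension rejection probability bound $\frac{2\alpha_i\beta_i}{\alpha_i+\beta_i}\ge\mu_i$, the independence across dimensions giving $\prod_i(1-\mu_i)\le e^{-\eps/4}$ via dimension reduction, and amplification over $10/\eps$ iterations. The only (cosmetic) difference is that the paper bounds $1-e^{-\eps/4}>\eps/5$ and then computes $(1-\eps/5)^{10/\eps}<1/6$, whereas you more directly compute $(e^{-\eps/4})^{10/\eps}=e^{-5/2}<1/6$.
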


\begin{proof}
First, we bound the probability that a violation of unateness is detected in some dimension $i \in [d]$ in one iteration of the {\bf repeat}-loop.
Consider the probability of finding a 
decreasing $i$-edge in Step~\ref{step:sample-s}, and an 
increasing $i$-edge in \Step{stage-2-edge-sample}.
	The former is exactly $\alpha_i,$ and the latter is $\frac{\beta_i}{\alpha_i+\beta_i}$. Therefore, the probability we detect a violation from dimension $i$ is
$\frac{2 \alpha_i \beta_i}{\alpha_i+\beta_i} \geq \min(\alpha_i,\beta_i) = \mu_i$.
The probability that we fail to detect a violation in any of the $d$ dimensions is at most $\prod_{i =1}^d (1-\mu_i) \leq \exp\big(-\sum_{i = 1}^d \mu_i\big),$ which is at most $e^{-\eps/4}$ by \Thm{dimred} (Dimension Reduction). By Taylor expansion of $e^{-\eps/4}$, the probability of finding a violation in one iteration is at least $1-e^{-\eps/4} \geq
\frac{\eps}{4} - \frac{\eps^2}{32} >
\frac{\eps}{5}$. 
The probability that \Alg{adap-unate} does not reject in any iteration is at most $(1-\eps/5)^{10/\eps} < 1/6$.
\end{proof}

\begin{proof}[{Proof of Theorem~\ref{thm:main-hg}, Part 2} (for the special case of the hypercube domain)]
We run \Alg{adap-unate}, aborting and accepting if we ever make more than $240d/\eps$ queries. By Markov's inequality, the probability of aborting is at most $1/6$.
By Claim~\ref{clm:rej}, if $f$ is $\eps$-far from unate, \Alg{adap-unate} accepts with probability at most $1/6$. The theorem follows by a union bound.
\end{proof}

\subsection{Extension to Hypergrids}\label{sec:hypergrids}
We start by establishing terminology for lines and pairs.
Consider a function $f:[n]^d\to\R$.
Recall the definition of $i$-lines from Section~\ref{sec:intro-tech}.
A pair of points that differ only in coordinate $i$ is called an $i$-pair.
An $i$-pair $(x,y)$ with $x_i < y_i$ is called {\em increasing} if $f(x) < f(y)$, {\em decreasing} if $f(x) > f(y),$ and {\em constant} if $f(x) = f(y)$.

\begin{algorithm}
\caption{Tree Tester}\label{alg:tree-tester}
\SetKwInOut{Input}{input}\SetKwInOut{Output}{output}
\SetKwFor{RepeatTimes}{repeat}{times}{end}
\SetKwFor{RepeatIterations}{repeat}{iterations}{end}
\Input{Query access to a function $h:[n] \mapsto \R$.}
\DontPrintSemicolon
\BlankLine

\nl Pick $x \in [n]$ uniformly at random. \;
\nl Let $Q_x\subseteq [n]$ be the set of points visited in a binary search for $x$. Query $h$ on all points in $Q_x$. \;
\nl If there is an increasing pair in $Q_x$, set $\dir \gets \{\uparrow\}$; otherwise, $\dir\gets\emptyset.$\;
\nl If there is a decreasing pair in $Q_x$, update $\dir \gets\dir \cup \{\downarrow\}$. \;
\nl {\bf Return} $\dir$.
\end{algorithm}

The main tool for extending Algorithms~\ref{alg:na-unate} and \ref{alg:adap-unate} to work on hypergrids is the {\em tree tester}, designed by Ergun et al.~\cite{EKKRV00} to test monotonicity of functions $h:[n]\to\R$.
We modify the tree tester to return information about directions it observed instead of just accepting or rejecting. See Algorithm~\ref{alg:tree-tester}. The following lemma summarizes the guarantee of the tree tester.

\begin{lemma}[\cite{EKKRV00, CDJS15}] \label{lem:line}
If $h:[n] \mapsto \R$ is $\eps$-far from monotone (respectively, antimonotone), then the output of \Alg{tree-tester} on $h$ contains $\downarrow$ (respectively, $\uparrow$) with probability at least $\eps$.
\end{lemma}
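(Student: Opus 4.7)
The plan is to prove the statement for monotonicity; the antimonotone case follows immediately by applying the same argument to $-h$. I would follow the classical analysis of Ergun et al.~\cite{EKKRV00}: exhibit a large subset of $[n]$ on which $h$ is automatically nondecreasing, and show that every point outside this subset produces a decreasing pair somewhere in its binary search path $Q_x$.

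Call a point $x \in [n]$ \emph{good} if $Q_x$ contains no pair of the form $(y, x)$ that is decreasing, i.e., for every $y \in Q_x$, either $y \leq x$ and $h(y) \leq h(x)$, or $y \geq x$ and $h(y) \geq h(x)$. The central step is to show that the set $G$ of good points is a monotone subsequence of $h$: for good $x < x'$, I would consider the least common ancestor $y$ of $x$ and $x'$ in the fixed binary search tree used by \Alg{tree-tester}. Because $y$ is the LCA, we have $y \in Q_x \cap Q_{x'}$ and $x \leq y \leq x'$. Applying goodness of $x$ to the witness $y$ gives $h(x) \leq h(y)$, and applying goodness of $x'$ to the same $y$ gives $h(y) \leq h(x')$; the degenerate cases $y = x$ and $y = x'$ are handled the same way. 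Thus $h$ restricted to $G$ is nondecreasing.

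Once $G$ is known to be a monotone subsequence, one can build a nondecreasing $h'$ that agrees with $h$ on $G$ by setting $h'(x)$ equal to $h(g)$ for the largest good $g \leq x$ (using $h(\min G)$ for $x$ below all of $G$), modifying $h$ only on $[n] \setminus G$. Hence the distance from $h$ to monotone is at most $(n - |G|)/n$, and the assumption that $h$ is $\eps$-far from monotone forces $|G| \leq (1 - \eps) n$. Every bad $x$ admits some $y \in Q_x$ such that $(y, x)$ is a decreasing pair, which \Alg{tree-tester} records by adding $\downarrow$ to $\dir$. Since $x$ is sampled uniformly from $[n]$, $\Pr[\downarrow \in \dir] \geq |[n] \setminus G|/n \geq \eps$. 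The only nontrivial step is the LCA argument showing that good points form a monotone subsequence; everything else is bookkeeping around the definition of Hamming distance and the extension of a partial monotone function.
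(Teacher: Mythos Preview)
Your argument is correct and is exactly the classical analysis of Ergun et al.~\cite{EKKRV00}. The paper does not supply its own proof of \Lem{line}; it simply cites \cite{EKKRV00,CDJS15} and uses the lemma as a black box. The LCA argument you give---show that the ``good'' points form a nondecreasing subsequence, hence there are at least $\eps n$ bad points, each of which forces $\downarrow$ into $\dir$---is the standard proof from those references.

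For what it is worth, the same LCA idea surfaces later in the paper as \Clm{2} inside the proof of \Lem{treetester}, with a slightly different (stronger) notion of good point: there $x$ is excluded as soon as $Q_x$ contains \emph{any} decreasing pair, not just one involving $x$. Your weaker definition (only pairs through $x$) gives a potentially larger monotone set $G$, which is harmless here and in fact matches the original EKKRV formulation more closely. Either variant yields the bound.
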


\begin{algorithm}
\caption{The Adaptive Unateness Tester over Hypergrids} \label{alg:adap-unate-hg}
\SetKwInOut{Input}{input}\SetKwInOut{Output}{output}
\SetKwFor{RepeatTimes}{repeat}{times}{end}
\SetKwFor{RepeatIterations}{repeat}{iterations}{end}
\Input{distance parameter $\eps \in (0,1/2)$; query access to a function $f:[n]^d \to \R$.}
\DontPrintSemicolon
\BlankLine
\nl \RepeatTimes{$10/\eps$}{
\nl\label{step:begin-adap-hg}\For{$i = 1$ to $d$}{
\nl Sample an $i$-line $\ell_i$ uniformly at random.\;
\nl\label{step:test-i-line}Let $\dir_i$ be the output of \Alg{tree-tester} on $f_{|\ell_i}$. \;
\nl \If{$\dir_i \ne \emptyset$}{
\nl\label{step:verify-i-line} Sample $i$-lines uniformly at random and run \Alg{tree-tester} on $f$ restricted to each line until it returns a non-empty set. Call it $\dir_i'$.\;
\nl  If $\dir_i \cup \dir_i' = \{\uparrow, \downarrow \}$, {\bf reject}.\;
}}}
\nl {\bf accept}
\end{algorithm}

\begin{algorithm}[h!]
	\caption{The Nonadaptive Unateness Tester over Hypergrids} \label{alg:na-unate-hg}
	\SetKwInOut{Input}{input}\SetKwInOut{Output}{output}
	\SetKwFor{RepeatTimes}{repeat}{times}{end}
	\SetKwFor{RepeatIterations}{repeat}{iterations}{end}
	\Input{distance parameter $\eps \in (0,1/2)$; query access to a function $f:[n]^d \to \R$.}
	\DontPrintSemicolon
	\BlankLine
	
	\nl \label{step:mono}\RepeatTimes{$220/\eps$}{
		\nl \label{step:mono-start}\For{$i=1$ to $d$}{
			\nl Sample an $i$-line $\ell$ uniformly at random. \;
			\nl \label{step:log}{\bf Reject} if \Alg{tree-tester}, on input $f_{|\ell}$, returns $\{\uparrow,\downarrow\}$.
		}
	}
	\nl \label{step:cube}\For{$r=1$ to $\cei{3 \log(200d/\eps)}$}{
		\nl \RepeatTimes{$s_r = \cei{\frac{800d \ln 4}{\eps \cdot 2^r}}$}{
			\nl \label{step:samp-hg1}Sample a dimension $i \in [d]$ uniformly at random.\;
			\nl \label{step:tree-tester1}Sample $3 \cdot 2^r$ $i$-pairs uniformly and independently at random.\;
			\nl \label{step:rej-hg1}If we find an increasing and a decreasing pair among the sampled pairs, {\bf reject}.\;
		}
	}
	\nl {\bf accept}			
\end{algorithm}

Our hypergrid testers are stated in Algorithms~\ref{alg:adap-unate-hg} and~\ref{alg:na-unate-hg}.
Next, we explain how Lemma~\ref{lem:line} and Theorem~\ref{thm:dimred} are used in the analysis of the adaptive tester.
For a dimension $i\in[d]$, let $\alpha_i$ and $\beta_i$ denote the average distance of $f_{|\ell}$ to monotonicity and antimonotonicity, respectively, over all $i$-lines $\ell$.
Then $\mu_i := \min(\alpha_i,\beta_i)$ is the average fraction of points per $i$-line that needs to change to make $f$ unate.
Define the $\bb$-vector with $\bb_i = 0$ if $\alpha_i < \beta_i$,
and $\bb_i = 1$ otherwise.
By \Thm{dimred},
if $f$ is $\eps$-far from unate, and thus $\eps$-far from $\bb$-monotone, then $\sum_{i=1}^d \mu_i \geq \eps/4$.
By \Lem{line}, the probability that the output of \Alg{tree-tester} on $f_{|\ell}$
contains $\downarrow$ (respectively, $\uparrow$), where $\ell$ is a uniformly random $i$-line,
is at least $\alpha_i$ (respectively, $\beta_i$).
The rest of the analysis of Algorithm~\ref{alg:adap-unate-hg} is similar to that in the hypercube case.

\begin{proof}[Proof of Theorem~\ref{thm:main-hg}, Part 2]
	The tester is Algorithm~\ref{alg:adap-unate-hg}.
As in the proof of Claim~\ref{clm:time}, the expected running time of Algorithm~\ref{alg:adap-unate-hg} is at most $(40d\log n)/\eps$.
The proof of Claim~\ref{clm:rej} carries over almost word-to-word. Fix dimension $i$. The probability that $\downarrow \in \dir_i$ in Step~\ref{step:test-i-line} is at least $\alpha_i$.
The probability that $\uparrow \in \dir_i'$ in Step~\ref{step:verify-i-line} is at least $\frac{\beta_i}{\alpha_i + \beta_i}$.
The rest of the calculation is identical
to that of the proof of \Clm{rej}.
\end{proof}

To analyze the nonadaptive tester,
we prove Lemma~\ref{lem:treetester}, which demonstrates the power of the tree tester and  may be of independent interest.

\begin{lemma}\label{lem:treetester}
	Consider a function $h:[n]\to \R$ which is $\eps$-far from monotone (respectively, antimonotone). At least one of the following holds:
	\begin{compactenum}
		\item $\Pr[\text{\Alg{tree-tester}, on input $h$, returns }\{\uparrow,\downarrow\}] \geq \eps/25$.
		\item $\Pr_{u,v \in [n]}[(u,v) \textrm{ is a decreasing (respectively, increasing)  pair}] \geq \eps/25$.
	\end{compactenum}
\end{lemma}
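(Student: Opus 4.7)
The plan is to argue the contrapositive: assume that both $\Pr[\text{\Alg{tree-tester} on } h \text{ returns } \{\uparrow,\downarrow\}] < \eps/25$ and the fraction of decreasing pairs in $h$ is less than $\eps/25$, and derive that $h$ cannot be $\eps$-far from monotone. First, by \Lem{line} we have $\Pr[\downarrow \in \dir_x] \geq \eps$, so the assumed failure of condition~1 gives $|S| \geq 24 \eps n /25$, where $S = \{x : \dir_x = \{\downarrow\}\}$. A short LCA argument shows that the superset $G_\downarrow = \{x : \uparrow \notin \dir_x\}$ is itself a nonincreasing subsequence of $h$: for any $x < y$ in $G_\downarrow$, the LCA $a$ of $x,y$ in the binary-search tree lies in both $Q_x$ and $Q_y$, and the nonincreasing property of each path forces $h(x) \geq h(a) \geq h(y)$.

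The core is a case split on $L^* := \mathrm{LDS}^*(h)$, the length of a longest nonincreasing subsequence. In the \emph{long case} $L^* \geq (1 - \eps/25) n$, any such subsequence $\sigma$ has value-counts $s_j$ each bounded by $\mathrm{LIS}(h) \leq (1-\eps) n$ (since a constant subsequence is also nondecreasing); a direct calculation yields that the number of strictly decreasing pairs in $\sigma$ is at least
\[
\tfrac{1}{2} L^* (L^* - \max_j s_j) \;\geq\; \tfrac{1}{2}(1-\eps/25)(24 \eps/25)\, n^2,
\]
exceeding $(\eps/25)\binom{n}{2}$ and contradicting the failure of condition~2.

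In the complementary \emph{short case} $L^* < (1 - \eps/25) n$, we have $|G_\downarrow| \leq L^* < (1 - \eps/25) n$. Letting the value-count profile of $h|_{G_\downarrow}$ be $s_1, \ldots, s_k$ with sum $L := |G_\downarrow|$, the number of strict decreasing pairs inside $G_\downarrow$ equals $\tfrac{1}{2}(L^2 - \sum_j s_j^2)$ and is upper-bounded by the total decreasing-pair count in $h$. The failure of condition~2 then forces $\max_j s_j > L - n/24$; combined with $\max_j s_j \leq \mathrm{LIS}(h) \leq (1 - \eps) n$ we get $L \leq (25/24 - \eps) n$, which together with $L \geq |S| \geq 24 \eps n / 25$ yields a contradiction whenever $\eps \geq 625/1176$.

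The small-$\eps$ regime where this direct counting runs out of slack is the main obstacle. There one must exploit the ancestor-witness structure of $S$: each $x \in S$ has some ancestor $p$ in $Q_x$ with $h(p) \neq h(x)$, and combining this with the concentration of $h|_{G_\downarrow}$ on a dominant value $c^*$ produces many decreasing pairs across $h^{-1}((c^*, \infty))$ and $h^{-1}(c^*)$ (or the symmetric pair $h^{-1}(c^*)$ and $h^{-1}((-\infty,c^*))$). Turning this bulk into an $\Omega(\eps)$ fraction of decreasing pairs, rather than the $O(\eps/n)$ one gets from a naive one-witness-per-$x$ count, requires amortizing over the BST subtree sizes of these witnesses; this is where I expect the technical heart of the proof to lie.
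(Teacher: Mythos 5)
Your proposal contains a genuine gap, and you've already identified it yourself: the case split on $L^* = \mathrm{LDS}^*(h)$ only produces a contradiction in the \emph{short} case when $\eps \geq 625/1176 > 1/2$, which is outside the regime $\eps \in (0,1/2)$ where the lemma must hold. The long case ($L^* \geq (1-\eps/25)n$) is correct and self-contained, but the short case as written establishes no contradiction at all for the $\eps$ that matter, and the closing paragraph --- about ``amortizing over BST subtree sizes'' --- is a statement of hope rather than an argument. So this is not a proof.

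The structural ingredient you are missing is what the paper calls Claim~\ref{clm:pure}: the sets $W_{\uparrow}$ (points whose search path sees only increases) and $W_{\downarrow}$ (only decreases) cannot be interleaved --- any $x \in W_\downarrow$ and $y \in W_\uparrow$ must have the \emph{root} $\lceil n/2 \rceil$ as their least common ancestor, so $W_\uparrow$ and $W_\downarrow$ live in different halves $L, R$ of $[n]$. Once you know this, you do not need to reason about the global longest nonincreasing subsequence at all: after discarding $W_{\uparrow\downarrow}$ (small, by the failure of condition~1), $h_{|L}$ and $h_{|R}$ are each monotone or antimonotone on sets of size $\approx n/2$, and a three-way case analysis (both near-constant / one antimonotone and far-from-constant / one monotone and far-from-constant, the other forced near-constant) immediately produces $\Omega(\eps n^2)$ decreasing pairs, for every $\eps$. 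The half-and-half split is precisely what supplies the ``slack'' that your value-profile counting on $G_\downarrow$ cannot generate when $\eps$ is small: you were trying to extract $\Omega(\eps n^2)$ decreasing pairs from a single nonincreasing set whose length you could only lower-bound by $\Theta(\eps n)$, whereas the paper extracts them from two sets of size $\Theta(n)$ whose mutual ordering is pinned down by the root.

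Two smaller remarks. First, your LCA argument that $G_\downarrow = \{x : \uparrow \notin \dir_x\}$ is nonincreasing is correct and matches the paper's Claim~\ref{clm:2} (applied symmetrically); this part is sound. Second, the bound $|S| \geq 24\eps n/25$ obtained by subtracting the assumed $\Pr[\dir_x = \{\uparrow,\downarrow\}] < \eps/25$ from the $\Pr[\downarrow \in \dir_x] \geq \eps$ of \Lem{line} is also fine, but note you never actually need $|S|$ itself --- the useful consequence in the paper's argument is just that $|W_{\uparrow\downarrow}|$ is small, which makes $L$ and $R$ each of size at least $(\frac12 - \frac{\eps}{10})n$.
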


\begin{proof}
\def\lca{\mathsf{lca}}
\def\updown{$\{\uparrow, \downarrow \}$}
Let $T$ be a balanced binary search tree consisting of elements in $[n]$, such that the set of points visited in a binary search for some $x \in [n]$ corresponds to a path from the root to the node containing $x$ in $T$.
Let $Q_x$ denote the set of points visited in a binary search for $x \in [n]$.
For $x, y \in [n]$, denote the least common ancestor of $x$ and $y$ by $\lca(x,y)$.

Let $W_{\uparrow \downarrow}$ be a set of points $x$ such that $Q_x$ contains both an increasing and a decreasing pair (with respect to $h$). If $|W_{\uparrow \downarrow}|\geq \frac{\eps n}{10}$, then Case 1 of Lemma~\ref{lem:treetester} holds. We may therefore assume that $|W_{\uparrow \downarrow}|< \frac{\eps n}{10}$.
Let $\calE$ be the event that for any $u,v \in [n]$ such that $u < v$, the pair $(u,v)$ is decreasing. We will prove
that $\Pr[\calE] \geq \eps/25$.

Let $W_{\downarrow}$ be that set of points $x \in [n]$ such that $Q_x$
contains a decreasing pair. Similarly, define the set $W_{\uparrow}$. Let $W_c$ denote the set of points $x$ such that $h_{|Q_x}$ is constant.

\begin{claim}[\cite{EKKRV00}]\label{clm:2}
	The function $h$ restricted to the set $W_{\uparrow}\cup W_c$ is monotone.
\end{claim}

\begin{proof}
The proof is by contradiction. Suppose $x,y \in (W_{\uparrow} \cup W_c)$ such that $x < y$, but $h(x) > h(y)$.
Consider $z = \lca(x,y)$. Either $h(x) > h(z)$ or $h(z) > h(y)$, contradicting
the fact that $x, y \in W_{\uparrow}\cup W_c$.
\end{proof}

\noindent By symmetry, the function $h$ restricted to the set $W_{\downarrow}\cup W_c$ is antimonotone.

A priori, points in $W_{\uparrow}$ and $W_{\downarrow}$ could be interspersed.
The next claim shows that they are in different halves of the tree $T$.

\begin{claim}\label{clm:pure}
	If $x\in W_{\downarrow}$ and $y\in W_{\uparrow}$, then $\lca(x,y)$ is the root of $T$ (which is equal to $\lceil n/2 \rceil$).
\end{claim}

\begin{proof}
	Suppose not. Let $z := \lca(x,y)$ and $w$ be the parent of $z$.
    Consider the case where $z$ is the left child of $w$,  $x$ lies in the left subtree of $z$ and $y$ lies in the right subtree of $z$.
    (All the other cases have analogous proofs.)
    Observe that all points in $Q_y$ lie in the interval $[z,w]$.
    Both $w$ and $z$ are in $Q_x$ as well as in $Q_y$.
    As $x \in W_{\uparrow}$ and $y \in W_{\downarrow}$, it must be the case that $h(w) = h(z)$.
    Since $y\notin W_{\uparrow \downarrow}$, for all $p \in Q_y$, we have $h(p) = h(w)$.
    This contradicts the fact that $y\in  W_{\uparrow}$.

    In all cases, we conclude that either $x\notin W_{\downarrow}$ or
    $y\notin W_{\uparrow}$.
    Thus, $z$ cannot have a parent, and $z = \lceil n/2 \rceil$.
\end{proof}

\begin{claim}\label{clm:obv}
Let $g:[n] \mapsto \R$ be an antimonotone function and $\dist(g, \mathrm{constant})$ denote the fraction of points that need to be changed so that $g$ is a constant function.
	If $g$ is antimonotone, and $\dist(g, \mathrm{constant}) \geq \rho$, where $\rho \le \frac 1 2$, then
	$$\Pr\limits_{u,v \in [n]: u < v} [(u,v) \textrm{ is decreasing}] \geq \frac\rho 2 .$$
\end{claim}

\begin{proof}
	The probability that $g(u) \neq g(v)$ is at least $\rho(1-\rho)$ which is at least $\frac \rho 2$ when $\rho \le \frac 1 2$. Since $g$ is antimonotone, $(u,v)$ is a decreasing pair.
\end{proof}

\noindent
Let $L$ (respectively, $R$) be the set of points in $[n] \setminus W_{\uparrow \downarrow}$ in the left (respectively, right) subtree of the root. Define $\mu_L := |L|/n$; similarly, define $\mu_R$.
Observe that both $\mu_L$ and $\mu_R$ are at least $\frac{1}{2} - \frac{\eps}{10}$.
By Claims~\ref{clm:2} and~\ref{clm:pure}, $h_{|L}$ (and $h_{|R}$) is either monotone or antimonotone.
Now, if any of these two functions were antimonotone and $\frac\eps 2$-far from being constant (w.l.o.g., assume $h_{|L}$ satisfies the condition), then by \Clm{obv}, we would have
\[
\Pr[\calE] \geq \Pr\limits_{u < v}\left[(u,v) \textrm{ is decreasing and }  u,v \in L \right] \geq  \frac\eps 4 \cdot \left(\frac{1}{2} - \frac{\eps}{10} \right)^2  \geq \frac{\eps}{25}.
\]
Assume that this doesn't occur. We have two cases.

\noindent {\bf Case 1.} Both $h_{|L}$ and $h_{|R}$ are $\frac\eps 2$-close\footnote{A function $h$ is $\eps$-close to a property $\cP$ if it is sufficient to change at most $\eps$-fraction of values in $h$ to make it satisfy $\cP$.}
to being constant. In this case, at least $(1-\frac\eps 2)|L|$ points of $L$ evaluate to a constant $C_1$, and at least $(1-\frac\eps 2)|R|$ points of $R$ evaluate to constant $C_2$. We must have $C_1 > C_2$, for otherwise, we can make $h$ monotone by changing only $\frac\eps 2 \cdot (|R|+|L|)+\frac{\eps n}{10} < \eps n$ points, which is a contradiction.
Hence,
\[
\Pr[\calE] \geq \Pr\limits_{u < v} \left[h(u) = C_1 \textrm{ and } h(v) = C_2 \right] \geq \left(1-\frac\eps 2 \right)^2 \mu_L\mu_R > \frac{1}{4} \cdot \left(\frac 1 2- \frac{\eps}{10} \right)^2 \geq \frac{\eps}{25}.
\]

\noindent {\bf Case 2.} At least one of the functions is $\frac\eps 2$-far from being constant and is monotone.
W.l.o.g., assume $h_{|L}$ satisfies this condition.
Note that all points in $L$ are only in $W_{\uparrow}\cup W_c$, and so, all points in $R$ must be in $W_{\downarrow}\cup W_c$.
This implies that $h_{|R}$ is antimonotone. (Note that a constant function is also antimonotone.)
But then, $h_{|R}$ must be $\frac \eps 2$-close to being constant.
Then at least $(1-\frac\eps 2)|R|$ points in $R$ evaluate to a constant, say $C$.
Let $U$ denote the set of points in $L$ whose values are strictly
greater than $C$. Since $h_{|L}$ is monotone, we can make
$h$ monotone by deleting all points in $U, W_{\uparrow \downarrow}$,
and the points in $R$ that do not evaluate to $C$.
The total number of points to be deleted is at most $|U| + \frac{\eps n}{10} + \frac{\eps n}{2}$, which must be at least $\eps n$, as $h$ is $\eps$-far from monotone. Hence, $|U| > \eps n/3$.
Therefore,
\[
\Pr[\calE] \geq \Pr\limits_{u < v} \left[ u \in U \textrm{ and } h(v) = C \right] \geq \frac{\eps}{3} \cdot  \left(1 - \frac\eps 2 \right)\mu_R  > \frac{\eps}{25} .
\]
This completes the proof of \Lem{treetester}.
\end{proof}

We now analyze \Alg{na-unate-hg}.
It is evident that it has one-sided error and makes
$O(\frac d \eps(\log n + \log\frac d\eps))$ queries.
It suffices to prove the following.

\begin{theorem} \label{thm:non-adap-hyper} If $f:[n]^d \mapsto \R$ is $\eps$-far
	from unate, then \Alg{na-unate-hg} rejects with probability at least $2/3$.
\end{theorem}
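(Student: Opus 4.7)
The plan is to show that the two phases of \Alg{na-unate-hg}---the tree-tester loop at Step~\ref{step:mono} and the work-investment pair-sampling loop at Step~\ref{step:cube}---cover complementary cases. Mirroring the setup of \Clm{rej}, let $\alpha_i$ and $\beta_i$ denote the average distance of $f_{|\ell}$ to monotonicity and antimonotonicity over uniformly random $i$-lines $\ell$, set $\mu_i = \min(\alpha_i, \beta_i)$, and define $\bb$ via $\bb_i = 0$ iff $\alpha_i < \beta_i$. Since $f$ is $\eps$-far from $\bb$-monotone, \Thm{dimred} gives $\sum_{i \in [d]} \mu_i \ge \eps/4$. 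Let $p_i$ be the probability that \Alg{tree-tester} on a random $i$-line returns $\{\uparrow,\downarrow\}$, and let $q_i^{\uparrow}, q_i^{\downarrow}$ be the probabilities that a uniformly random $i$-pair is increasing, decreasing, respectively. The key technical step, leveraging \Lem{treetester}, is the per-dimension dichotomy $p_i + \min(q_i^{\uparrow}, q_i^{\downarrow}) \ge \mu_i/25$. To prove it, I would apply \Lem{treetester} to each individual $i$-line $\ell$ with parameter $a_\ell := \dist(f_{|\ell}, \text{monotone})$: either $\Pr[\text{\Alg{tree-tester} on $\ell$ returns }\{\uparrow,\downarrow\}] \ge a_\ell/25$, or $\Pr[\text{random pair on $\ell$ is decreasing}] \ge a_\ell/25$. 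Averaging over $\ell$ and using $\max(x,y) \le x+y$ yields $p_i + q_i^{\downarrow} \ge \alpha_i/25 \ge \mu_i/25$; the symmetric application to the antimonotone distance gives $p_i + q_i^{\uparrow} \ge \mu_i/25$; taking the minimum gives the dichotomy.

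Now I would partition $[d]$ into $H = \{i : p_i \ge \mu_i/50\}$ and $L = [d] \setminus H$; for $i \in L$, the dichotomy forces $\min(q_i^{\uparrow}, q_i^{\downarrow}) \ge \mu_i/50$. Since $\sum_{i} \mu_i \ge \eps/4$, at least one of $\sum_{i \in H} \mu_i$ and $\sum_{i \in L} \mu_i$ is $\ge \eps/8$. In \emph{Case $H$}, $\sum_{i \in H} p_i \ge \eps/400$, so one outer iteration of Step~\ref{step:mono} fails to detect a violation with probability at most $\prod_i (1 - p_i) \le \exp(-\eps/400)$ by independence across the $d$ inner tree-tester calls; the $O(1/\eps)$ repetitions then drive the Phase-1 failure probability below $1/6$. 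In \emph{Case $L$}, $\sum_{i \in L} \min(q_i^{\uparrow}, q_i^{\downarrow}) \ge \eps/400$, so the random variable $X := \min(q_i^{\uparrow}, q_i^{\downarrow})$ over uniform $i \in [d]$ satisfies $\EX[X] \ge \eps/(400d)$; the analysis of Step~\ref{step:cube} then mirrors that of \Alg{na-unate} in \Lem{na-unate-rej}, applying the work-investment \Thm{wis} to $X$ and then using a $\le 2e^{-3}$ tail bound to show that, conditioned on Step~\ref{step:samp-hg1} of round $r$ hitting a dimension with $X \ge 2^{-r}$, the $3 \cdot 2^r$ random $i$-pairs of Step~\ref{step:tree-tester1} contain both an increasing and a decreasing pair. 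A union bound over the two case failures (each $\le 1/6$ after tuning constants) yields overall success probability $\ge 2/3$.

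The main obstacle is precisely the per-dimension dichotomy supplied by \Lem{treetester}: without it, one would have to run the tree tester inside every round of the work-investment phase, paying the query-complexity product $(\log d)(\log n)$ rather than the sum $\log d + \log n$ promised by Theorem~\ref{thm:non-adap-hyper}. Once the dichotomy is in hand, Case $H$ is handled by a standard product bound and Case $L$ is a direct port of the hypercube proof \Lem{na-unate-rej}, with $i$-pairs in place of $i$-edges.
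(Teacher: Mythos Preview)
Your proposal is correct and follows essentially the same route as the paper. Both arguments apply \Lem{treetester} line by line to get $\sigma_\ell+\delta_\ell\ge\alpha_\ell/25$ and $\sigma_\ell+\lambda_\ell\ge\beta_\ell/25$, average over $i$-lines to obtain the per-dimension inequality $p_i+\min(q_i^{\uparrow},q_i^{\downarrow})\ge\mu_i/25$, and then split into a ``Phase~1 wins'' case (handled by the product bound over the $d$ tree-tester calls and the $O(1/\eps)$ outer repeats) and a ``Phase~2 wins'' case (handled exactly as in \Lem{na-unate-rej} via \Thm{wis}, with $i$-pairs replacing $i$-edges). The only cosmetic difference is that the paper splits the \emph{sum} $\sum_i p_i + \sum_i \min(q_i^{\uparrow},q_i^{\downarrow}) \ge \eps/100$ directly into two halves, whereas you partition the \emph{dimensions} into $H$ and $L$; your partition loses an extra factor of~2 in the threshold ($\eps/400$ versus the paper's $\eps/200$), which is why the stated $220/\eps$ repeats would not quite suffice without the constant-tuning you already flag.
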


\begin{proof} For any line $\ell$, we define the following quantities.
\begin{compactitem}
\item $\alpha_\ell$: the distance of $f_{|\ell}$ to monotonicity.
\item $\beta_\ell$: the distance of $f_{|\ell}$ to antimonotonicity.
\item $\sigma_\ell$: the probability that \Alg{tree-tester}, on input $f_{|\ell}$, returns $\{\uparrow, \downarrow\}$.
\item $\delta_\ell$: the probability that a uniformly random pair in $\ell$ is decreasing.
\item $\lambda_\ell$: the probability that a uniformly random pair in $\ell$ is increasing.
\end{compactitem}

\noindent Let $L_i$ be the set of $i$-lines. By 
\Thm{dimred},
	$$\frac{1}{n^{d-1}} \sum_{i=1}^d \min\left(\sum_{\ell \in L_i} \alpha_\ell, \sum_{\ell \in L_i} \beta_\ell \right) \geq \frac{\eps}{4}.$$
	By \Lem{treetester}, for every line $\ell$, we have
	$\sigma_\ell + \delta_\ell \geq \alpha_\ell/25$ and $\sigma_\ell + \lambda_\ell \geq \beta_\ell/25$.
	Also note,
	$$
	\frac{1}{n^{d-1}}\sum_{i=1}^d \left[\sum_{\ell \in L_i} \sigma_\ell + \min\left(\sum_{\ell \in L_i} \delta_\ell, \sum_{\ell \in L_i} \lambda_\ell \right)\right] \geq \frac{1}{n^{d-1}}\sum_{i=1}^d  \min\left(\sum_{\ell \in L_i} (\sigma_\ell + \delta_\ell), \sum_{\ell \in L_i} (\sigma_\ell + \lambda_\ell) \right)$$
	Combining these bounds, we obtain that the LHS is at least $\eps/100$.
	%
	%
	Note that the first term,
	which is equal to $\sum_{i=1}^d \EX_{\ell \in L_i} [\sigma_\ell]$,
	is the expected
	number of times a single iteration of Steps~\ref{step:mono-start}-\ref{step:log}
	rejects. If this quantity is at least $\eps/200$, then
	the tester rejects with probability at least $2/3$. If not,
	then we have $n^{-(d-1)} \sum_{i=1}^d \min(\sum_{\ell \in L_i} \delta_\ell, \sum_{\ell \in L_i} \lambda_\ell) \geq \eps/200$.
	Using a calculation identical to that of the proof of Lemma~\ref{lem:na-unate-rej},
	the probability that Step~\ref{step:rej-hg1} rejects in some iteration is at least
	$2/3$.
\end{proof}

\section{The Lower Bound for Nonadaptive Testers over Hypercubes} \label{sec:lb}
In this section, we prove Theorem~\ref{thm:non-adap-lb-1}, which gives a lower bound for nonadaptive unateness testers for functions over the hypercube.

Previous work of~\cite{CS14} on lower bounds for monotonicity testing shows that,
for a special class of properties, which includes unateness,
it is sufficient to prove lower bounds for {\em comparison-based testers}.
Comparison-based testers base their decisions only on the {\em order} of the function values  at queried points,  and not on the values themselves.

We first state the reduction to comparison-based testers from~\cite{CS14}. Let a $(t,\eps,\delta)$-tester for a property $\cP$
be a $t$-query tester, with distance parameter $\eps$, that errs with (two-sided) probability at most $\delta$.
Consider functions of the form $f:D \to \R$,
where $D$ is an arbitrary
partial order (in particular the hypergrid/cube).
A property $\cP$ is invariant under monotone transformations if,
for all strictly increasing maps $\phi: \R \to \R$ and all functions $f$, $\dist(f,\cP) = \dist(\phi\circ f, \cP)$.
In particular, unateness is invariant under monotone transformations.

\begin{theorem}[implicit in Theorem 2.1 of~\cite{CS14}]\label{thm:CS14} Let $\cP$ be a property invariant under monotone
transformations. Suppose there exists a nonadaptive (resp., adaptive) $(t,\eps,\delta)$-tester for $\cP$.
Then there exists a  nonadaptive (resp., adaptive) comparison-based $(t,\eps,2\delta)$-tester
for $\cP$.
\end{theorem}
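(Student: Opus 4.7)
The plan is to build the comparison-based tester $T'$ by simulating $T$ on ``fake'' values that match the order type of $f$ at the queried points; invariance of $\cP$ under monotone transformations will let us argue that this simulation preserves $T$'s correctness. In the nonadaptive setting, $T'$ samples the same distribution of query sets $Q = \{q_1,\ldots,q_t\}$ as $T$, queries $f$ to learn only the order type $\pi$ of $(f(q_1),\ldots,f(q_t))$, independently draws fake values $(v_1,\ldots,v_t)$ from any fixed distribution on $\R^t$ conditioned on having order type $\pi$, and returns whatever $T$ would output given $Q$ and the fake values. In the adaptive case, $T'$ interleaves: each time the simulated $T$ requests a new query $q_i$, $T'$ actually queries $f(q_i)$, extends the running order type, samples $v_i$ consistently with that extension, and hands $v_i$ back to $T$. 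In either mode $T'$ is comparison-based and makes exactly $t$ queries.

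The crux is the correctness analysis. Fix any realization of $T'$'s random choices. Because $(v_i)$ has the same order type as $(f(q_i))$, there is a strictly increasing $\phi:\R\to\R$ with $\phi(f(q_i)) = v_i$ for all $i$ (define $\phi$ on the finite set $\{f(q_i)\}$ and extend monotonically to all of $\R$). Let $g := \phi\circ f$. Then the simulated transcript of $T$ inside $T'$ coincides with the transcript $T$ would produce when run on $g$ using the same internal random bits: nonadaptively because the query set agrees and $g(q_i)=v_i$; adaptively because $T$'s next query depends only on the values observed so far, which agree at every step. Hence $T'$'s output equals $T$'s output on $g$. By invariance under monotone transformations, $\dist(g,\cP)=\dist(f,\cP)$, so the ``correct'' answer on $g$ matches the correct answer on $f$.

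Finally, for each fixed $\phi$, the guarantee of $T$ implies that $T$ errs on $\phi\circ f$ with probability at most $\delta$ over its internal randomness; averaging over the randomness in $T'$ that determines $\phi$ bounds $T'$'s error by $\delta$, in particular by $2\delta$ as claimed (the mild slack of a factor $2$ absorbs any technicality in coupling $T'$'s randomness with $T$'s internal coins, or in converting an expected-error statement into a uniform one via a union bound over yes/no instances). I expect the main (minor) obstacle to be making the adaptive interleaving rigorous: one must verify that the step-by-step sampling of the $v_i$ really corresponds to one global $g = \phi\circ f$ rather than to inconsistent functions at different steps. This is handled by observing that at every step the order type observed so far can always be realized by some strictly increasing $\phi$, and that $T'$'s choices inside the simulation depend only on the accumulated order type together with $T'$'s independent randomness — never on the particular $\phi$ witnessing that order type.
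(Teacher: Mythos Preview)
The paper does not prove this statement at all; it merely cites it as implicit in~\cite{CS14} and uses it as a black box. So there is no in-paper argument to compare against, and one must consult~\cite{CS14} (or Fischer~\cite{Fis04}) for the actual proof.

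Your construction of $T'$ is the standard one, and your verification that $T'$ is comparison-based and makes $t$ queries is fine. The gap is in the correctness bound. You write: ``for each fixed $\phi$, the guarantee of $T$ implies that $T$ errs on $\phi\circ f$ with probability at most $\delta$ over its internal randomness; averaging over the randomness in $T'$ that determines $\phi$ bounds $T'$'s error by $\delta$.'' But $\phi$ is \emph{not} independent of $T$'s internal coins $r$. Even nonadaptively, $\phi$ is pinned down only on $f|_{Q(r)}$, and $Q(r)$ varies with $r$; for a fixed realization $s$ of your fake-value randomness there is in general no single strictly increasing $\phi_s$ on all of $\R$ such that $T'$ on $f$ with coins $(r,s)$ equals $T$ on $\phi_s\circ f$ with coins $r$ \emph{simultaneously for all $r$}. (Concretely: if $Q(r)=\{a,b\}$, $Q(r')=\{a,c\}$ with $f(a)<f(b)<f(c)$, the two runs may both assign fake values $(0.3,0.7)$, forcing $\phi(f(b))=\phi(f(c))=0.7$.) In the adaptive case the entanglement is worse, since the queries themselves depend on the fake values. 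Hence you cannot ``fix $\phi$, bound by $\delta$, then average over $\phi$''; the two expectations do not decouple as written.

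Waving this away as ``absorbed by the factor $2$'' is not a proof: nothing in your outline explains where a factor of $2$ (rather than, say, $t$ or something unbounded) would enter. The actual argument in~\cite{Fis04,CS14} handles exactly this point, via a Ramsey-type step: one finds an infinite set $A\subset\R$ on which the (finitely many) deterministic testers underlying $T$ all depend only on the order type, and then uses invariance of $\cP$ to replace $f$ by a single $\phi\circ f$ with range in $A$, with $\phi$ chosen once and for all, independent of $r$. That decoupling is the missing idea.
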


\noindent Our main lower bound theorem is stated next. In the light of the previous discussion, it implies Theorem~\ref{thm:non-adap-lb-1}.

\begin{theorem}\label{thm:non-adap-lb}
Any nonadaptive comparison-based tester for unateness of functions $f:\{0,1\}^d\to \R$ must make $\Omega(d\log d)$ queries.
\end{theorem}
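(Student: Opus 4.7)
The plan is to invoke Yao's minimax principle: by \Thm{CS14} it suffices to exhibit distributions $\Yes$ and $\No$, supported respectively on unate functions and on functions that are $\frac{1}{8}$-far from unate (each with high probability), so that no deterministic nonadaptive comparison-based tester making $o(d\log d)$ queries distinguishes them with constant advantage. The core idea is to combine $K=\Theta(\log d)$ different hard instances into a single mixture, forcing the nonadaptive tester to spread queries across all levels; instance $k\in[K]$ plants violations of edge density $\Theta(1/2^k)$ along a random ``active set'' $R\subseteq[d]$ of size $2^k$. The gap with the adaptive upper bound comes from the fact that an adaptive tester can use $O(d)$ probes to identify $R$ and then focus, while a nonadaptive tester has no such luxury.

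The construction partitions $\{0,1\}^d$ into $\totcube=\Theta(d)$ subcubes indexed by the first $\log_2\totcube$ coordinates. After sampling $k\in[K]$, a set $R\subseteq[d]$ with $|R|=2^k$, and an independent action dimension $r_j\in R$ for each subcube $j\in[\totcube]$, the function restricted to subcube $j$ is defined as a linear function in the coordinates outside $R$ with exponentially increasing coefficients $\{2^i\}$ (so cross-subcube comparisons are also well-ordered), plus a dominating term $\sigma_j\cdot 2^M x_{r_j}$, and with zero dependence on coordinates in $R\setminus\{r_j\}$ (so all those edges are constant). In $\Yes$, a global sign $\sigma_r\in\{\pm 1\}$ is picked for each $r\in R$ and every subcube with $r_j=r$ uses $\sigma_j=\sigma_r$, making the whole function $\bb$-monotone with $\bb_i=\mathbbm{1}[\sigma_i=-1]$ on $i\in R$ and $\bb_i=0$ elsewhere, hence unate. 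In $\No$, each $\sigma_j$ is independent; a routine second-moment argument shows that for a constant fraction of $r\in R$ both signs appear among subcubes with $r_j=r$, and a straightforward per-dimension counting bound yields distance $\ge \frac{1}{8}$ from unate with high probability.

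The comparison-based analysis hinges on the exponentially increasing coefficients. For $x,y$ in the same subcube, the sign of $f(x)-f(y)$ is determined by their highest-indexed differing coordinate outside $R$, except when $x$ and $y$ agree on every coordinate outside $R$, in which case the comparison depends on $\sigma_j$ and on the position of $r_j$ within their differing coordinates of $R$. I will call such a pair a \emph{revealing pair}. Comparisons across distinct subcubes are dominated by the subcube-indexing coordinates and carry no information about the planted randomness. A single revealing pair reveals only the marginal sign $\sigma_j$, which is uniform on $\{\pm 1\}$ in both $\Yes$ and $\No$; hence the tester's view over $\Yes$ and $\No$ coincides unless its queries contain revealing pairs in two distinct subcubes $j\ne j'$ with $r_j=r_{j'}$, at which point the correlation between $\sigma_j$ and $\sigma_{j'}$ is exactly what distinguishes the two distributions.

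The main obstacle will be showing that this collision event has probability $o(1)$ when $q=o(d\log d)$. The plan is to bound, at each level $k$, the expected number of subcubes containing a revealing pair by a careful counting argument that exploits two facts: over a uniformly random $R$ of size $2^k$ the probability that a specified pair of queries has its difference set contained in $R$ and also contains the random action coordinate decays quickly in the diff-set size, and a typical subcube only receives $O(q/\totcube)$ queries. Combined with the independent uniform choice of each $r_j\in R$, which makes two subcubes share their action dimension only with probability $1/2^k$, the collision probability at level $k$ becomes polynomially small in $d$; summing over the $K=\Theta(\log d)$ levels (with the $1/K$ weighting from sampling $k$) will yield an overall distinguishing advantage of $o(1)$ whenever $q=o(d\log d)$, which completes the Yao argument. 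The delicate part is handling pairs of queries whose difference sets are large, since such pairs are plentiful but individually unlikely to be revealing; I expect to charge them using a layered union bound over the size of the diff-set, with the exponential decay in $2^k/d$ per extra coordinate doing the work.
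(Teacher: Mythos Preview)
Your framework matches the paper's: Yao's principle, the mixture over levels $k\in[K]$ with $|R|=2^k$, the partition into $\Theta(d)$ subcubes each with an action dimension $r_j\in R$, and the distinction that in $\Yes$ the sign is global ($\alpha_{r_j}$) while in $\No$ it is local ($\beta_j$). The reduction to bounding a ``collision'' event (two subcubes with the same action dimension, both containing a pair that reveals the sign) is also the right endgame.

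There is, however, a concrete bug in your construction and a missing key idea in the analysis.

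\textbf{The construction is inconsistent, and the ``dominating'' choice is fatal.} You write that the $r_j$-term is \emph{dominating}, $\sigma_j\cdot 2^M x_{r_j}$, yet you then assert that the sign of $f(x)-f(y)$ is determined by the highest differing coordinate \emph{outside} $R$ unless all differences lie in $R$. These two statements contradict each other: with a dominating coefficient, \emph{any} pair with $x_{r_j}\neq y_{r_j}$ reveals $\sigma_j$. That version is actually distinguishable with $O(d)$ queries: query $\{0^{d},1^{d}\}$ in every subcube; the single comparison reveals $\sigma_j$ for every $j$, and the empirical variance of these $\pm 1$'s separates $\Yes$ (where many $\sigma_j$ coincide) from $\No$ (i.i.d.). The paper instead places $x_{r_i}$ at its \emph{natural} weight $3^{r_i}$, so that a pair $(x,y)\in C_i$ is informative precisely when $r_i$ equals the most significant differing coordinate in $([d]\setminus R)\cup\{r_i\}$; this is \emph{not} the event ``all differences lie in $R$''.

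\textbf{The missing combinatorial handle.} Because the informative condition is ``$r_i$ is the top relevant differing coordinate'', a pair with a huge diff-set can still be informative, so your plan to union bound by diff-set size (with decay $(2^k/d)^{|D|}$) does not control the right event. The paper's key device is to define $\capt(x,y)$ as the five most significant differing coordinates; it then splits off a negligible \emph{abort} event $\capt(x,y)\subseteq R$, after which any informative pair in $C_i$ forces $r_i\in\capt(Q_i)$. A short combinatorial lemma shows $|\capt(Q_i)|\le 5(|Q_i|-1)$, hence $\sum_i|\capt(Q_i)|\le 5|Q|$, and a balls-in-bins calculation over $r\in R$ (with the crucial averaging over $k$ supplying the extra $1/\log d$) gives $\Pr[\text{collision}]=O(|Q|/(d\log d))$. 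Your claim that the per-level collision probability is ``polynomially small in $d$'' is not correct; it is only $O(|Q|/d)$ at a single level, and it is the averaging over the $\Theta(\log d)$ levels that produces the $d\log d$ denominator. Without a replacement for the capture bound $\sum_i|\capt(Q_i)|\le O(|Q|)$, your layered union bound does not close.
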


\noindent By \Thm{CS14} and Yao's minimax principle~\cite{Yao77}, it suffices to prove
the lower bound for deterministic, nonadaptive, comparison-based testers over a known distribution of functions.
It may be useful for the reader to recall the  sketch of the main ideas given in Section~\ref{sec:intro-tech}.
For convenience, assume $d$ is a power of $2$ and let $d' := d+\log_2d$.
We will focus on functions $h:\{0,1\}^{d'} \to \R$,
and prove the lower bound of $\Omega(d \log d)$ for this class of functions,
as $\Omega(d \log d) = \Omega(d' \log d')$.

\subsection{The Hard Distributions}
We first partition $\{0,1\}^\dd$ into $d$ subcubes based on the most significant $\log_2 d$ bits.
More precisely, for $i \in [d]$, the $\ord{i}$ subcube is defined as
\[C_i := \{x\in \{0,1\}^\dd: \dec(x_{d'}x_{d'-1}\cdots x_{d+1}) = i - 1\},\]
where $\dec(z)$ denotes the integer equivalent to the binary string $z$. Specifically, $\dec(z_p z_{p-1} \ldots z_1) = \sum_{i = 1}^p z_i 2^{i-1}$.

Let $m=d$. We denote the set of indices of the subcube by $[m]$ and the set of dimensions by $[d]$.
We use $i,j\in [m]$ to index subcubes,
and $a,b\in [d]$ to index dimensions.
We now define a series of random variables, where each subsequent variable may depend on the previous ones.
\begin{compactitem}
    \item $k$: a number picked uniformly at random from $\left[\frac{1}{2}\log_2 d \right]$.
    \item $R$: a uniformly random subset of $[d]$ of size $2^k$.
    \item $r_i$: for each $i \in [m]$, $r_i$ is picked from $R$ uniformly and independently at random.
    \item $\alpha_b$: for each $b \in [d]$, $\alpha_b$ is picked from $\{-1,+1\}$ uniformly and independently at random. (Note: $\alpha_b$ only needs to be defined for each $b \in R$. We define it over $[d]$ just so that it is independent of $R$.)
    \item $\beta_i$: for each $i \in [m]$, $\beta_i$ is picked from $\{-1,+1\}$ uniformly and independently at random.
\end{compactitem}

\noindent We denote by $\bS$ the tuple $(k,R,\{r_i\})$, also referred
to as the \emph{shared randomness}. We use $\bT$ to refer
to the entire set of random variables.
Given $\bT$, define the following functions:
\begin{align*}
f_{\bT}(x) &:=  \sum_{b \in [d'] \setminus R} x_b 3^b + \alpha_{r_i} \cdot x_{r_i}3^{r_i},\textrm{ where $i$ is the subcube with } x \in C_i.\\
g_{\bT}(x) &:=  \sum_{b \in [d'] \setminus R} x_b 3^b + \beta_i \cdot x_{r_i}3^{r_i}, \textrm{ where $i$ is the subcube with } x \in C_i.
\end{align*}
The distribution $\Yes$ generates $f_{\bT}$ and the distribution $\No$ generates $g_{\bT}$.

In all cases, the function restricted to any subcube $C_i$ is linear.
Consider some dimension $b \in R$. There can be numerous
$r_i$'s that are equal to $b$. For $f_{\bT}$, in all of these subcubes,
the coefficient of $x_{r_i}$ has the same sign, namely $\alpha_{r_i}$. 
For $g_{\bT}$, the coefficient $\beta_i$ is potentially different,
as it depends on the actual subcube. 

\begin{claim}\label{clm:yes}
Every $f \in \supp(\Yes)$ is unate.
\end{claim}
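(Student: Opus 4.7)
The plan is to show that for every realization of $\bT$, the function $f_{\bT}$ is unate by exhibiting, for each coordinate $a \in [d']$, a single direction in which $f_{\bT}$ is monotone along every $a$-edge of $\{0,1\}^{d'}$. The natural choice of direction is: nondecreasing for $a \in [d']\setminus R$, and having the sign of $\alpha_a$ for $a \in R$. The verification splits into three cases depending on whether $a$ lies in $[d]\setminus R$, in $R$, or in $\{d+1,\ldots,d'\}$. In the first two cases, $a$-edges stay inside a single subcube $C_i$, and the function restricted to that subcube is a fixed affine function of the first $d$ coordinates; in the third case, $a$-edges cross between subcubes.

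For $a \in [d]\setminus R$, the coefficient of $x_a$ in $f_{\bT}$ restricted to $C_i$ is $3^a>0$, so every $a$-edge is increasing. For $a \in R$, the coefficient of $x_a$ in $f_{\bT}$ restricted to $C_i$ equals $\alpha_a \cdot 3^a$ when $a = r_i$ and $0$ otherwise, so each $a$-edge is either constant or has sign $\alpha_a$. The crucial observation here, and the whole point of how the $\Yes$ distribution is defined, is that this sign is $\alpha_a$, depending only on $a$ and not on the subcube $i$; this is what guarantees that the direction is consistent across all $a$-edges of the hypercube.

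The only delicate case is $a \in \{d+1,\ldots,d'\}$. For such an edge $(x,y)$ with $x \in C_i$ and $y \in C_{i'}$, one computes
\[
f_{\bT}(y)-f_{\bT}(x) \;=\; 3^a + \alpha_{r_{i'}}\, y_{r_{i'}}\, 3^{r_{i'}} - \alpha_{r_i}\, x_{r_i}\, 3^{r_i}.
\]
Since $r_i, r_{i'} \in R \subseteq [d]$, the last two terms together have absolute value at most $2\cdot 3^d < 3^{d+1} \leq 3^a$, so $f_{\bT}(y)>f_{\bT}(x)$. This is essentially the only obstacle, and it is defused by the geometrically spaced coefficients $3^b$, which make the contribution of coordinate $a$ dominate any possible cancellation from lower-indexed coordinates. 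Once the three cases are verified, each coordinate has a consistent direction, so $f_{\bT}$ is unate.
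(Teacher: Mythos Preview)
Your proof is correct and follows the same core idea as the paper's: for each coordinate, show that the sign of the change along every edge in that direction is consistent. The paper's argument is a terse two-line observation that the coefficient of $x_b$ is $3^b$ for $b\in[d']\setminus R$ and either $0$ or $\alpha_b 3^b$ for $b\in R$; it implicitly treats $f_{\bT}$ as if it were globally linear. You make the same observation for $a\le d$, but you go further and explicitly handle the coordinates $a\in\{d+1,\ldots,d'\}$, where an $a$-edge crosses from one subcube $C_i$ to another $C_{i'}$ and the piecewise-linear pieces differ. The paper's proof glosses over this case (the needed inequality is essentially what the paper later proves as Claim~\ref{clm:inv}), so your version is in fact more complete while remaining the same approach.
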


\begin{proof}
Fix some $f \in \supp(\Yes)$. Since $f$ restricted to any $C_i$ is linear,
it suffices to argue that the coefficient of any $x_b$ (when it is non-zero) has the same sign, in all $C_i$'s.
For any $b \in [d'] \setminus R$, the coefficient of $x_b$ is always $3^b$. 
If $b \in R$, then the coefficient is either $0$ or $3^b\alpha_b$. 
\end{proof}

\begin{claim}\label{clm:no}
A function $g \sim \No$ is $\frac{1}{8}$-far from unate with probability at least $ 9/10$.
\end{claim}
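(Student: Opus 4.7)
The plan is to quantify the distance from $g \sim \No$ to unateness by counting direction conflicts created by the independent signs $\beta_i$. For each $b \in R$, let $S_b := \{i \in [m] : r_i = b\}$ and
\[
B_b \;:=\; \min\bigl(|\{i \in S_b : \beta_i = +1\}|,\; |\{i \in S_b : \beta_i = -1\}|\bigr).
\]
I will prove the claim through two sub-claims: (i) deterministically, $\dist(g, \text{unate}) \ge \bigl(\sum_{b \in R} B_b\bigr)/(2d)$, and (ii) $\Pr\bigl[\sum_{b \in R} B_b \ge d/4\bigr] \ge 9/10$. Combining them gives the bound, noting that $|\{0,1\}^{d'}| = d \cdot 2^d$.

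For step (i), fix any unate $u$ with directions $\sigma \in \{\pm 1\}^{[d']}$. Restricted to subcube $C_i$, the function $g$ is linear in $(x_1,\dots,x_d)$ with coefficient $\beta_i 3^{r_i}$ on $x_{r_i}$; since $3^{r_i}$ exceeds the sum of all smaller coefficients of the linear form, every one of the $2^{d-1}$ $r_i$-edges of $C_i$ has $g$-difference exactly $\beta_i 3^{r_i}$, i.e.\ is oriented in direction $\beta_i$. Call a subcube $i$ \emph{bad} if $\beta_i \ne \sigma_{r_i}$. In a bad subcube, every $r_i$-edge violates $u$'s direction; these edges form a perfect matching of the $2^d$ vertices of $C_i$, so the set of points where $u \ne g$ must be a vertex cover of this matching, yielding at least $2^{d-1}$ modifications per bad subcube. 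Since subcubes are disjoint and each $i$ is bad for only its unique action dimension $r_i$, the total modification count is at least $2^{d-1}$ times the number of bad subcubes; minimizing this over $\sigma$ yields exactly $\sum_{b \in R} B_b$. Dividing by $d \cdot 2^d$ yields (i).

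For step (ii), condition on the shared randomness $\bS = (k, R, \{r_i\})$. The variables $\{B_b : b \in R\}$ are then independent, since they depend on disjoint subsets of the independent $\beta_i$'s. Using the standard bound $E[|X - s/2|] \le \sqrt{s}/2$ for $X \sim \mathrm{Bin}(s, 1/2)$, we have $E[B_b \mid \bS] \ge |S_b|/2 - \sqrt{|S_b|}/2$ and $\mathrm{Var}(B_b \mid \bS) \le |S_b|/4$. Summing over $b \in R$, using $\sum_b |S_b| = d$, Cauchy--Schwarz, and $|R| = 2^k \le \sqrt{d}$ (since $k \le (\log_2 d)/2$), we obtain
\[
E\Bigl[\sum_{b \in R} B_b \,\Big|\, \bS\Bigr] \;\ge\; \frac{d}{2} - \frac{d^{3/4}}{2}, \qquad \mathrm{Var}\Bigl(\sum_{b \in R} B_b \,\Big|\, \bS\Bigr) \;\le\; \frac{d}{4}.
\]
Chebyshev's inequality then yields $\Pr\bigl[\sum_b B_b \ge d/4 \mid \bS\bigr] \ge 9/10$ for $d$ above an absolute constant; smaller $d$ is absorbed into the asymptotic $\Omega(d \log d)$ bound of Theorem~\ref{thm:non-adap-lb}.

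The main obstacle is step (i): one must confirm that every $r_i$-edge inside $C_i$ genuinely has sign $\beta_i$ (which is exactly why the construction uses the geometric base-$3$ weighting) and that repairs inside one bad subcube cannot simultaneously repair violations elsewhere (which follows from subcube-disjointness and from each $C_i$ being bad in at most one dimension). Once (i) is established, step (ii) is a routine application of independence and Chebyshev.
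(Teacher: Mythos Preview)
Your proof is correct. Step (i) matches the paper's argument essentially verbatim: the paper also fixes a direction vector, observes that in each subcube $C_i$ all $r_i$-edges have sign $\beta_i$, and counts a cost of (subcube size)$/2$ per subcube whose $\beta_i$ disagrees with the chosen direction in dimension $r_i$; summing and minimizing over directions yields $\tfrac{s}{2}\sum_{r\in R}\min(|A^+_r|,|A^-_r|)$, which is your $\tfrac{2^{d-1}}{d\cdot 2^d}\sum_b B_b$.

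The difference is in step (ii). The paper conditions on $k,R$ and applies two rounds of Chernoff with a union bound: first it shows each $|A_r|\ge \sqrt{d}/2$ (using $|R|\le\sqrt{d}$ so $\EX[|A_r|]\ge\sqrt{d}$), and then, conditioned on this, shows $\min(|A^+_r|,|A^-_r|)\ge |A_r|/4$ for every $r$ individually; summing gives $\sum_r B_r\ge m/4=d/4$. You instead bound the sum $\sum_b B_b$ directly via its first two moments: Jensen and Cauchy--Schwarz give $\EX[\sum_b B_b\mid\bS]\ge d/2-d^{3/4}/2$, independence and the crude bound $\mathrm{Var}(B_b)\le |S_b|/4$ give total variance $\le d/4$, and Chebyshev finishes. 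Both routes exploit $|R|\le\sqrt{d}$ in the same way and both work only for $d$ above an absolute constant. Your approach avoids the per-coordinate union bound and the intermediate ``each $|A_r|$ is large'' step, at the price of a slightly less explicit constant; the paper's approach gives the stronger pointwise conclusion that \emph{every} dimension in $R$ contributes its share, which is not needed here but is conceptually a bit sharper.
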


\begin{proof}
	Fix $\bT = \cT$. Condition on any choice of $k$ and $R$. Note that $|R| \leq \sqrt{d}$.
	For any $r \in R$, let $A_r = \{ i : r_i = r \}$ denote the set of subcube indices with $r_i = r$.	
	Observe that $\EX[|A_r|] \geq m/\sqrt{d} = \sqrt{d}$. By a Chernoff bound and union bound, for all $r \in R$,
	we have $|A_r| \geq \sqrt{d}/2$	with probability at least $1 - d\exp(-\sqrt{d}/8)$.
	
	Condition on the event that $|A_r| \geq \sqrt{d}/2$ for all $r \in R$.
	For each $i \in A_r$, there is a random choice of $\beta_i$.
	Partition $A_r$ into $A^+_r$ and $A^-_r$,
	depending on whether $\beta_i$ is $+1$ or $-1$. Again, by a Chernoff bound
	and union bound, for all $r \in R$, we have $\min(|A^+_r|,|A^-_r|) \geq |A_r|/4$ with probability at least $1-d\exp(-\sqrt{d}/32)$.
	Thus, we can assume the above event holds with probability at least $1-d(\exp(-\sqrt{d}/8)+\exp(-\sqrt{d}/32))$, which is at least $9/10$, for large enough $d$ and
	for any choice of $k$ and $R$. 
	
	Denote the size of any subcube $C_i$ by $s$.
	In $g_{\cT}$, for all $i \in A^+_r$, all $r$-edges in $C_i$ are increasing, whereas, for all $j \in A^-_r$, all $r$-edges in $C_j$ are decreasing. 
To make $g_{\cT}$ unate, we must make all these edges have the same direction (i.e., increasing or decreasing).
	This requires modifying at least $\frac{s}{2} \cdot \min(|A^+_r|,|A^-_r|) \geq \frac{s|A_r|}{8}$ values in $g_{\cT}$. Summing over all $r$, we need to change at least $\frac{s}{8}\sum_r |A_r|$
	values. Since the $A_r$'s partition the set of subcubes, this corresponds to at least a $\frac{1}{8}$-fraction of the domain.
\end{proof}

\subsection{From Functions to Signed Graphs that are Hard to Distinguish}
For convenience, denote $x \prec y$ if $\dec(x) < \dec(y)$. Note that $\prec$ forms a total ordering
on $\{0,1\}^{\dd}$.
Given $x \prec y\in \{0,1\}^\dd$ and a function $h:\{0,1\}^\dd\to \R$, define
$\sgn_h(x,y)$ to be $1$ if $h(x) < h(y)$, $0$ if $h(x) = h(y)$,
and $-1$ if $h(x) > h(y)$.

Any deterministic, nonadaptive, comparison-based tester is defined as follows:
It makes a set of queries $Q$ and decides whether or not the input function $h$ is unate
depending on the $|Q|\choose{2}$-comparisons in $Q$.
More precisely, for every pair $(x,y) \in Q \times Q$,
$x \prec y$, we insert
an edge labeled with $\sgn_h(x,y)$. Let this signed graph be called $G^Q_h$.
Any nonadaptive, comparison-based algorithm can be described
as a partition of the universe of all signed graphs over $Q$  into $\cG_Y$ and $\cG_N$.
The algorithm accepts the function $h$ iff $G^Q_h \in \cG_Y$.

Let $\bG^Q_Y$ be the distribution of the signed graphs $G^Q_h$ when $h\sim \Yes$. Similarly, define $\bG^Q_N$ when $h\sim \No$. Our main technical theorem is \Thm{tv}, which is proved in \Sec{thm-tv-proof}.
\begin{theorem}\label{thm:tv}
For small enough $\delta > 0$ and large
enough $d$, if $|Q| \leq \delta d\log d$, then $\|\bG^Q_Y - \bG^Q_N\|_{\mathrm{TV}} = O(\delta)$.
\end{theorem}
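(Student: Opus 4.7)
}
I would fix any query set $Q$ with $|Q| \le \delta d \log d$ and reduce the TV distance to a single ``collision'' event on the shared randomness $\bS := (k, R, \{r_i\}_{i \in [m]})$, then bound that event using a Binomial tail argument and a depth-telescoping estimate produced by the exponentially growing coefficients $3^b$.

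The first phase is purely structural. Because of the $3^b$ coefficients, the sign of any comparison $(x, y)$ with $x \prec y$ is determined by the largest coordinate in $H := H(x,y)$ whose coefficient in $f_\bT$ (resp.\ $g_\bT$) is nonzero. Pairs across distinct subcubes have sign $+1$ deterministically, since the top $\log_2 d$ bits dominate and lie in $[d'] \setminus R$. For $x, y \in C_i$, I call the pair \emph{revealing} if the dominant coordinate is $r_i$, equivalently $r_i \in H$ and $H \cap (r_i, \infty) \subseteq R$; a revealing pair has sign $\alpha_{r_i}$ under $\Yes$ and $\beta_i$ under $\No$, while all other same-subcube signs are deterministic functions of $\bS$. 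Call a subcube \emph{revealed} if it contains a revealing pair in $Q$. Conditioned on $\bS$, if the revealed subcubes have pairwise distinct $r_i$ values, their signs form a vector of independent uniform $\pm 1$ coins in both $\Yes$ and $\No$, so the conditional signed-graph distributions coincide. Hence
\[
\|\bG^Q_Y - \bG^Q_N\|_{\mathrm{TV}} \;\le\; \Pr_\bS\!\left[\exists\, r \in R,\, i \ne j : r_i = r_j = r \text{ and both } i, j \text{ are revealed}\right].
\]

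The second phase bounds this collision probability. Condition on $k$ and $R$, so the $r_i$'s are i.i.d.\ uniform on $R$. With $Y_{i,r}(R) := \mathbf{1}[\exists\, p \in \binom{Q \cap C_i}{2} : r \in H_p,\, H_p \cap (r,\infty) \subseteq R]$ and $M_r(R) := \sum_i Y_{i,r}(R)$, the count of subcubes revealed via dimension $r$ is $\mathrm{Bin}(M_r(R), 2^{-k})$, and a Binomial tail bound gives
\[
\Pr[\mathrm{Coll} \mid k, R] \;\le\; \sum_{r \in R} \min\!\left(1,\ M_r(R)^2/2^{2k+1}\right).
\]
The cap at $1$ is essential against adversaries concentrating queries so that $M_r$ may be as large as $m$. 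The central technical input is a depth-telescoping estimate: for any pair with $H = \{h_1 < \cdots < h_s\}$, $\Pr_R[\{h_\ell, \ldots, h_s\} \subseteq R] \le (2^k/d)^{s-\ell+1}$, and summing over $r$ the per-pair contributions telescopes to at most $2 \cdot 2^k/d$ (using $2^k \le \sqrt d$). Combining this telescoping estimate with the min-cap---by splitting $R$ into heavy dimensions $\{r : M_r \ge 2^k\}$ (handled via Markov) and light ones (where $M_r^2/2^{2k} \le M_r/2^k$)---should yield a per-$k$ expected bound of the form $\min(1,\ |Q|^2/(d^2 \cdot 2^k))$.

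Averaging over the uniform $k \in [\tfrac{1}{2}\log_2 d]$, I would split into ``bad'' $k$'s where the $\min$ saturates at $1$ and ``good'' ones where it decays geometrically in $2^k$. The bad $k$'s form a set of size $O(\log \log d + \log(1/\delta))$, contributing $O((\log \log d + \log(1/\delta))/\log d)$ after the $1/K$ weighting; the good $k$'s contribute a geometric sum telescoping to $O(1/\log d)$. Both contributions are $o_d(1)$ and in particular at most $C\delta$ once $d$ is sufficiently large relative to $1/\delta$, proving the theorem. The main obstacle is this expectation bound on $\sum_r \min(1, M_r^2/2^{2k})$: naive moment-method estimates give $O(|Q|^2/d)$-type bounds that are vacuous for $|Q| \gtrsim \sqrt d$, because they overcount in the concentrated-adversary regime where a few $r$'s carry all the $M_r$ mass. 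The savings come from the joint use of the $\min$ cap, the depth-telescoping that exploits $|R| = 2^k \le \sqrt d$, and the averaging over $k$ producing the crucial $\Theta(1/\log d)$ factor---whose careful orchestration is the crux of the proof.
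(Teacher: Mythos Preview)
Your first phase---the structural reduction to a ``collision'' event over the shared randomness $\bS$---is correct and matches the paper exactly (the paper calls the deterministic-sign argument Claims~\ref{clm:inv} and~\ref{clm:interesting}, and the coupling argument Lemma~\ref{lem:zero}). The gap is entirely in your second phase.

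The problem is the step ``combining this telescoping estimate with the min-cap \ldots\ should yield a per-$k$ expected bound of the form $\min(1,\ |Q|^2/(d^2 \cdot 2^k))$.'' This does not follow from what you wrote, and in fact the argument as sketched gives only $\Pr[\mathrm{Coll}\mid k] \le \EX_R[\sum_r M_r(R)]/2^k$. Your per-pair telescoping bounds $\EX_R[\sum_r M_r(R)]$ by $(\#\text{pairs})\cdot O(2^k/d)$, and since $M_r$ is a union over pairs (not a sum), you have no mechanism to replace $\#\text{pairs}$---which can be $\Theta(|Q|^2)$---by something linear in $|Q|$. That leaves you with $\Pr[\mathrm{Coll}\mid k]\lesssim |Q|^2/d$, which is useless for $|Q|\approx d\log d$. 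The deeper issue is that $M_r(R)$ depends on $R$ (and hence on $k$), so you cannot run the paper's clean per-$r$ balancing $\min(2^k/d,\,a_r^2/(2^kd))$ across $k$; you would need deterministic quantities to sum over.

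The paper fixes exactly this with two ideas you are missing. First, it splits off an \emph{abort event} $\cA=\{\exists\,(x,y): \capt(x,y)\subseteq R\}$, where $\capt(x,y)$ is the top $5$ coordinates of $H(x,y)$; this has probability $O(d^{-1/4})$ and, crucially, on $\cA^c$ any revealing $r$ must lie in $\capt(x,y)$. Thus $M_r(R)$ is replaced by the \emph{deterministic} count $a_r:=|\{i:r\in\capt(Q_i)\}|$, independent of $R$ and $k$. Second, a short combinatorial lemma (Lemma~\ref{lem:comb}: for any point set $V$, $|\capt_c(V)|\le c(|V|-1)$, proved by an acyclicity argument on $c$ colored graphs) gives $\sum_r a_r=\sum_i|\capt(Q_i)|\le 5|Q|$---linear, not quadratic. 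With deterministic $a_r$'s summing to $O(|Q|)$, the paper then bounds $\Pr[\cC_r\mid k]\le \min(2^k/d,\,8a_r^2/(2^kd))$, groups the $r$'s by dyadic level of $a_r$, and sums over $k$ to get $\Pr[\cC]=O\!\big(\sum_r a_r/(d\log d)\big)=O(\delta)$. Both the truncation via $\cA$ and the linear-in-$|Q|$ combinatorial bound are essential; neither your min-cap nor your depth-telescoping can substitute for them.
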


\noindent We now prove that \Thm{tv} implies \Thm{non-adap-lb}, the main lower bound.

\begin{proof}[{\bf Proof of Theorem~\ref{thm:non-adap-lb}}]
	Consider the distribution over functions where with probability $1/2$, we sample from $\Yes$ and with the remaining probability we sample from $\No$.
	By \Thm{CS14} and Yao's minimax principle,
it suffices to prove that any deterministic, nonadaptive, comparison-based tester making
at most $\delta d\log d$ queries (for small enough $\delta > 0$) errs with probability at least $ 1/3$. Now, note that
\begin{align*}
\Pr[\textrm{error}] = \frac{1}{2} \cdot \Pr_{h\sim \Yes} [G^Q_h \in \cG_N] 
	+ \frac{1}{2} \cdot \Pr_{h\sim \No} [G^Q_h \in \cG_Y \text{ and } h \ \textrm{is $\frac{1}{8}$-far from unate}].
\end{align*}

\noindent By \Thm{tv}, the first term is at least $\frac{1}{2}\cdot \left(\Pr_{h\sim \No} [G^Q_h \in \cG_N]  - O(\delta) \right)$, and by \Clm{no},
	the second term is at least $\frac{1}{2}\cdot \left(\Pr_{h\sim \Yes} [G^Q_h \in \cG_Y] -O(\delta) - \frac{1}{10}\right)$. Summing them up, we get
$\Pr[\textrm{error}] \geq \frac{1}{2} - O(\delta) - \frac{1}{20}$ which is at least $\frac 1 3$ for small enough $\delta$.
\end{proof}

The proof of \Thm{tv} is naturally tied to the behavior of $\sgn_h$.
Ideally, we would like to say that $\sgn_h(x,y)$ is almost identical
regardless of whether $h \sim \Yes$ or $h \sim \No$. Towards this,
we determine exactly the set of pairs $(x,y)$ that potentially
differentiate $\Yes$ and $\No$.

\begin{claim}\label{clm:inv}
For all $h \in \supp(\Yes) \cup \supp(\No)$, for all $x \in C_i$ and $y \in C_j$ such that $i < j$, we have $\sgn_h(x,y) = 1$.
\end{claim}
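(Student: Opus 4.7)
The plan is to exploit the super-increasing nature of the coefficients $3^b$: the ``high-order'' part of $h$, coming from the top $\log_2 d$ bits that index the subcube, will strictly dominate every possible ``low-order'' contribution from coordinates in $[d]$. The key observation enabling this is that $R \subseteq [d]$, so the top bits $b \in \{d+1,\ldots,\dd\}$ appear in exactly the same way in both $f_{\bT}$ and $g_{\bT}$. Concretely, for any $h \in \supp(\Yes) \cup \supp(\No)$ and any $z \in C_i$, I would write
\[ h(z) = H(z) + \ell(z), \qquad H(z) := \sum_{b=d+1}^{\dd} z_b\, 3^b, \quad \ell(z) := \sum_{b \in [d] \setminus R} z_b\, 3^b + \sigma\, z_{r_i}\, 3^{r_i}, \]
where $\sigma \in \{-1,+1\}$ equals $\alpha_{r_i}$ if $h = f_{\bT}$ and $\beta_i$ if $h = g_{\bT}$.

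The next step is to bound $H(y) - H(x)$ from below. Given $x \in C_i$ and $y \in C_j$ with $i < j$, the encoding of subcube indices (with $x_{\dd}$ as the most significant bit) forces $x$ and $y$ to differ on at least one of the top $\log_2 d$ coordinates. Let $b^{\star}$ be the largest such coordinate; since $i<j$, we must have $x_{b^{\star}}=0$ and $y_{b^{\star}}=1$. Using $\sum_{b=d+1}^{b^{\star}-1} 3^b = (3^{b^{\star}} - 3^{d+1})/2$, this yields
\[ H(y) - H(x) \,\geq\, 3^{b^{\star}} - \sum_{b=d+1}^{b^{\star}-1} 3^b \,=\, \frac{3^{b^{\star}} + 3^{d+1}}{2} \,\geq\, 3^{d+1}. \]

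Finally, I would bound the low-order difference. Since $r_i \in R \subseteq [d]$, for any $z \in C_i$ and any sign $\sigma \in \{-1,+1\}$,
\[ |\ell(z)| \,\leq\, \sum_{b \in [d]\setminus R} 3^b + 3^{r_i} \,\leq\, \sum_{b=1}^{d} 3^b \,=\, \frac{3^{d+1}-3}{2}, \]
so $|\ell(y) - \ell(x)| \leq 3^{d+1}-3$. Combining the two estimates, $h(y) - h(x) \geq 3^{d+1} - (3^{d+1}-3) = 3 > 0$, giving $\sgn_h(x,y) = 1$. The argument is really just a careful super-increasing estimate, so there is no genuine obstacle; the only thing to watch is that the high-order gap of at least $3^{d+1}$ strictly dominates the worst-case low-order cancellation of $3^{d+1}-3$. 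Crucially, the bound on $H(y)-H(x)$ is insensitive to whether $h$ was produced by $\Yes$ or $\No$, which is precisely why the claim holds uniformly on $\supp(\Yes) \cup \supp(\No)$.
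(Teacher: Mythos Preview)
Your proof is correct and follows essentially the same approach as the paper: both exploit the geometric (super-increasing) structure of the base-$3$ coefficients to show that the contribution from the most significant differing bit strictly dominates all lower-order terms, yielding $h(y)-h(x)\geq 3>0$. The paper packages this slightly differently, writing $h(x)=\sum_{b>d}3^b x_b + \sum_{b\leq d} c_b(x)\,3^b x_b$ with $c_b\in\{-1,0,+1\}$ and bounding $h(y)-h(x)\geq 3^q - 2\sum_{b<q}3^b$ where $q$ is the top differing bit, whereas you explicitly split into $H(z)+\ell(z)$ and bound each piece separately; the arithmetic is the same. One small notational point: your definition of $\ell(z)$ is written for $z\in C_i$, but you apply it to $y\in C_j$ as well (where it should use $r_j$ and the corresponding sign); since your uniform bound $|\ell(z)|\leq \sum_{b=1}^d 3^b$ holds regardless of the subcube, this does not affect the argument, but you may want to clarify that $\ell$ depends on the subcube containing $z$.
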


\begin{proof}
	For any $h$, we can write $h(x)$ as $\sum_{b > d} 3^b \cdot x_b + \sum_{b \leq d} c_b(x) \cdot 3^b \cdot x_b$,
	where $c_b: \{0,1\}^{d'} \to \{-1,0,+1\}$. Thus,
	$h(y) - h(x) = \sum_{b > d} 3^b(y_b - x_b) + \sum_{b \leq d} 3^b(c_b(y) \cdot y_b - c_b(x) \cdot x_b)$.
	Recall that $x \in C_i , y \in C_j$, and $j > i$. Let $q$ denote
	the most significant bit of difference between $x$ and $y$. We have
	$q > d$, and $y_q = 1$ and $x_q = 0$. Note that for $b \leq d$, $|c_b(y) \cdot y_b - c_b(x) \cdot x_b)| \leq 2$.
	Thus, $h(y) - h(x) \geq 3^q - 2\sum_{b < q} 3^b > 0$.
\end{proof}

\noindent Thus, comparisons between points in different subcubes reveal no information
about which distribution $h$ was generated from. Therefore, the ``interesting'' pairs that can distinguish whether $h \sim \Yes$ or $h \sim \No$ must lie in the same subcube.
The next claim shows a further criterion that is needed for a pair to be interesting.
We first define another notation.

\begin{definition} \label{def:t} 
For any setting of the shared randomness $\bS$,
subcube $C_i$, and points $x,y \in C_i$, we define $\coord{i}{\bS}(x,y)$
to be the most significant coordinate of difference (between $x,y$)
in $([d] \setminus R) \cup \{r_i\}$.
\end{definition}

\noindent Note that $\bS$ determines $R$ and $\{r_i\}$.
For any $\bT$ that extends $\bS$ and any function,
the restriction to $C_i$ is unaffected by the coordinates in $R \setminus r_i$.
Thus, $\coord{i}{\bS}(x,y)$ is the first coordinate of difference that is influential
in $C_i$.

\begin{claim}\label{clm:interesting}
Fix some $\bS$, subcube $C_i$, and points $x,y \in C_i$.
Let $c = \coord{i}{\bS}(x,y)$, and assume $x \prec y$.
For any $\bT$ that extends $\bS$:
\begin{compactitem}
    \item If $c \neq r_i$, then $\sgn_{f_{\bT}}(x,y) = \sgn_{g_{\bT}}(x,y) = 1$.
    \item If $c = r_i$, $\sgn_{f_{\bT}}(x,y) = \alpha_{c}$ and $\sgn_{g_{\bT}}(x,y) = \beta_i$.
\end{compactitem}
\end{claim}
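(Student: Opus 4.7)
The plan is to prove the claim by direct computation, exploiting the geometric growth of the coefficients $3^b$ in the definitions of $f_{\bT}$ and $g_{\bT}$: the largest differing ``effective'' coordinate dominates the sign of any difference. The three-step outline is (1)~write $f_{\bT}(y)-f_{\bT}(x)$ and $g_{\bT}(y)-g_{\bT}(x)$ as signed sums ranging only over the effective differing coordinates in $([d]\setminus R)\cup\{r_i\}$; (2)~show the leading term at coordinate $c$ dominates all the others; (3)~use $x\prec y$ to pin down the sign of that leading term.

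For step~(1), because $x,y\in C_i$ they agree on every coordinate $b>d$, so those summands vanish, and by construction coordinates $b\in R\setminus\{r_i\}$ do not appear in $f_{\bT}$ at all. What remains is
\[
f_{\bT}(y)-f_{\bT}(x) \;=\; \sum_{b\in [d]\setminus R}(y_b-x_b)\,3^b \;+\; \alpha_{r_i}(y_{r_i}-x_{r_i})\,3^{r_i},
\]
with an identical formula for $g_{\bT}(y)-g_{\bT}(x)$ except that $\alpha_{r_i}$ is replaced by $\beta_i$. By definition of $\coord{i}{\bS}$, the index $c$ is the largest one contributing a nonzero summand. For step~(2), the combined magnitude of all lower-order summands is at most $2\sum_{b<c}3^b < 3^c$, whereas the leading term at $c$ has magnitude exactly $3^c$; hence the sign of $f_{\bT}(y)-f_{\bT}(x)$ equals the sign of its coefficient at $c$, and similarly for $g_{\bT}$. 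That coefficient is $y_c-x_c$ in Case~1 ($c\in[d]\setminus R$), and $\alpha_c(y_c-x_c)$ for $f_{\bT}$ respectively $\beta_i(y_c-x_c)$ for $g_{\bT}$ in Case~2 ($c=r_i$).

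The final step, and the main obstacle, is to deduce from $x\prec y$ that $y_c-x_c=+1$. This is immediate when $c$ coincides with the overall most significant coordinate on which $x$ and $y$ differ, since then $x\prec y$ forces $x_c=0$ and $y_c=1$; the mildly delicate case is when the overall top differing bit lies in the function-irrelevant set $R\setminus\{r_i\}$, and one must confirm via the ordering conventions that $y_c>x_c$ still holds in the regime the claim is applied to. Once $y_c-x_c=+1$ is in hand, substituting into step~(2) gives $\sgn_{f_{\bT}}(x,y)=\sgn_{g_{\bT}}(x,y)=1$ in Case~1, and $\sgn_{f_{\bT}}(x,y)=\alpha_c$, $\sgn_{g_{\bT}}(x,y)=\beta_i$ in Case~2, as desired. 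All the remaining estimates are routine geometric-series bookkeeping.
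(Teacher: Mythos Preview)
Your approach matches the paper's: expand $f_{\bT}(y)-f_{\bT}(x)$ (and likewise $g_{\bT}(y)-g_{\bT}(x)$), observe that the term at coordinate $c$ dominates all lower-order terms via the geometric bound $2\sum_{b<c}3^b<3^c$, and read off the sign. The paper carries out exactly these steps, only more tersely.

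You are right to flag step~(3) as delicate; in fact $y_c-x_c=+1$ \emph{cannot} be deduced from $x\prec y$ in general, and the claim as literally stated is false. Take $d=4$, $R=\{3,4\}$, $r_i=3$, and $x,y\in C_i$ with low-order bits $(x_1,\dots,x_4)=(1,0,0,0)$ and $(y_1,\dots,y_4)=(0,0,0,1)$. Then $x\prec y$; the differing coordinates are $\{1,4\}$; since $([d]\setminus R)\cup\{r_i\}=\{1,2,3\}$ we get $c=1\ne r_i$; yet $y_c-x_c=-1$ and $f_{\bT}(y)-f_{\bT}(x)=g_{\bT}(y)-g_{\bT}(x)=-3$, so both signs are $-1$, not $+1$. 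An analogous example with $(x_1,\dots,x_4)=(0,0,1,0)$ and $(y_1,\dots,y_4)=(0,0,0,1)$ gives $c=r_i$ but signs $-\alpha_c$ and $-\beta_i$. The paper's own proof simply asserts ``$x_c=0$ and $y_c=1$'' without justification, so it has the same gap.

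The right repair is not to try to establish $y_c-x_c=+1$ but to state the weaker conclusion your Steps~(1)--(2) already prove: in Case~1 the common sign is $y_c-x_c\in\{\pm1\}$, determined by $\bS$ and the pair alone; in Case~2 the signs are $(y_c-x_c)\,\alpha_{r_i}$ and $(y_c-x_c)\,\beta_i$ respectively. This is all Lemma~\ref{lem:zero} needs. Type~(ii) pairs receive an $\bS$-determined label (whether it is $+1$ or $-1$ is irrelevant, since it is the same under $\Yes$ and $\No$), and for type~(iii) pairs in $Q_i$ every label is determined by the single uniform bit $\alpha_{r_i}$ (respectively $\beta_i$), up to a per-pair sign that depends only on $\bS$; hence the $2^{-|I|}$ computation goes through unchanged.
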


\begin{proof}
Assume $x \in C_i$.
	Recall that $f_{\bT}(x) = \sum_{b \in [d'] \setminus R} x_b 3^b + \alpha_{r_i}\cdot x_{r_i} 3^{r_i}$ and
	$g_{\bT}(x) = \sum_{b \in [d'] \setminus R} x_b 3^b + \beta_i \cdot x_{r_i} 3^{r_i}$.
	
	First, consider the case $c \neq r_i$. Thus, $c \notin R$.
	Observe that $x_b = y_b$, for all $b > c$ such that $b \notin R$.
	Furthermore, $x_c = 0$ and $y_c = 1$. Thus,
	$f_{\bT}(y) - f_{\bT}(x) > 3^c - \sum_{b < c} 3^b > 0$.
	An identical argument holds for $g_{\bT}$.
	
	Now, consider the case $c = r_i$. Thus,
	$f_{\bT}(y) - f_{\bT}(x) = \alpha_c 3^c + \sum_{b < c, b \notin R}
	(y_b - x_b) 3^b$. Using the same geometric series arguments
	as above, $\sgn_{f_{\bT}}(x,y) = \alpha_c$.
	An analogous argument shows that $\sgn_{g_{\bT}}(x,y) = \beta_i$
\end{proof}

\subsection{Proving \Thm{tv}: Good and Bad Events} \label{sec:thm-tv-proof}
For a given $Q$, we first identify certain ``bad'' values for $\bS$,
on which $Q$ could potentially distinguish between $f_{\bS}$ and $g_{\bS}$.
We will
prove that the probability
of a bad $\bS$ is small for a given $Q$.
Furthermore, we show that $Q$ cannot distinguish
between $f_{\bS}$ and $g_{\bS}$ for any good $\bS$.
We set up some definitions.

\begin{definition} \label{def:cap}
	Given a pair $(x,y)$, define $\capt(x,y)$ to be the 5 most significant coordinates\footnote{There is nothing special about the constant $5$. It just needs to be sufficiently large.}
	in which they differ.
We say $(x,y)$ {\em captures} these coordinates.
For any set $S\subseteq \{0,1\}^{\dd}$, define $\capt(S) := \bigcup_{x,y \in S} \capt(x,y)$
to be the coordinates captured by the set $S$.
\end{definition}

\noindent Fix any $Q$. We set $Q_i := Q \cap C_i$.
We define two bad events for $\bS$.
\begin{itemize}
	\item Abort Event $\cA$: There exists $x,y \in Q$ with
    $\capt(x,y) \subseteq R$.
	\item Collision Event $\cC$: There exists $i,j \in [d]$
     with $r_i = r_j$, $r_i\in \capt(Q_i)$ and $r_j\in \capt(Q_j)$.
\end{itemize}

\noindent If the abort event doesn't occur, then for any pair $(x,y)$, the sign $\sgn_h(x,y)$ is determined by $\capt(x,y)$ for any $h\in \supp(\Yes)\cup \supp(\No)$.
The heart of the analysis lies in \Thm{bad}, which states that the bad events happen rarely.
\Thm{bad} is proved in~\Sec{bad}.

\begin{theorem} \label{thm:bad} If $|Q| \leq \delta d\log d$,
then $\Pr[\cA \cup \cC] = O(\delta)$.
\end{theorem}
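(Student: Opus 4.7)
My plan is to union-bound $\Pr[\cA\cup\cC]\le\Pr[\cA]+\Pr[\cC]$ and show each term is $O(\delta)$ (in fact $o(\delta)$ for $\cA$). Two elementary per-event probabilities from the construction of $R$ and the $r_i$'s will drive both bounds: for any fixed $5$-element subset $S\subseteq[d]$, $\Pr[S\subseteq R]=O(d^{-5/2}/\log d)$ (since $|R|=2^k\le\sqrt d$ and $k$ is uniform over $\Theta(\log d)$ values, so the dominant term comes from $k=\tfrac12\log_2 d$); and for any fixed triple $(i,j,r)$ with $i\ne j$ and $r\in[d]$, $\Pr[r_i=r_j=r,\,r\in R]=\Theta(1/(d\log d))$ (conditional on $k$ this equals $(2^k/d)\cdot(1/2^{2k})=1/(d\cdot 2^k)$, and averaging over $k$ via the geometric sum $\sum_k 2^{-k}$ gives $\Theta(1/(d\log d))$).

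For $\Pr[\cA]$: any pair $(x,y)$ with $\capt(x,y)\subseteq R$ must lie in the same subcube, since $R\subseteq[d]$ and a difference in coordinates $[d+1,d']$ would place some element of $\capt(x,y)$ outside $R$. For each such pair, $|\capt(x,y)|\le 5$, so the estimate above and a union bound over the at most $\binom{|Q|}{2}$ same-subcube pairs in $Q$ yield $\Pr[\cA]=O(|Q|^2\cdot d^{-5/2}/\log d)=O(\delta^2\log d/\sqrt d)=o(\delta)$.

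For $\Pr[\cC]$: for each $r\in[d]$ let $n_r=|\{i:r\in\capt(Q_i)\}|$ and let $X_r$ count the subcubes $i$ with $r_i=r$ and $r\in\capt(Q_i)$, so that $\cC=\{\exists r\in R:X_r\ge 2\}$. Given $r\in R$ and $k$, $X_r\sim\mathrm{Bin}(n_r,1/2^k)$, which yields the hybrid bound $\Pr[X_r\ge 2\mid r\in R,k]\le\min(1,n_r^2/2^{2k+1})$. Multiplying by $\Pr[r\in R\mid k]=2^k/d$ and averaging over $k$, using the fact that $2^k$ is capped at $\sqrt d$, gives $\Pr[r\in R,\,X_r\ge 2]\le O(\min(n_r,\sqrt d)/(d\log d))$. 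A union bound over $r\in[d]$ then reduces the claim to the combinatorial estimate $\sum_{r\in[d]}\min(n_r,\sqrt d)=O(\delta d\log d)$.

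The main obstacle is establishing this last estimate. I plan to partition subcubes into ``heavy'' ones with $|Q_i|>\sqrt d$ (of which there are at most $|Q|/\sqrt d$) and ``light'' ones with $|Q_i|\le\sqrt d$ (for which $|\capt(Q_i)|\le 5\binom{|Q_i|}{2}=O(|Q_i|^2)$), and partition coordinates according to whether $n_r\le\sqrt d$ or $n_r>\sqrt d$. Double-counting $\sum_r n_r=\sum_i|\capt(Q_i)|$ against $|Q|$, combined with the $\sqrt d$-cap that comes from restricting $k$ to $\{1,\dots,\tfrac12\log_2 d\}$, should deliver the bound. The naive Markov-style estimate through $\sum_{i<j}|\capt(Q_i)\cap\capt(Q_j)|$ loses a factor of $\log d$ precisely when a small number of ``popular'' coordinates are captured in many subcubes, and sidestepping this loss via the birthday-type saving from $|R|\le\sqrt d$ is the technical crux.
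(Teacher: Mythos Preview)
Your treatment of $\Pr[\cA]$ and the probabilistic part of $\Pr[\cC]$ is correct and essentially matches the paper. In particular, your per-coordinate bound
\[
\Pr[r\in R,\ X_r\ge 2]\;=\;O\!\left(\frac{\min(n_r,\sqrt d)}{d\log d}\right)
\]
is exactly what the paper's Lemma~\ref{lem:prob} establishes (the paper organizes the same computation via a dyadic bucketing of the values $a_r=n_r$, but the content is identical). So after the union bound over $r$ you are, like the paper, reduced to showing
\[
\sum_{r\in[d]}\min(n_r,\sqrt d)\;\le\;\sum_{r\in[d]} n_r\;=\;\sum_{i}|\capt(Q_i)|\;=\;O(|Q|).
\]

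The gap is in this last combinatorial step. Your plan uses only the trivial estimate $|\capt(Q_i)|\le 5\binom{|Q_i|}{2}=O(|Q_i|^2)$ together with a heavy/light split at the threshold $|Q_i|=\sqrt d$. But this only yields $\sum_i|\capt(Q_i)|=O(\sqrt d\cdot|Q|)$: light subcubes contribute $\sum_{i\ \text{light}}|Q_i|^2\le\sqrt d\sum_i|Q_i|$, and the at most $|Q|/\sqrt d$ heavy subcubes contribute at most $d$ each. The extra $\sqrt d$ cap on $n_r$ does not recover the loss. Concretely, your method cannot rule out the configuration of $|Q|/\sqrt d$ subcubes each holding $\sqrt d$ queries and each ``capturing'' $\Theta(d)$ coordinates; in that scenario $\sum_r\min(n_r,\sqrt d)=\Theta(\sqrt d\,|Q|)$, and your bound on $\Pr[\cC]$ would be $\Theta(\delta\sqrt d)$ rather than $O(\delta)$. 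The ``birthday-type saving from $|R|\le\sqrt d$'' is already fully spent in deriving the per-coordinate bound $O(\min(n_r,\sqrt d)/(d\log d))$; it cannot be invoked a second time on the combinatorial side.

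What actually closes the argument is the paper's Lemma~\ref{lem:comb}: for any set $V$ of vectors and any $c$, one has $|\capt_c(V)|\le c(|V|-1)$, i.e.\ $|\capt(Q_i)|\le 5(|Q_i|-1)$, which is \emph{linear} rather than quadratic in $|Q_i|$. This immediately gives $\sum_i|\capt(Q_i)|\le 5|Q|$ and finishes the proof. The lemma is proved by a short but nontrivial acyclicity argument (building $c$ edge-colored graphs on $V$ and showing each is a forest), and it is the genuine crux that your proposal is missing.
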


\noindent When neither the abort nor the collision events happen,
we say $\bS$ is good for $Q$.
Next, we show that conditioned on a good $\bS$,
the set $Q$ cannot distinguish $f \sim \Yes$ from $g \sim \No$.

\begin{lemma}\label{lem:zero}
For any signed graph $G$ over $Q$,
$$\Pr_{f\sim \Yes} [G^Q_f = G|\bS \textrm{ is good}] \!=\! \Pr_{g\sim \No} [G^Q_g = G|\bS \textrm{ is good}].$$
\end{lemma}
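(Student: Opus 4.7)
The plan is to condition on the shared randomness $\bS = (k, R, \{r_i\})$ taking an arbitrary good value and show that the remaining randomness---which is $\{\alpha_b\}_{b\in R}$ under $\Yes$ and $\{\beta_i\}_{i\in[m]}$ under $\No$---produces the same distribution on the signed graph $G^Q_h$. I would classify each ordered pair $x\prec y$ in $Q$ as either \emph{deterministic} (sign forced by $\bS$ alone, identical under both distributions) or \emph{live} (sign depends on $\alpha$ or $\beta$), and then argue that the joint distribution of live signs matches on the two sides.

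By \Clm{inv}, every cross-subcube pair has sign $+1$, so such pairs are deterministic. For a same-subcube pair $(x,y)\in Q_i\times Q_i$, \Clm{interesting} says the sign is $+1$ when $\coord{i}{\bS}(x,y)\neq r_i$, and is $\alpha_{r_i}$ under $\Yes$, $\beta_i$ under $\No$, when $\coord{i}{\bS}(x,y)= r_i$. Hence the live pairs are exactly the same-subcube pairs with $\coord{i}{\bS}(x,y)=r_i$, and within a single subcube $i$ all live pairs share a single common sign ($\alpha_{r_i}$ or $\beta_i$). To connect this to the ``bad events,'' I would first observe that whenever $\cA$ does not occur for $(x,y)$, we have $\coord{i}{\bS}(x,y)\in\capt(x,y)$: by $\neg\cA$ some coordinate $c^\star\in\capt(x,y)$ lies outside $R$, so $c^\star\in([d]\setminus R)\cup\{r_i\}$ is a difference coordinate, forcing $\coord{i}{\bS}(x,y)$ to be at least as significant as $c^\star$ and therefore also among the five most significant difference coordinates. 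In particular, if subcube $i$ contains any live pair, then $r_i\in\capt(Q_i)$.

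Finally, let $I\subseteq[m]$ be the set of subcubes containing a live pair. The preceding observation gives $r_i\in\capt(Q_i)$ for every $i\in I$, and then $\neg\cC$ forces the indices $\{r_i\}_{i\in I}$ to be pairwise distinct---otherwise two distinct subcubes in $I$ with $r_i=r_j$ would both capture that common dimension, triggering $\cC$. Consequently, under $\Yes$ the collection of live-pair signs, grouped by subcube, is $\{\alpha_{r_i}\}_{i\in I}$, which is a tuple of $|I|$ i.i.d.\ uniform $\pm 1$ variables; under $\No$ it is $\{\beta_i\}_{i\in I}$, also $|I|$ i.i.d.\ uniform $\pm 1$ variables. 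Together with the matching deterministic signs on all other pairs, this shows that the law of $G^Q_h$ conditional on good $\bS$ is the same under $\Yes$ and $\No$.

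The main technical pinch point is exactly this last independence argument: on the $\No$ side each live subcube carries its own independent $\beta_i$, whereas on the $\Yes$ side the live sign in subcube $i$ is $\alpha_{r_i}$, a variable \emph{shared} across all subcubes with the same action dimension. The no-collision event was engineered precisely so that no two live subcubes share an action dimension, which is what makes the $\Yes$-side signs i.i.d.\ and lets the two distributions match; the role of no-abort is the auxiliary structural step that guarantees $r_i$ actually lies in the captured set $\capt(Q_i)$ whenever subcube $i$ is live.
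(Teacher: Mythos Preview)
Your proposal is correct and follows essentially the same approach as the paper: the paper calls your ``live'' pairs ``interesting,'' classifies pairs identically via \Clm{inv} and \Clm{interesting}, uses the same $\neg\cA$ argument to show $r_i\in\capt(Q_i)$ for each live subcube, and then uses $\neg\cC$ to conclude the $r_i$'s over $I$ are distinct so that both conditional probabilities equal $2^{-|I|}$. Your exposition is slightly more streamlined in that you prove $\coord{i}{\bS}(x,y)\in\capt(x,y)$ directly (rather than by contradiction as in the paper), but the argument and all the ingredients are the same.
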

\begin{proof}
We first describe the high level ideas in the proof. As stated above, when the abort event doesn't happen, the sign $\sgn_h(x,y)$ is determined by $\capt(x,y)$ for any $h\in \supp(\Yes)\cup \supp(\No)$.
Furthermore,  a pair $(x,y)$ has a possibility of distinguishing (that is, the pair is interesting) only if $x,y \in C_i$ and $r_i \in \capt(x,y)$.
Focus on such interesting pairs. For such a pair, both $\sgn_{f_{\bT}}(x,y)$ and $\sgn_{g_{\bT}}(x,y)$ are equally likely to be $+1$ or $-1$.
Therefore, to distinguish, we would need two interesting pairs, $(x,y) \in C_i$ and $(x',y') \in C_j$ with $i \neq j$. Note that, when $g \sim \No$,
the signs $\sgn_{g_{\bT}}(x,y)$ and $\sgn_{g_{\bT}}(x',y')$ are independently set, whereas when $f \sim \Yes$, the signs are either the same when $r_i = r_j$, or independently set.
But if the collision event doesn't occur, we have $r_i \neq r_j$ for interesting pairs in different subcubes. Therefore, the probabilities are the same.

We now prove the lemma formally.
	Condition on a good $\bS$. Note that
	the probability of the $\Yes$ distribution depends
	solely on $\{\alpha_b\}$ and that of the $\No$ distribution
	depends solely on $\{\beta_i\}$.
	
	Consider any pair $(x,y) \in Q\times Q$ with $x \prec y$.
	We can classify it into three types: (i) $x$ and $y$ are in different
	subcubes, (ii) $x$ and $y$ are both in $C_i$, and $\coord{i}{\bS}(x,y) \neq r_i$,
	(iii) $x$ and $y$ are both in $C_i$, and $\coord{i}{\bS}(x,y) = r_i$.
	For convenience, we refer to the third type as {\em interesting pairs}.
	Let $h \in \supp(\Yes | \bS) \cup \supp(\No | \bS)$.
	For the first and second types of pairs, by \Clm{inv} and \Clm{interesting}, we have $\sgn_h(x,y) = 1$.
	For interesting pairs, by \Clm{interesting}, $\sgn_h(x,y)$
	must have the same label for all pairs in $Q_i \times Q_i$.
	Thus, any $G$ whose labels disagree with the above can never
	be $G^Q_f$ or $G^Q_g$.
	
	Fix a signed graph $G$. For any pair $(x,y) \in Q \times Q$, where $x \prec y$,
	let $w(x,y)$ be the label in $G$. Furthermore, for all interesting pairs
	in the same $Q_i$, $w(x,y)$ has the same label, denoted $w_i$.
	Let $I$ denote the set of subcubes with interesting pairs.
	At this point, all of our discussion depends purely on $\bS$
	and involves no randomness.
	
	Now we focus on $g \sim (\No|\bS)$.
	\begin{eqnarray*}
		\Pr_{g\sim (\No|\bS)}[G^Q_g = G]
		& = & \Pr\Big[\bigwedge_{i \in I} \bigwedge_{\substack{x,y \in Q_i \\ \coord{i}{\bS}(x,y) = r_i}}
		(w(x,y) = \sgn_{g_{\bT}}(x,y))\Big]  \\
		& = & \Pr\Big[\bigwedge_{i \in I} \bigwedge_{\substack{x,y \in Q_i \\ \coord{i}{\bS}(x,y) = r_i}}
		(w(x,y) = \beta_{i})\Big]  \ \ \ \textrm{(by \Clm{interesting})}\\
		& = & \Pr\Big[\bigwedge_{i \in I} (w_i = \beta_{i})\Big]
	\end{eqnarray*}
	
	\noindent Observe that each $\beta_{i}$ is chosen uniformly and independently at random from $\{-1,+1\}$, and so this 	probability is exactly $2^{-|I|}$.
	
	The analogous expressions for $f \sim (\Yes|\bS)$ yield:
	$$ \Pr_{f\sim (\Yes|\bS)}[G^Q_f = G] = \Pr \Big[\bigwedge_{i \in I} (w_i = \alpha_{r_i})\Big] $$
	Note the difference here: if multiple $r_i$'s are the same, the individual events
	are not independent over different subcubes. This is precisely what the abort and collision
	events capture. We formally argue below.
	
	Consider an interesting pair $(x,y) \in Q_i \times Q_i$. Since the abort event $\cA$ does
	not happen, $\capt(x,y) \nsubseteq R$. If $\coord{i}{\bS}(x,y) = r_i \notin \capt(x,y)$,
	then there is a coordinate of $\overline{R}$ that is more significant that $\coord{i}{\bS}(x,y)$.
	This contradicts the definition of the latter; so $r_i \in \capt(x,y) \subseteq \capt(Q_i)$.
	Equivalently, a subcube index $i \in I$ iff $r_i \in \capt(Q_i)$.
	
	Since the collision event $\cC$ does not happen, for any $j \in [m]$
	where $r_j = r_i , r_j \notin \capt(Q_j)$. Alternately, for $i,i' \in I$,
	$r_i \neq r_{i'}$. Thus, $\Pr[\bigwedge_{i \in I}(w_i = \alpha_{r_i})]
	= \prod_{i \in I}\Pr[w_i = \alpha_{r_i}] = 2^{-|I|}$.
\end{proof}

\noindent Now, we are armed to prove \Thm{tv}.

\begin{proof}[Proof of \Thm{tv}]
	Given any subset of signed graphs, $\calG$, it suffices to upper bound
	\begin{align*}
	\left|\Pr_{f\sim \Yes} [G^Q_f \in \calG] - \Pr_{f\sim \No} [G^Q_f \in \calG]\right|  
	&\leq 
    \sum_{\textrm{good } \bS} \left|\Pr[\bS]\cdot\left(\Pr_{f\sim \Yes} [G^Q_f \in \calG | \bS] - \Pr_{f\sim \No} [G^Q_f \in \calG|\bS]\right) \right| \\
    & +  \sum_{\textrm{bad } \bS} \left|\Pr[\bS]\cdot\left(\Pr_{f\sim \Yes} [G^Q_f \in \calG|\bS] - \Pr_{f\sim \No} [G^Q_f \in \calG|\bS]\right) \right|.
	\end{align*}
	The first term of the RHS is $0$ by \Lem{zero}.
    The second term is at most the probability of bad events, which is $O(\delta)$
    by \Thm{bad}.
\end{proof}

\subsection{Bounding the Probability of Bad Events: Proof of Theorem~\ref{thm:bad}}\label{sec:bad}

We prove \Thm{bad} by individually bounding $\Pr[\calA]$ and $\Pr[\calC]$.

\begin{lemma}\label{lem:A}
	If $|Q|\leq \delta d\log d$, then $\Pr[\calA] \leq d^{-1/4}$.
\end{lemma}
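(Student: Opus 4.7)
The plan is to prove the bound by a union bound over pairs $(x, y) \in Q \times Q$, using the small size of $R$ together with the subcube decomposition. The key point is that only within-subcube pairs can trigger $\calA$, and the five-element cap gives just enough room to beat the $|Q|^2$ union-bound factor.

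First, I would dispose of cross-subcube pairs. If $x \in C_i$ and $y \in C_j$ with $i \neq j$, then $x$ and $y$ must differ in some coordinate of $[d+1, d']$ (the bits that determine the subcube index). The most significant such differing coordinate is more significant than any coordinate in $[d]$, so it belongs to $\capt(x,y)$; but it lies outside $R \subseteq [d]$, so $\capt(x,y) \not\subseteq R$. Hence only within-subcube pairs can contribute to $\calA$, and the union bound can be taken over these only.

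Next, I bound the per-pair probability. For a within-subcube pair with $|\capt(x,y)| = 5$, the set $\capt(x,y)$ is a fixed five-element subset of $[d]$. Conditioned on $|R| = 2^k$,
\[
\Pr[\capt(x,y) \subseteq R \mid |R|=2^k] \;=\; \frac{\binom{d-5}{2^k-5}}{\binom{d}{2^k}} \;\leq\; 2\left(\frac{2^k}{d}\right)^{\!5}
\]
for $d$ large. Since $k$ is uniform on $\{1,\ldots,\tfrac{1}{2}\log_2 d\}$ so that $2^k \leq \sqrt{d}$, the sum $\sum_{k=1}^{(\log_2 d)/2} 2^{5k} = \Theta(d^{5/2})$ is dominated by its largest term, and hence
\[
\EX_k\!\left[\left(\tfrac{2^k}{d}\right)^{\!5}\right] \;=\; \Theta\!\left(\frac{1}{d^{5/2}\log d}\right).
\]
Union-bounding over at most $|Q|^2 \leq \delta^2 d^2 \log^2 d$ pairs yields
\[
\Pr[\calA] \;\leq\; O\!\left(\frac{\delta^2 \log d}{\sqrt{d}}\right),
\]
which is at most $d^{-1/4}$ for $d$ sufficiently large.

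The main obstacle is to rigorously handle pairs with $|\capt(x,y)| < 5$ (pairs differing in fewer than five coordinates), where the per-pair probability $(2^k/d)^{|\capt|}$ can be substantially larger. For these I would bucket by Hamming distance $h \in \{1,2,3,4\}$, count pairs at each distance using the hypercube structure (e.g.\ $N_h \leq |Q|\binom{d}{h}$), and show that the total contribution from small-$|\capt|$ pairs is also $o(d^{-1/4})$; equivalently, the constant $5$ in the definition of $\capt$ is chosen to be large enough for the clean calculation above to absorb these edge cases. The rest of the argument is a routine union bound.
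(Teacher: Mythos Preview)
Your main approach matches the paper's: a union bound over all pairs in $Q$, using that five fixed coordinates all land in $R$ with probability at most $(2^k/(d-5))^5 \le 32\, d^{-5/2}$ for every $k \le \tfrac12\log_2 d$, and then multiplying by $|Q|^2 \le d^2\log^2 d$. The paper's version is in fact simpler than yours---it does not separate out cross-subcube pairs (for those the containment probability is zero, so the bound holds vacuously) and it takes the worst $k$ rather than averaging.

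The obstacle you flag for pairs with Hamming distance $<5$ is real, but your proposed remedy does not work, and your closing remark that ``the constant $5$ \ldots\ is chosen large enough to absorb these edge cases'' is backwards. With $N_h \le |Q|\binom{d}{h}$ and per-pair probability $\Theta((2^k/d)^h)$, already the $h=1$ bucket contributes on the order of $|Q|\cdot d \cdot d^{-1/2} \approx \delta\, d^{3/2}\log d \gg 1$. Indeed no counting argument can rescue the lemma as literally stated: if $Q$ consists of a point $z$ together with its $d$ Hamming-$1$ neighbors inside one subcube, then every singleton $\{i\}\subseteq[d]$ equals some $\capt(z,z\oplus e_i)$, and $\cA$ occurs with probability $1$. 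The actual fix is definitional rather than combinatorial: downstream, event $\cA$ is only invoked to conclude $r_i\in\capt(x,y)$ for interesting pairs, and when $x,y$ differ in at most five coordinates this holds automatically because $\capt(x,y)$ then contains \emph{all} differing coordinates. Hence one may restrict $\cA$ to pairs with $|\capt(x,y)|=5$, after which the paper's one-line bound (and yours) is correct; the paper's proof tacitly makes this restriction by writing the exponent $5$ unconditionally.
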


\begin{proof} Fix any choice of $k$ (in $\bS$).
For any pair of points $x,y \in Q$, we have $\Pr[\capt(x,y) \subseteq R] \leq (\frac{2^k}{d-5})^5$.  Since $d-5 \geq d/2$ for all $d \geq 10$ and $k \leq (\log_2d)/2$, the probability is at most $32d^{-5/2}$. For a large enough $d$, a union
bound over all pairs in $Q \times Q$, which are at most $d^2\log^2d$ in number, completes the proof.
\end{proof}

\noindent The collision event is more challenging to bound,
and is actually the heart of the lower bound.
We start by showing that, if each $Q_i$ captures
few coordinates, then the collision event has low probability. A critical point is the appearance of $d\log d$ in this bound.

\begin{lemma}\label{lem:prob}
If $\sum_i |\capt(Q_i)| \leq M$, then $\Pr[\cC] = O\left(\frac{M}{d\log d}\right)$.
\end{lemma}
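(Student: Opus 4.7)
The plan is to decompose the collision event by the colliding coordinate: write $\cC = \bigcup_{a \in [d]} \cC_a$, where $\cC_a$ is the event that at least two subcubes $i, j$ both satisfy $a \in \capt(Q_{\cdot})$ and $r_{\cdot} = a$. I will bound each $\Pr[\cC_a]$ separately via the randomness in $k$ and $R$, and then union-bound over $a$. The critical observation, which yields the factor of $d$ (rather than $1$) in the denominator, is that the per-coordinate conditional probability of a collision must be clipped at $1$.

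Fix $a \in [d]$ and let $n_a = |\{i : a \in \capt(Q_i)\}|$. For fixed $k$, one has $\Pr[a \in R \mid k] = 2^k/d$ since $R$ is a uniform size-$2^k$ subset of $[d]$; conditional on $a \in R$, the events $\{r_i = a\}$ over the $n_a$ relevant subcubes are independent Bernoulli$(1/2^k)$. By a union bound over pairs,
$$
\Pr[\cC_a \mid k] \leq \frac{2^k}{d} \cdot \min\Bigl(1,\ \binom{n_a}{2}/2^{2k}\Bigr) = \frac{1}{d}\,\min\Bigl(2^k,\ \binom{n_a}{2}/2^k\Bigr).
$$
The clipping at $1$ is essential: for small $k$, the naive expression $\binom{n_a}{2}/2^{2k}$ can exceed $1$, and the correct saturated bound is simply $\Pr[a\in R \mid k] = 2^k/d$.

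Next, average over $k$, which is uniform on $\{1, \ldots, N\}$ with $N = (1/2)\log_2 d = \Theta(\log d)$. For each $a$, set $k^* := \lceil \log_2 n_a \rceil$; the summand $\min(2^k, \binom{n_a}{2}/2^k)$ grows geometrically up to $k^*$ and decays geometrically afterward. Splitting the range at $k^*$ (or handling the edge cases $k^* < 1$ and $k^* > N$ separately) and bounding each geometric tail yields $\sum_{k=1}^{N} \min(2^k, \binom{n_a}{2}/2^k) = O(n_a)$. Hence $\Pr[\cC_a] \leq (1/(dN)) \cdot O(n_a) = O(n_a/(d \log d))$.

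Finally, union-bounding over $a$ and using $\sum_{a\in[d]} n_a = \sum_{i\in[m]} |\capt(Q_i)| \leq M$:
$$
\Pr[\cC] \leq \sum_{a\in[d]} \Pr[\cC_a] = O\Bigl(\frac{1}{d\log d}\Bigr) \sum_a n_a \leq O\Bigl(\frac{M}{d\log d}\Bigr).
$$
The main obstacle to anticipate is the saturation step: without clipping $\Pr[\cC_a \mid k]$ at $1$, a naive pair-based union bound yields only $O(M/\log d)$, a factor of $d$ too weak (since $\sum_a \binom{n_a}{2}$ can be as large as $\Omega(Md)$). The argument hinges on handling the high-$n_a$ regime via the trivial bound $\Pr[\cC_a \mid k] \leq \Pr[a \in R \mid k]$ and then gaining the extra $1/\log d$ from averaging over $k$.
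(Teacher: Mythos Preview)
Your proposal is correct and follows essentially the same approach as the paper: decompose $\cC$ by the colliding coordinate, bound $\Pr[\cC_a \mid k]$ by splitting into two regimes (trivial bound $2^k/d$ when $n_a$ is large relative to $2^k$, pair-union bound $\binom{n_a}{2}/(2^k d)$ otherwise), and average over $k$ to extract the $1/\log d$ factor. The paper's proof differs only in bookkeeping---it groups coordinates into dyadic buckets $n_\ell = |\{r : a_r \in (2^{\ell-1}, 2^\ell]\}|$ and sums over $r$ before averaging over $k$, whereas you sum over $k$ first for each fixed $a$; your version is arguably cleaner, but the mathematical content is identical.
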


\begin{proof} 
	For any $r \in [d]$, 
	define $A_r := \{j: r\in \capt(Q_j)\}$ to be the set of indices of $Q_j$'s that capture coordinate $r$.
    Let $a_r := |A_r|$. Define $n_\ell := |\{r: a_r \in (2^{\ell-1},2^{\ell}]\}|$.
	Observe that $\sum_{\ell \leq \log_2 d} n_\ell 2^\ell \leq 2\sum_{r\in [d]} a_r \leq 2M$.
	
	Fix $k$. For $r\in [d]$, we say the event $\cC_r$ occurs if
	(a) $r \in R$, and (b) there exists $i,j\in [d]$ such that $r_i = r_j = r$, and $r_i \in \capt(Q_i)$ and $r_j\in \capt(Q_j)$. By the union
    bound, $\Pr[\cC| k] \leq \sum_{r=1}^d \Pr[\cC_r | k]$.
	
	Let us now compute $\Pr[\cC_r|k]$. 
	Only sets $Q_j$'s with $j \in A_r$ are of interest, since the others do not capture $r$.
	Event $\cC_r$ occurs if at least two of these sets have $r_i = r_j = r$. Hence,
\begin{align}
\Pr[\cC_r|k] & = \Pr[r\in R]\cdot \Pr[\exists i,j\in A_r: r_i = r_j = r ~|~r\in R] \notag \\
& = \frac{2^{k}}{d}\cdot \sum_{c \geq 2} {a_r \choose c} \left(\frac{1}{2^k} \right)^c \left(1-\frac{1}{2^k} \right)^{a_r-c}. \label{eq:007}
\end{align}
A fixed $r$ is in $R$ with probability ${d-1 \choose 2^k-1}/{d\choose 2^k} = \frac{2^k}{d}$.
Given that $|R| = 2^k$, the probability that $r_i = r$ is precisely $2^{-k}$.

	If $a_r \geq \frac{2^k}{4}$, then we simply upper bound \eqref{eq:007} by  $\frac{2^k}{d}$. 
	For $a_r < \frac{2^k}{4}$, 
	we upper bound \eqref{eq:007} by
\begin{align*}
	\frac{2^k}{d} \left(1-\frac{1}{2^k}\right)^{a_r} \sum_{c\geq 2} \left( a_r \cdot \frac{1}{2^k} \cdot \left( 1-\frac{1}{2^k} \right)^{-1} \right)^c \leq \frac{2^k}{d} \sum_{c \geq 2} \left( \frac{a_r}{2^{k-1}} \right)^c \leq \frac{8a^2_r}{2^k d}.
\end{align*}
Summing over all $r$ and grouping according
to $n_\ell$, we get
\begin{align*}
\Pr[\cC|k] \leq \sum_{r=1}^d \Pr[\cC_r|k] 
\leq \sum_{r: a_r \geq 2^{k-2}} \frac{2^k}{d} + \frac{8}{d} \sum_{r: a_r < 2^{k-2}} \frac{a^2_r}{2^k} 
\leq \frac{2^k}{d} \sum_{\ell > k-2} n_\ell + \frac{8}{d}  \sum_{\ell=1}^{k-2} n_\ell 2^{2\ell - k} .
\end{align*}
Averaging over all $k$, we get
\begin{align}
	\Pr[\cC] & = \frac{2}{\log_2 d} \sum_{k=1}^{(\log_2 d)/2}\Pr[\cC | k]   \quad \leq \quad \frac{16}{d \log_2 d} \sum_{k=1}^{(\log_2 d)/2} \left( \sum_{\ell=1}^{k-2} n_\ell 2^{2\ell - k} + \sum_{\ell > k-2} n_\ell 2^k \right) \notag \\
	& = \frac{16}{d\log_2 d} \left(\sum_{\ell=1}^{(\log_2 d)/2} n_\ell \sum_{k \geq \ell + 2} 2^{2\ell - k} + \sum_{\ell=1}^{\log_2 d} n_\ell \sum_{k < \ell+2} 2^k  \right).\label{eq:008}
\end{align}

\noindent
Now, $\sum_{k\geq \ell+2} 2^{2\ell - k} \leq 2^\ell$ and $\sum_{k < \ell+2} 2^k \leq 4\cdot 2^\ell$. Substituting,
$
\Pr[\cC] \leq \frac{80}{d\log_2 d} \sum_{\ell=1}^{\log_2 d} n_\ell 2^\ell \leq \frac{160M}{d\log_2 d}
$, proving the lemma.
\end{proof}

\noindent We are now left to bound $\sum_i |\capt(Q_i)|$. This is done by the following combinatorial lemma.

\begin{lemma}\label{lem:comb}
Let $V$ be a set of vectors over an arbitrary alphabet
and any number of dimensions. For any natural number $c$
and $x,y \in V$,
let $\capt_c(x,y)$ denote the (set of) first $c$ coordinates
at which $x$ and $y$ differ. Then $|\capt_c(V)| \leq c(|V|-1)$.
\end{lemma}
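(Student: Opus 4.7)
The plan is to induct on $c$. The base case $c=0$ is immediate because $\capt_0(x,y)=\emptyset$ for every pair, so $\capt_0(V)=\emptyset$. For the inductive step, I would use the identity
\[
|\capt_c(V)| = |\capt_{c-1}(V)| + |\capt_c(V)\setminus \capt_{c-1}(V)|,
\]
apply the induction hypothesis to the first summand to get $(c-1)(|V|-1)$, and reduce the whole claim to showing that the ``newly captured'' set $\capt_c(V)\setminus \capt_{c-1}(V)$ has size at most $|V|-1$.

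For this main step I would use a witness-graph strategy. For every $i\in \capt_c(V)\setminus \capt_{c-1}(V)$, observe that since $i$ does not appear among the first $c-1$ differing coordinates of \emph{any} pair, any pair that captures $i$ within its first $c$ differing coordinates must have $i$ as its exact $c$-th differing coordinate. Choose one such witness pair $(x_i,y_i)$ per new coordinate $i$, and let $G$ be the (multi-)graph on vertex set $V$ with edge set $\{(x_i,y_i)\}_i$. If $G$ is a forest, then $|E(G)|\le |V|-1$, completing the induction.

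The main obstacle will be proving that $G$ is a forest, which I would do by a minimum-label argument. Suppose for contradiction that $G$ contains a cycle $v_1,v_2,\ldots,v_m,v_{m+1}=v_1$ with distinct labels $i_1,\ldots,i_m$, and assume WLOG that $i_1$ is the smallest. The pair $(v_1,v_2)$ differs at $i_1$ by construction. For each $l\ge 2$, the pair $(v_l,v_{l+1})$ has $c$-th differing coordinate equal to $i_l>i_1$; if $v_l(i_1)\ne v_{l+1}(i_1)$, then $i_1$ would lie strictly before $i_l$ in the ordered list of differing coordinates of $(v_l,v_{l+1})$ and thus appear among its first $c-1$ differing coordinates, placing $i_1\in \capt_{c-1}(V)$ and contradicting $i_1\notin \capt_{c-1}(V)$. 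Hence $v_l(i_1)=v_{l+1}(i_1)$ for every $l\ge 2$, and chaining these equalities around the cycle yields $v_2(i_1)=v_1(i_1)$, contradicting that $(v_1,v_2)$ differs at $i_1$. The same argument handles the degenerate $m=2$ case (two coordinates witnessed by the same pair) without modification, so no special treatment is needed, and cycle-freeness of $G$ follows uniformly.
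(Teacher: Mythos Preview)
Your proof is correct and matches the paper's approach. The paper builds $c$ graphs $G_1,\ldots,G_c$ at once (with $G_t$ holding one witness edge per coordinate whose earliest appearance in any $\capt_c(x,y)$ is at position $t$) and proves each acyclic by the same minimum-label argument; your graph at inductive step $c$ is exactly the paper's $G_c$ (since $i\in\capt_c(V)\setminus\capt_{c-1}(V)$ iff the earliest position of $i$ is $c$), so the only difference is inductive versus simultaneous presentation.
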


\begin{proof}
We construct $c$ different edge-colored graphs $G_1, \ldots, G_c$ over the vertex set $V$. 
For every coordinate $i\in \capt_c(V)$, there must exist at least one pair of
vectors $x,y$ such that $i \in \capt_c(x,y)$. Thinking
of each $\capt_c(x,y)$ as an ordered set, find a pair
$(x,y)$ where $i$ appears ``earliest'' in $\capt_c(x,y)$.
Let the position of $i$ in this $\capt_c(x,y)$ be denoted $t$.
We add edge $(x,y)$ to $G_t$, and color it $i$.
Note that the same edge $(x,y)$ cannot be added to $G_t$
with multiple colors, and hence all $G_t$'s are simple graphs.
Furthermore, observe that each color is present only
once over all $G_t$'s.

We claim that each $G_t$ is acyclic. Suppose not. Let there be a cycle $C$ and let $(x,y)$ be the edge in $C$ with the smallest color $i$. Clearly, $x_i \neq y_i$ since $i \in \capt_c(x,y)$. There must exist another edge $(u,v)$ in $C$
such that $u_i \neq v_i$. Furthermore, the color of $(u,v)$
is $j > i$. Thus, $j$ is the $\ord{t}$ entry in $\capt_c(u,v)$.
Note that $i \in \capt_c(u,v)$ and must be the $\ord s$ entry
for some $s < t$. But this means that the edge $(u,v)$
colored $i$ should be in $G_s$, contradicting
the presence of $(x,y) \in G_t$.
\end{proof}

\noindent We wrap up the bound now.

\begin{lemma}\label{lem:C}
	If $|Q|\leq \delta d\log d$, then $\Pr[\calC] = O(\delta)$.
\end{lemma}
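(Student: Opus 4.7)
The plan is to simply combine the two preceding lemmas. Lemma~\ref{lem:prob} bounds $\Pr[\mathcal{C}]$ in terms of the quantity $M := \sum_i |\capt(Q_i)|$, so it suffices to show $M = O(|Q|)$. This bound will come from Lemma~\ref{lem:comb} applied one subcube at a time.

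First, I would observe that $\capt(Q_i) = \bigcup_{x,y \in Q_i} \capt(x,y)$, where $\capt(x,y)$ picks the $5$ most significant coordinates of difference between $x$ and $y$. This matches exactly the setup of Lemma~\ref{lem:comb} with $c = 5$ once we order coordinates by decreasing significance (so that ``first $c$ coordinates of difference'' means the $5$ most significant coordinates of difference). Therefore, for every $i \in [m]$,
\[
|\capt(Q_i)| \;\leq\; 5\bigl(|Q_i| - 1\bigr) \;\leq\; 5\,|Q_i|.
\]

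Summing over the subcubes, since $\{Q_i\}_{i \in [m]}$ partitions $Q$,
\[
M \;=\; \sum_{i=1}^{m} |\capt(Q_i)| \;\leq\; 5\sum_{i=1}^{m} |Q_i| \;=\; 5\,|Q| \;\leq\; 5\delta\, d \log d.
\]
Plugging this into Lemma~\ref{lem:prob} yields $\Pr[\mathcal{C}] = O(M/(d\log d)) = O(\delta)$, which is exactly the claimed bound.

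There is no real obstacle here; the heavy lifting has already been done in Lemmas~\ref{lem:prob} and~\ref{lem:comb}. The only thing to double-check is that the ``5 most significant coordinates of difference'' in Definition~\ref{def:cap} is genuinely an instance of the ``first $c$ coordinates of difference'' abstraction in Lemma~\ref{lem:comb}; this is immediate once we fix the ordering on $[d']$ by decreasing bit significance. Combining this with Lemma~\ref{lem:A} then finishes the proof of Theorem~\ref{thm:bad} via a union bound: $\Pr[\mathcal{A} \cup \mathcal{C}] \leq d^{-1/4} + O(\delta) = O(\delta)$ for large enough $d$.
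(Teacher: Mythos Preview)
Your proposal is correct and follows exactly the paper's approach: apply Lemma~\ref{lem:comb} to each $Q_i$ with $c=5$ to get $\sum_i |\capt(Q_i)| \leq 5|Q|$, then feed this into Lemma~\ref{lem:prob}. Your extra care in noting that ``most significant'' corresponds to an ordering of coordinates (so Lemma~\ref{lem:comb} applies) and in handling the sum over the partition $\{Q_i\}$ is all sound and slightly more explicit than the paper's two-line version.
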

\begin{proof}
\Lem{comb} applied to each $Q_i$, yields
$\sum_i |\capt(Q_i)| \leq 5|Q_i| = 5|Q|$.  An application of
\Lem{prob} completes the proof.
\end{proof}

\section{Acknowledgments}
We thank Oded Goldreich for useful discussions and Meiram Murzabulatov for participation in initial discussions on this work.

\bibliography{references}

\begin{thebibliography}{10}

\bibitem{AC06}
Nir Ailon and Bernard Chazelle.
\newblock Information theory in property testing and monotonicity testing in
  higher dimension.
\newblock {\em Inf. Comput.}, 204(11):1704--1717, 2006.

\bibitem{BMPR16}
Roksana Baleshzar, Meiram Murzabulatov, Ramesh Krishnan~S. Pallavoor, and Sofya
  Raskhodnikova.
\newblock Testing unateness of real-valued functions.
\newblock {\em CoRR}, abs/1608.07652, 2016.

\bibitem{BRW05}
Tugkan Batu, Ronitt Rubinfeld, and Patrick White.
\newblock Fast approximate {PCP}s for multidimensional bin-packing problems.
\newblock {\em Inf. Comput.}, 196(1):42--56, 2005.

\bibitem{BB15}
Aleksandrs Belovs and Eric Blais.
\newblock Quantum algorithm for monotonicity testing on the hypercube.
\newblock {\em Theory of Computing}, 11:403--412, 2015.

\bibitem{BB16}
Aleksandrs Belovs and Eric Blais.
\newblock A polynomial lower bound for testing monotonicity.
\newblock In {\em Proceedings, ACM Symposium on Theory of Computing (STOC)},
  pages 1021--1032, 2016.

\bibitem{BerRY14}
Piotr Berman, Sofya Raskhodnikova, and Grigory Yaroslavtsev.
\newblock {$L_p$-testing}.
\newblock In {\em Proceedings, ACM Symposium on Theory of Computing (STOC)},
  pages 164--173, 2014.

\bibitem{BGJRW12}
Arnab Bhattacharyya, Elena Grigorescu, Kyomin Jung, Sofya Raskhodnikova, and
  David~P. Woodruff.
\newblock Transitive-closure spanners.
\newblock {\em SIAM J.\ Comput.}, 41(6):1380--1425, 2012.

\bibitem{BBM12}
Eric Blais, Joshua Brody, and Kevin Matulef.
\newblock Property testing lower bounds via communication complexity.
\newblock {\em Computational Complexity}, 21(2):311--358, 2012.

\bibitem{BlaRY14}
Eric Blais, Sofya Raskhodnikova, and Grigory Yaroslavtsev.
\newblock Lower bounds for testing properties of functions over hypergrid
  domains.
\newblock In {\em Proceedings, IEEE Conference on Computational Complexity
  (CCC)}, pages 309--320, 2014.

\bibitem{BCGM12}
Jop Bri{\"{e}}t, Sourav Chakraborty, David Garc{\'{\i}}a{-}Soriano, and Arie
  Matsliah.
\newblock Monotonicity testing and shortest-path routing on the cube.
\newblock {\em Combinatorica}, 32(1):35--53, 2012.

\bibitem{C16a}
Deeparnab Chakrabarty.
\newblock Monotonicity testing.
\newblock In {\em Encyclopedia of Algorithms}, pages 1352--1356. Springer,
  2016.

\bibitem{CDJS15}
Deeparnab Chakrabarty, Kashyap Dixit, Madhav Jha, and C.~Seshadhri.
\newblock Property testing on product distributions: Optimal testers for
  bounded derivative properties.
\newblock In {\em Proceedings, ACM-SIAM Symposium on Discrete Algorithms
  (SODA)}, pages 1809--1828, 2015.

\bibitem{CS13}
Deeparnab Chakrabarty and C.~Seshadhri.
\newblock Optimal bounds for monotonicity and {Lipschitz} testing over
  hypercubes and hypergrids.
\newblock In {\em Proceedings, ACM Symposium on Theory of Computing (STOC)},
  pages 419--428, 2013.

\bibitem{CS14}
Deeparnab Chakrabarty and C.~Seshadhri.
\newblock An optimal lower bound for monotonicity testing over hypergrids.
\newblock {\em Theory of Computing}, 10:453--464, 2014.

\bibitem{CS16}
Deeparnab Chakrabarty and C.~Seshadhri.
\newblock An $o(n)$ monotonicity tester for boolean functions over the
  hypercube.
\newblock {\em SIAM J.\ Comput.}, 45(2):461--472, 2016.

\bibitem{CS16unate}
Deeparnab Chakrabarty and C.~Seshadhri.
\newblock A $\widetilde{O}(n)$ non-adaptive tester for unateness.
\newblock {\em Electronic Colloquium on Computational Complexity {(ECCC)}},
  23:133, 2016.
\newblock Also appeared as arXiv report 1608.06980.

\bibitem{CDST15}
Xi~Chen, Anindya De, Rocco~A. Servedio, and Li{-}Yang Tan.
\newblock Boolean function monotonicity testing requires (almost)
  ${O}(n^{1/2})$ non-adaptive queries.
\newblock In {\em Proceedings, ACM Symposium on Theory of Computing (STOC)},
  pages 519--528, 2015.

\bibitem{CST14}
Xi~Chen, Rocco~A. Servedio, and Li{-}Yang Tan.
\newblock New algorithms and lower bounds for monotonicity testing.
\newblock In {\em Proceedings, IEEE Symposium on Foundations of Computer
  Science (FOCS)}, pages 286--295, 2014.

\bibitem{CWX17}
Xi~Chen, Erik Waingarten, and Jinyu Xie.
\newblock Beyond talagrand functions: New lower bounds for testing monotonicity
  and unateness.
\newblock {\em CoRR}, abs/1702.06997, 2017.
\newblock To appear in STOC 2017.

\bibitem{DRTV16}
Kashyap Dixit, Sofya Raskhodnikova, Abhradeep Thakurta, and Nithin~M. Varma.
\newblock Erasure-resilient property testing.
\newblock In {\em Proceedings, International Colloquium on Automata, Languages
  and Processing (ICALP)}, pages 91:1--91:15, 2016.

\bibitem{DGLRRS99}
Yevgeny Dodis, Oded Goldreich, Eric Lehman, Sofya Raskhodnikova, Dana Ron, and
  Alex Samorodnitsky.
\newblock Improved testing algorithms for monotonicity.
\newblock {\em Proceedings, International Workshop on Randomization and
  Approximation Techniques in Computer Science (RANDOM)}, pages 97--108, 1999.

\bibitem{EKKRV00}
Funda Erg{\"{u}}n, Sampath Kannan, Ravi Kumar, Ronitt Rubinfeld, and Mahesh
  Viswanathan.
\newblock Spot-checkers.
\newblock {\em J.\ Comput.\ System Sci.}, 60(3):717--751, 2000.

\bibitem{Fis04}
Eldar Fischer.
\newblock On the strength of comparisons in property testing.
\newblock {\em Inf. Comput.}, 189(1):107--116, 2004.

\bibitem{FLNRRS02}
Eldar Fischer, Eric Lehman, Ilan Newman, Sofya Raskhodnikova, Ronitt Rubinfeld,
  and Alex Samorodnitsky.
\newblock Monotonicity testing over general poset domains.
\newblock In {\em Proceedings, ACM Symposium on Theory of Computing (STOC)},
  pages 474--483, 2002.

\bibitem{Go-book}
Oded Goldreich.
\newblock {\em Introduction to Property Testing (working draft)}.
\newblock 2015.
\newblock URL: \url{www.wisdom.weizmann.ac.il/~oded/PDF/pt-v1.pdf}.

\bibitem{GGLRS00}
Oded Goldreich, Shafi Goldwasser, Eric Lehman, Dana Ron, and Alex
  Samorodnitsky.
\newblock Testing monotonicity.
\newblock {\em Combinatorica}, 20:301--337, 2000.

\bibitem{GGR98}
Oded Goldreich, Shafi Goldwasser, and Dana Ron.
\newblock Property testing and its connection to learning and approximation.
\newblock {\em J.\ ACM}, 45(4):653--750, 1998.

\bibitem{HK07}
Shirley Halevy and Eyal Kushilevitz.
\newblock Distribution-free property-testing.
\newblock {\em SIAM J.\ Comput.}, 37(4):1107--1138, 2007.

\bibitem{HK08}
Shirley Halevy and Eyal Kushilevitz.
\newblock Testing monotonicity over graph products.
\newblock {\em Random Struct. Algorithms}, 33(1):44--67, 2008.

\bibitem{JR13}
Madhav Jha and Sofya Raskhodnikova.
\newblock Testing and reconstruction of {Lipschitz} functions with applications
  to data privacy.
\newblock {\em SIAM J.\ Comput.}, 42(2):700--731, 2013.

\bibitem{KMS15}
Subhash Khot, Dor Minzer, and Muli Safra.
\newblock On monotonicity testing and boolean isoperimetric type theorems.
\newblock In {\em Proceedings, IEEE Symposium on Foundations of Computer
  Science (FOCS)}, pages 52--58, 2015.

\bibitem{KS16}
Subhash Khot and Igor Shinkar.
\newblock An $\widetilde{O}(n)$ queries adaptive tester for unateness.
\newblock In {\em Approximation, Randomization, and Combinatorial Optimization.
  Algorithms and Techniques, {APPROX/RANDOM} 2016}, pages 37:1--37:7, 2016.

\bibitem{LR01}
Eric Lehman and Dana Ron.
\newblock On disjoint chains of subsets.
\newblock {\em J.\ Combin.\ Theory Ser.\ A}, 94(2):399--404, 2001.

\bibitem{PRV17}
Ramesh Krishnan~S. Pallavoor, Sofya Raskhodnikova, and Nithin Varma.
\newblock Parameterized property testing of functions.
\newblock In {\em Proceedings, Innovations in Theoretical Computer Science
  (ITCS)}, 2017.

\bibitem{Ras16}
Sofya Raskhodnikova.
\newblock Testing if an array is sorted.
\newblock In {\em Encyclopedia of Algorithms}, pages 2219--2222. Springer,
  2016.

\bibitem{RS96}
Ronitt Rubinfeld and Madhu Sudan.
\newblock Robust characterizations of polynomials with applications to program
  testing.
\newblock {\em SIAM J.\ Comput.}, 25(2):252--271, 1996.

\bibitem{Yao77}
Andrew~Chi{-}Chih Yao.
\newblock Probabilistic computations: Toward a unified measure of complexity
  (extended abstract).
\newblock In {\em Proceedings, IEEE Symposium on Foundations of Computer
  Science (FOCS)}, pages 222--227, 1977.

\end{thebibliography}

\appendix
\section{Missing Details from the Main Body}
\subsection{The Lower Bound for Adaptive Testers over Hypergrids} \label{sec:adap-lb}
We show that every unateness tester for functions $f:[n]^d \mapsto \R$  requires $\Omega\left(\frac{d \log n}{\eps}-\frac{\log 1/\eps}{\eps}\right)$ queries for $\eps\in(0,1/4)$ and prove Theorem~\ref{thm:adap-lb}.

\begin{proof}[Proof of Theorem~\ref{thm:adap-lb}]
By Yao's minimax principle and the reduction to testing with comparison-based testers from~\cite{CS14} (stated for completeness in Theorem~\ref{thm:CS14}), it is sufficient to give a hard input distribution on which every deterministic comparison-based tester fails with probability more than 2/3. We use the hard distribution constructed  by Chakrabarty and Seshadhri~\cite{CS14} to prove the same lower bound for testing monotonicity. Their distribution is a mixture of two distributions, $\Yes$ and $\No,$ on positive and negative instances, respectively. Positive instances for their problem are functions that are monotone and, therefore, unate; negative instances are functions that are $\eps$-far from monotone. We show that their $\No$ distribution is supported on functions that are $\eps$-far from unate, i.e., negative instances for our problem. Then the required lower bound for unateness follows from the fact that every deterministic comparison-based tester needs the stated number of queries to distinguish $\Yes$ and $\No$ distributions with high enough probability.

We start by describing the $\Yes$ and $\No$ distribution used in \cite{CS14}. We will define them as distributions on functions over the hypercube domain. Next, we explain how to convert functions over hypercubes  to functions over hypergrids.

Without loss of generality, assume $n$ is a power of $2$ and let $\ell := \log_2 n$.
For any $z \in [n]$, let $bin(z)$ denote the binary representation of $z-1$ as an $\ell$-bit vector $(z_1, \ldots, z_\ell)$, where $z_1$ is the least significant bit.

We now describe the mapping used to convert functions on hypergrids to functions on hypercubes. Let $\phi:[n]^d \to \{0,1\}^{d\ell}$ be the mapping that takes $y\in[n]^d$ to the concatenation of $bin(y_1),\dots,bin(y_d)$. Any function $f:\{0,1\}^{d \ell} \mapsto \R$ can be easily converted into a function $\widetilde{f}:[n]^d \mapsto \R$, where $\widetilde{f}(y) := f(\phi(y))$.

Let $m:= d\ell$. For $x \in \{0,1\}^m$, let $\texttt{val}(x) = \sum\nolimits_{i=1}^m x_i 2^{i-1}$ denote the value of the binary number represented by vector $x$. For simplicity, assume $1/\eps$ is a power of $2$.
Partition the set of points $x \in \{0,1\}^m$ according to the most significant $\log(1/2\eps)$ dimensions.
That is, for $k \in \{1,2,\ldots, 1/2\eps\}$, let
$$S_k := \{x: \texttt{val}(x) \in [(k-1)\cdot \eps 2^{m+1}, k\cdot\eps 2^{m+1} - 1]\}.$$
The hypercube is partitioned into $1/2\eps$ sets $S_k$ of equal size, and each $S_k$ forms a subcube of dimension $m' = m - \log(1/\eps) + 1$.

We now describe the $\Yes$ and $\No$ distributions for functions on hypercubes.
The $\Yes$ distribution consists of a single function $f(x) = 2 \texttt{val}(x)$.
The $\No$ distribution is uniform over $m'/2\eps$ functions $g_{j,k}$, where $j \in [m']$ and $k \in [1/2\eps]$, defined as follows:
\begin{align*}
    g_{j,k}(x) =
    \begin{cases}
        2\texttt{val}(x) - 2^j - 1 &\text{ if } x_j = 1 \text{ and } x \in S_k;\\
        2\texttt{val}(x), &\text{ otherwise.}
    \end{cases}
\end{align*}
To get the $\Yes$ and $\No$ distributions for the hypergrid, we convert $f$ to $\widetilde{f}$ and each function $g_{j,k}$ to $\widetilde{g_{j,k}}$, using the transformation defined before.

Chakrabarty and Seshadhri~\cite{CS14} proved that $f$ is monotone and each function $\widetilde{g_{j,k}}$  is $\eps$-far from monotone. It remains to show that functions $\widetilde{g_{j,k}}$ are also $\eps$-far from unate.

\begin{claim}
Each function $\widetilde{g_{j,k}}$ is $\eps$-far from unate.
\end{claim}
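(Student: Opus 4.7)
The plan is to split by the direction vector $\bb \in \{0,1\}^d$. For $\bb = \vec{0}$, the statement reduces to $\widetilde{g_{j,k}}$ being $\eps$-far from monotone, which is already established in~\cite{CS14}. For every $\bb \neq \vec{0}$, I would prove the much stronger bound that $\widetilde{g_{j,k}}$ is $(1 - 1/n - \eps)$-far from $\bb$-monotone; combined with the assumptions $\eps < 1/4$ and $n \geq 2$, this quantity strictly exceeds $\eps$. Hence $\widetilde{g_{j,k}}$ is $\eps$-far from \emph{every} $\bb$-monotone function, i.e., $\eps$-far from unate.

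The argument for the $\bb \neq \vec{0}$ case proceeds by comparison with the underlying ``skeleton'' function $F(y) := 2\,\texttt{val}(\phi(y))$, which rewrites as $F(y) = 2 \sum_{i=1}^d (y_i - 1)\, 2^{(i-1)\ell}$ and is therefore \emph{strictly} increasing in every coordinate. First I would note that $g_{j,k}$ differs from $2\,\texttt{val}$ only on $\{x \in S_k : x_j = 1\}$, a set of size $|S_k|/2 = \eps \cdot 2^m = \eps n^d$, so $\widetilde{g_{j,k}}$ and $F$ agree on all but at most an $\eps$-fraction of $[n]^d$. Next, fix any $i \in [d]$ with $\bb_i = 1$, and let $G$ be any function that is antimonotone in coordinate $i$. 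On each $i$-line $\ell$, the restriction $F|_\ell$ is a strictly increasing sequence of length $n$, while $G|_\ell$ is non-increasing. These two sequences can agree in at most one position: two agreements at indices $k_1 < k_2$ would force $F_{k_1} < F_{k_2}$ and simultaneously $G_{k_1} \geq G_{k_2}$ with $F_{k_s} = G_{k_s}$ for $s \in \{1,2\}$, a contradiction. Hence $F|_\ell$ and $G|_\ell$ disagree on at least $n-1$ points, and summing over the $n^{d-1}$ disjoint $i$-lines partitioning $[n]^d$ yields $\dist(F, G) \geq 1 - 1/n$. By the Hamming-distance triangle inequality, $\dist(\widetilde{g_{j,k}}, G) \geq 1 - 1/n - \eps$.

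No substantive obstacle is anticipated. The only piece of arithmetic to verify is that $1 - 1/n - \eps > \eps$ whenever $n \geq 2$ and $\eps \in (0, 1/4)$; this is equivalent to $\eps < 1/2 - 1/(2n)$, and the right-hand side is at least $1/4$ for $n \geq 2$, so the inequality is strict. The short combinatorial lemma about how many positions a strictly increasing and a non-increasing sequence can share is the only ingredient that is not immediate from~\cite{CS14}, and it is entirely self-contained.
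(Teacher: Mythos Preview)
Your proof is correct but takes a genuinely different route from the paper's.

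The paper works entirely within the single dimension $i = \lceil j/\ell \rceil$: it exhibits $\eps n^d$ pairwise-disjoint decreasing $i$-pairs (those whose $\phi$-images lie in $S_k$ and differ only in bit $j$) and $\eps n^d$ pairwise-disjoint increasing $i$-pairs (the analogous pairs in any other $S_{k'}$), all mutually disjoint. This one dimension then witnesses $\eps$-farness from $\bb$-monotonicity for \emph{every} $\bb$ at once: whichever value $\bb_i$ takes, one of the two families of pairs forces at least $\eps n^d$ modifications.

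Your argument instead case-splits on $\bb$. For $\bb = \vec{0}$ you defer to~\cite{CS14}; for $\bb \neq \vec{0}$ you compare $\widetilde{g_{j,k}}$ to the strictly increasing skeleton $F$ and use the elementary fact that a strictly increasing and a nonincreasing sequence agree in at most one position. This yields the much stronger bound $\dist(\widetilde{g_{j,k}},\bb\text{-monotone}) \geq 1 - 1/n - \eps$ for every $\bb \neq \vec{0}$, which makes explicit that the all-zeros direction is the only one where the distance is actually $\Theta(\eps)$. Your proof is more modular and the $\bb \neq \vec{0}$ step is quite general (it works for any function within $\eps$ of a strictly increasing one), whereas the paper's argument is self-contained, avoids invoking the monotonicity lower bound of~\cite{CS14} as a black box, and treats all $\bb$ uniformly via a single concrete dimension.
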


\begin{proof}
To prove that $\widetilde{g_{j,k}}$ is $\eps$-far from unate, it suffices to show that there exists a dimension $i$, such that there are at least $\eps 2^{d\ell}$ increasing $i$-pairs and at least $\eps 2^{d\ell}$ decreasing $i$-pairs w.r.t.\ $\widetilde{g_{j,k}}$ and that all of these $i$-pairs are disjoint. Let $u,v \in [n]^d$ be two points such that $\phi(u)$ and $\phi(v)$ differ only in the $\ord{j}$ bit. Clearly, $u$ and $v$ form an $i$-pair, where $i =\lceil j/\ell \rceil$. Now, if $\phi(u),\phi(v) \in S_k$ and $u \prec v $, then $\widetilde{g_{j,k}}(v) = \widetilde{g_{j,k}}(u) - 1$. So, the $i$-pair $(u,v)$ is decreasing. The total number of such $i$-pairs is  $2^{d\ell - \log(1/2\eps) - 1} = \eps 2^{d\ell}$. If $\phi(u),\phi(v) \in S_{k'}$ where $k' \neq k$, then the $i$-pair $(u,v)$ is increasing. Clearly, there are at least $\eps 2^{d\ell}$ such $i$-pairs. All the $i$-pairs we mentioned are disjoint.
Hence, $\widetilde{g_{j,k}}$ is $\eps$-far from unate.
\end{proof}
This completes the proof of Theorem~\ref{thm:adap-lb}.
\end{proof}

\subsection{The Lower Bound for Nonadaptive Testers over Hypergrids}\label{sec:na-lb-hg}
The lower bound for nonadaptive testers over hypergrids follows from a combination of the lower bound for nonadaptive testers over hypercube and the lower bound for adaptive testers over hypergrids.

\begin{theorem}
Any nonadaptive unateness tester (even with two-sided error) for real-values functions $f:[n]^d \mapsto \R$ must make $\Omega(d(\log n + \log d))$ queries.
\end{theorem}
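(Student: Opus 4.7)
The plan is to derive this bound as the sum of two separate $\Omega(d\log n)$ and $\Omega(d\log d)$ lower bounds, each obtained by invoking a lower bound already established in the paper. The first half is immediate: a nonadaptive tester is a special case of an adaptive tester, so \Thm{adap-lb} applied with any fixed constant $\eps\in(0,1/4)$ forces $\Omega(d\log n)$ queries.

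For the $\Omega(d\log d)$ half, I would lift the hypercube hard distribution of \Thm{non-adap-lb-1} to the hypergrid via a block ``blow-up''. Given any $f:\{0,1\}^d\to\R$, define $F_f:[n]^d\to\R$ by $F_f(y) = f(\sigma(y))$, where $\sigma(y)_i = 0$ if $y_i\leq n/2$ and $\sigma(y)_i = 1$ otherwise. The hypergrid splits into $2^d$ equal-size blocks $B_x := \sigma^{-1}(x)$, on each of which $F_f$ is constant with value $f(x)$. Any nonadaptive $q$-query tester for unateness on $[n]^d$ can be simulated on $f$ using $q$ queries by translating each query $y$ into $\sigma(y)$, so it suffices to show that the blow-up preserves both unateness and $\eps$-distance to unateness.

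If $f$ is $\bb$-monotone, then so is $F_f$: moving up in coordinate $i$ of $[n]^d$ either keeps $\sigma(y)$ fixed (so $F_f$ is constant on that edge) or flips $\sigma(y)_i$ from $0$ to $1$, which respects the direction $\bb_i$. Conversely, suppose $F_f$ is $\eps$-close to some $\bb$-monotone $H$. For each $x\in\{0,1\}^d$, draw a point $y^*(x)\in B_x$ uniformly at random, independently across blocks, and set $h(x) := H(y^*(x))$. Since $y^*(x)_i\in\{1,\ldots,n/2\}$ when $x_i=0$ and $y^*(x)_i\in\{n/2+1,\ldots,n\}$ when $x_i=1$, the function $h$ is also $\bb$-monotone. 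A direct calculation gives
\[
\Exp\bigl[\dist(f,h)\bigr] \;=\; \frac{1}{2^d}\sum_{x\in\{0,1\}^d}\Pr\bigl[H(y^*(x))\neq f(x)\bigr] \;=\; \dist(F_f,H) \;<\; \eps,
\]
so some deterministic choice of $\{y^*(x)\}$ yields a unate $h$ with $\dist(f,h)<\eps$. Contrapositively, if $f$ is $\eps$-far from unate then $F_f$ is $\eps$-far from unate as well.

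Applying \Thm{non-adap-lb-1} (with $\eps=1/8$) through this reduction gives an $\Omega(d\log d)$ lower bound on nonadaptive $[n]^d$-testers; combining with the $\Omega(d\log n)$ bound from \Thm{adap-lb} via $\max(a,b)\geq(a+b)/2$ produces the claimed $\Omega(d(\log n+\log d))$. The only real subtlety is the distance-preservation step, where the randomized-center trick sidesteps the need to explicitly reshape $H$ into a block-constant function; a more direct approach to that reshaping is what I expect to be the main technical obstacle.
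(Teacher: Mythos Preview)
Your overall strategy matches the paper's: obtain $\Omega(d\log n)$ from \Thm{adap-lb}, obtain $\Omega(d\log d)$ by blowing up the hypercube construction to $[n]^d$ via the threshold map $\sigma$ (the paper calls it $\Psi$), and combine via $\max$. The paper simply asserts that the hypercube lower-bound proof ``goes through'' for the blown-up functions; your black-box reduction is a cleaner and more explicit route to the same conclusion.

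However, your distance-preservation step has a real gap. You claim that $h(x):=H(y^*(x))$ is $\bb$-monotone because each $y^*(x)_i$ lands in the correct half of $[n]$. But $\bb$-monotonicity of $H$ only controls $H(y)$ versus $H(y')$ when $y$ and $y'$ are comparable in the $\bb$-twisted coordinatewise order. Since you sample the centers $y^*(x)$ \emph{independently} across blocks, adjacent hypercube points $x,x'$ (differing only in coordinate $i$) typically map to $y^*(x),y^*(x')$ that differ in \emph{every} coordinate, and nothing forces $H(y^*(x))$ and $H(y^*(x'))$ to be ordered correctly. Concretely: take $d=2$, $n=4$, $\bb=(0,0)$, and the monotone $H(y)=y_2$; with $y^*(0,0)=(1,2)$ and $y^*(1,0)=(3,1)$ you get $h(0,0)=2>1=h(1,0)$, so $h$ is not monotone.

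The repair is easy and keeps your argument intact: replace the independent samples by a single uniform offset $z\in\{1,\dots,n/2\}^d$ and set $y^*(x)_i=z_i+x_i\cdot(n/2)$. Then for adjacent $x,x'$ the points $y^*(x),y^*(x')$ differ only in coordinate $i$ and in the correct direction, so $h$ genuinely inherits $\bb$-monotonicity from $H$; each $y^*(x)$ is still uniform in its block $B_x$, so the identity $\Exp[\dist(f,h)]=\dist(F_f,H)$ and the rest of your averaging argument go through unchanged.
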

\begin{proof}
Fix $\eps = 1/8$. The proof consists of two parts. The lower bound for adaptive testers is also a lower bound for nonadaptive tester, and so, the bound of $\Omega(d \log n)$ holds.
Next, we extend the $\Omega(d \log d)$ lower bound for hypercubes.
Assume $n$ to be a power of $2$. Define function $\psi:[n] \mapsto \{0,1\}$ as $\psi(a):= \mathbbm{1}[a > n/2]$ for $a \in [n]$. For $x = (x_1, x_2, \ldots, x_d) \in [n]^d$, define the mapping $\Psi: [n]^d \mapsto \{0,1\}^d$ as $\Psi(x) := (\psi(x_1), \psi(x_2), \ldots, \psi(x_d))$. Any function $f:\{0,1\}^d \mapsto \R$ can be extended to $\tilde{f}:[n]^d \mapsto \R$ using the mapping $\tilde{f}(x) = f(\Psi(x))$ for all $x \in [n]^d$. The proof of Theorem~\ref{thm:non-adap-lb} goes through for hypergrids as well, and so we have an $\Omega(d \log d)$ lower bound. Combining the two lower bounds, we get a bound of $\Omega(d \cdot \max\{\log n, \log d\})$, which is asymptotically equal to $\Omega(d(\log n + \log d))$.
\end{proof}
\end{document}